\newtheorem{theorem}{Theorem}
\newtheorem{lemma}[theorem]{Lemma}
\newtheorem{proposition}[theorem]{Proposition}
\theoremstyle{definition}
\newtheorem{definition}[theorem]{Definition}
\theoremstyle{remark}
\newtheorem{example}[theorem]{Example}
\date{}
\begin{document}

	\title{{ On the fair division of a random object\thanks{Comments by seminar participants at the University of Liverpool, Zurich Technical University,  University of
				Lancaster, the Weizmann Institute, Tel-Aviv University, Hebrew University of Jerusalem,  Technion, 
				Paris School of Economics,  Maison des Sciences \'{E}conomiques,  University of Rochester, and Higher School of Economics in St.~Petersburg are gratefully acknowledged.
				Remarks by Yossi Azar and William Thomson  were especially helpful. 	
				The project benefited from numerical simulations
				by Yekaterina Rzhewskaya, a PhD student at the HSE St.~Petersburg. 
				We are grateful to Michael Borns for proofreading the paper.
				All the three authors acknowledge  support from the Basic Research Program of the National Research University Higher School of Economics. Sandomirskiy was partially supported by  Grant 19-01-00762 of the Russian
				Foundation for Basic Research and the European Research Council (ERC) under the European Union’s Horizon
				2020 research and innovation program (\#740435).}}}
\author{{Anna Bogomolnaia}\footnote{University of
		Glasgow, UK}\,\,$^{,\,\ddagger}$ \and  {Herv%
		\'{e} Moulin}$^\dagger,\,$\footnote{Higher School of Economics, St
		Petersburg, Russia} \and  {Fedor Sandomirskiy}\footnote{Technion, Haifa, Israel}\,\,$^{,\,\ddagger}$}

\maketitle

\begin{abstract}
Ann likes oranges much more than apples; Bob likes apples much more than oranges. Tomorrow they will receive one fruit that will be an orange or an
	apple with equal probability. Giving one half to each agent is fair for each realization of the fruit. However, agreeing that whatever fruit appears
	will go to the agent who likes it more gives a higher expected utility to
	each agent and is fair in the average sense: in expectation, each agent
	prefers his allocation to the equal division of the fruit, i.e., he gets a
	fair share.
	
We turn this familiar observation into an economic design problem:
	upon drawing a random object (the fruit), we learn the realized utility of
	each {agent and can compare it to the mean of his distribution of utilities; no other statistical information about the distribution is available.} We fully characterize the division rules using
	only this sparse information in the most efficient possible way, while
	giving everyone a fair share. Although the probability distribution of
	individual utilities is arbitrary and mostly unknown to the manager, these
	rules perform in the same range as the best rule when the manager  has full access to
	 this distribution.
\end{abstract}

\maketitle

 {\section{Introduction}}\label{sec1}

The trade-off between fairness and efficiency is a popular concern
throughout the social sciences (e.g., \citealt{Okun1975}), but its formal
evaluation is a fairly recent concern \citep{Caragianis2009, Bertsimas2011, Bertsimas2012}.

In the case of rules to divide fairly a random object, this trade-off
depends on the information available to the rule. We characterize here a
family of division rules that are fair in expectation, use minimal
information about the underlying distribution of utilities, and are the most
efficient with these two properties. {Efficiency is measured by the sum of 
utilities calibrated by their
mean values.}
 We also deliver
a surprisingly optimistic message to the risk-averse manager, who evaluates
the rules by their worst-case behavior: our rules have almost the same
worst-case efficiency as optimal rules when the manager has full access to the
distribution.

Before discussing our model and results formally, we illustrate them in a
stylized example.

\begin{example}
	\label{ex_1} Two agents, $a$ located in town $A$ and $b$ located in $B$,
	work as repairmen for the same company. The manager distributing incoming
	orders (jobs) looks for a fair and efficient procedure to allocate tasks
	between $a$ and $b$. The agents' salary is independent of the number of jobs
	they perform, and so each agent wants to have as little work as possible. {Hence, in this story, the manager must allocate a bad.}\footnote{{If instead each new job is desirable for both agents
		(as in piecemeal work), the manager must allocate a random good, which we
		briefly discuss afterward.}}
	
	The
	jobs may come from one of three towns $A$, $B$, or $C$, and each agent
	prefers to work in his own town. When a new order arrives, the manager
	learns the disutility of both agents for this particular job. These
	disutilities are presented in Table~\ref{tab_1}. 	$A$ and $B$ are small towns and are each responsible for $\frac{1}{4}$ of all
	orders, while $C$ is big and half of the orders come from there. 
	\begin{table}[h!]
	\begin{center}
		\caption{Disutilities and probabilities\label{tab_1}}	
		{\begin{tabular}{c|ccc}
				town & $A$  & $B$  & $C$\\ 
				\hline  
				agent $a$ & $1$ & $5$ & $5$ \\ 
				agent $b$ & $5$ & $3$ & $4$ \\ 
				\hline 
				probability & ${1}/{4}$ & ${1}/{4}$ & ${1}/{2}$ 
		\end{tabular}}
	\end{center}			
\end{table}

	If the only objective of the manager is to minimize the social cost (the sum
	of expected disutilities) and fairness is not an issue, then she allocates
	each job to the lowest disutility agent, following the familiar Utilitarian
	rule. See Table~\ref{tab_UT}. 
	\begin{table}[h!]
	\begin{center}
		\caption{Utilitarian rule\label{tab_UT}}		
		{\begin{tabular}{c|ccc|c|c}
				town & $A$  & $B$  & $C$ & expected costs &  social cost\\ 
				\hline  
				agent $a$ & $1$ & $0$ & $0$  & $0.25$& \multirow{2}{0em}{$3$} \\ 
				agent $b$ & $0$ & $1$ & $1$ &  $2.75$& \\ 
				\hline 
			\end{tabular} \
		}
	\end{center}
\end{table}

	So agent $b$ takes all jobs from towns $B$ and $C$ and incurs expected costs
	of $0\cdot \frac{5}{4}+1\cdot \frac{3}{4}+1\cdot\frac{4}{2}=\frac{11}{4}=2.75$. This exceeds his disutility of $2$ in the benchmark
	Equal Split allocation where the manager flips a fair coin to allocate each
	job. In this sense the Utilitarian rule is unfair to agent $b$.
	
	The Fair Share requirement says that each agent must (weakly) prefer his
	allocation to the Equal Split. In our example it caps the expected disutility of
	each agent at 2, to ensure that he is treated fairly in expectation, i.e.,
	ex~ante. Ex~ante fairness is especially compelling if the allocation
	decision is repetitive as in our example. It is rather permissive and leaves
	room for efficiency gains by exploiting differences in individual preferences.
	If the manager knows the prior distribution over the incoming jobs, i.e., the
	whole of Table~\ref{tab_1} including the probabilities, she finds the allocation
	minimizing the social cost under the Fair Share requirement by solving a linear
	program. This Optimal fair prior-dependent rule reallocates $\frac{3}{8}$ of
	the orders from town~$C$ to~$a$ to guarantee his fair share to $b$
	(Table~\ref{tab_OPT}). 
	\begin{table}[h!]
	\begin{center}
		\caption{Optimal fair prior-dependent rule\label{tab_OPT}}		
		{\begin{tabular}{c|ccc|c|c}
				town & $A$  & $B$  & $C$ & expected costs & social cost\\ 
				\hline  
				agent $a$ & $1$ & $0$ & ${3}/{8}$  & $19/16=1.1875$ & \multirow{2}{2em}{$3.1875$}\\ 
				agent $b$ & $0$ & $1$ & ${5}/{8}$ &  $2$& \\ 
				\hline 
			\end{tabular} 
		}
	\end{center}
\end{table}
	\smallskip
	
	How well can the manager do if, upon arrival of a new order, she only learns the vector of disutilities and has no clue  about the underlying
	probabilities of other possible orders? If she has no additional information
	at all, then the Equal Split is the only available fair rule. Indeed, without a
	common scale or a reference point for the disutility of each agent, how can
	she react to the observation that, for a particular job, the disutility of
	agent $a$ is $5$ and that of $b$ is $3$? Giving to agent $b$ more than half
	of this job (in probability) may violate Fair Share for $b$ if $3$ is greater
	than $b$'s average disutility for a job; similarly she cannot give more than
	half to agent $a$: what if $5$ is greater than $a$'s expected disutility?
    In other
	words, there are no non-trivial prior-independent fair rules.
	
	We assume that the manager can scale {disutilities:} upon
	realization of an object she knows each agent's  {disutility normalized by its mean value.} 
	 In our example she may observe the realized {absolute cost} of each
	job to each agent, and know as well their expected costs, a simple {first moment}
	estimated from previous draws. Or the agents themselves may report,
	directly and truthfully, the ratios of absolute to average costs.

	We focus on division rules taking as inputs the vector of {normalized
	disutilities,} and call these rules {\emph{almost prior-independent}} (API).
	To use API rules, the manager does not need any statistical information about the underlying distribution except the average costs.  These rules are practical if the manager's decisions are based on a small sample of observations (say two dozen), enough to get a reasonable estimate of the mean but not enough to estimate the whole distribution. Moreover, in our example, the repairman may have a good understanding of the average time it takes to complete a certain task but may find reporting the distribution or even its second moment problematic.  Surprisingly, the minimal amount of information required by API rules is enough to implement Fair
		Share, while incurring a social cost  close
		to that of the Optimal fair prior-dependent rule; this makes the API family appealing even if the manager has some extra statistical information.

	A simple example of a fair  API rule is the Proportional rule: it divides
	each job between the agents in proportion to their inverse {normalized} costs.
	In our example both expected costs are equal to $4$ and so we can use absolute
	costs instead of {normalized} ones. The Proportional rule picks the allocation
	in Table~\ref{tab_Prop}. 
	\begin{table}[h!]
		\begin{center}	
			\caption{Proportional rule\label{tab_Prop}}	
			{\begin{tabular}{c|ccc|c|c}
					town & $A$  & $B$  & $C$ & expected costs& social cost\\ 
					\hline  
					agent $a$ & $5/6$ & $3/8$ & ${4}/{9}$  & $515/288\approx 1.788$ & \multirow{2}{4em}{$\approx3.576$}\\ 
					agent $b$ & $1/6$ & $5/8$ & ${5}/{9}$ &  $515/288\approx 1.788$ & \\ 
					\hline 
				\end{tabular} 
			}
		\end{center}
	\end{table}
	
	\smallskip Our main results characterize the most efficient fair  API
	division rules. For bads, it is a single rule that we call the Bottom-Heavy
	rule. It has smaller a social cost than any other fair API rule, especially the Proportional one. In our example, it selects the
	allocation in Table~\ref{tab_BH}.%
	\begin{table}[h!]
		\begin{center}
			\caption{Bottom Heavy rule\label{tab_BH}}		
			{\begin{tabular}{c|ccc|c|c}
					town & $A$  & $B$  & $C$ & expected costs & social cost\\ 
					\hline  
					agent $a$ & $1$ & $1/6$ & ${3}/{8}$  & $67/48\approx 1.396$ & \multirow{2}{4em}{$\approx3.271$}\\ 
					agent $b$ & $0$ & $5/6$ & ${5}/{8}$ &  $15/8=1.875$ & \\ 
					\hline 
				\end{tabular} 
			}
		\end{center}
	\end{table}
	The social cost of the Bottom-Heavy rule ($67/48+15/8\approx 3.271$) is only $102.6\%$ of
	the cost for the Optimal fair prior-dependent rule, which is equal to $\frac{%
		19}{16}+2=3.1875$. The allocation of the Proportional rule is only within $112\%$
	of this optimum.

	As we show, the social cost of the Bottom-Heavy rule is \emph{always} close
	to the optimal social cost in the two-agent case. In other words, the
	improvement in efficiency from collecting the detailed statistical data is
	negligible, and it is enough to know expectations to approximate the optimal
	social cost. If the number of agents is large, the Optimal fair prior-dependent rule may
	significantly outperform the Bottom-Heavy rule for some distributions, but
	the worst-case guarantees of both rules remain close to each other. So the
	Bottom-Heavy rule remains a good choice for a risk-averse manager even if
	the population of agents is large.\smallskip
	
	All the rules and results that we just described have their analogs for
	goods. Yet problems with goods and problems with bads are not equivalent.
	That is to say, the results for goods and bads are similar but not  mirror
	images of one another. In particular, the social cost of the Bottom-Heavy
	rule for bads is lower than that of any other API fair rule not only in
	expectation but also ex~post, i.e., given the realization of the vector of
	disutilities. But for goods we find a one-parameter family of Top-Heavy
	rules that are not dominated by any other rule ex~post.\footnote{%
		For other unexpected differences between the fair division of goods and 
		bads, see \cite{BMSY2017} and \cite{BMSY2018}.}
	
	The lack of equivalence between goods and bads stems from the fact that agents dividing bads prefer smaller shares regardless of disutilities while agents dividing goods want bigger shares. To illustrate this point,
	consider a natural attempt to make
	goods from bads, namely, by adding a large enough constant $p$ to all
	disutilities. In our example, assume that the agents are paid $p=16$ for a
	completed job, so that each job is attractive and the manager is now dividing a
	good. The allocation she proposed when the jobs were bads may no longer be
	fair when they are goods. This is the case for the allocation in Table~\ref%
	{tab_OPT} that no longer satisfies Fair Share for agent $a$: his expected
	utility equals $\frac{7}{16}p-\frac{19}{16}=6-\frac{3}{16}$ (he completes $%
	1\cdot \frac{1}{4}+0\cdot \frac{1}{4}+\frac{3}{8}\cdot\frac{1}{2}=\frac{7}{%
		16}$ of all the orders) while  Fair Share requires his utility to
	be at least $\frac{p-4}{2}=6$ (his expected utility from completing half of
	the orders).   As we see, our rules are not translation-invariant because we distribute unequal shares.\footnote{The difference between goods and bads disappears in the class of allocations, where each agent receives exactly the same sum of shares as in \cite{Hylland}. We do not impose such a restriction and allow for allocations with unequal shares, i.e., unequal total probabilities of receiving one of the objects.} Indeed,
	agent $a$ receives the job with probability $\frac{7}{16}$, while $b$'s
	probability is $\frac{9}{16}$, and therefore adding a constant $p$ creates
	imbalance in their allocation by increasing $a$'s utility by $\frac{7}{16}p$
	and $b$'s by $\frac{9}{16}p$. 
\end{example}

\subsection{Overview of results and organization of the paper}

We identify an object with a non-negative vector of dimension $n$, the
number of agents. An instance of our problem is a probability distribution
over such vectors, which we call the prior. The object is either a
unanimously desirable \textit{good} or a unanimously
undesirable \textit{bad}.

The input of a \textit{prior-dependent} division rule is the realized
(dis)utilities and the full description of the prior. By contrast,\ the
input of an  \emph{almost prior-independent} (API) rule is simply the vector of
{normalized} (dis)utilities (absolute over expected). Therefore, in addition to
realized absolute (dis)utilities, {the rule} only needs to know the expected
(dis)utilities from the prior (and not even that if agents report these ratios
themselves).

The fairness of a rule is captured by a simple lower (upper) bound on the
utilities (disutilities) it distributes. The rule satisfies \textit{Fair Share}
if each agent is guaranteed at least $\frac{1}{n}$-th of his expected
utility for the whole good, or at most $\frac{1}{n}$-th of his expected
disutility for the whole bad. We measure efficiency by the sum of {normalized}
utilities for a good {and of normalized} disutilities for a bad. More comments on
our definitions of fairness and efficiency are in Section~\ref%
{subsect_assumptions}.

As explained above, the \textit{Equal Split} is the only fair and fully prior-independent
rule. Our first result is that much more efficient rules are available in
the class of fair  API rules. The simplest example is the \textit{Proportional
rule} allocating a good in proportion to {normalized} utilities (and a bad in
proportion to inverse {normalized} disutilities); see Section~\ref{sec2}.

In Section~\ref{sec3}, we characterize the most efficient fair API rules for dividing
 a good. Optimality is used in the following strong sense: one API rule
dominates another if it generates at least as much social {welfare} for each
realization of the utilities. This relation is very demanding and therefore one would
expect that most pairs of fair API rules are incomparable, and that the set
of undominated rules must be large. This intuition is wrong. For two agents,
a single rule, the \textit{Top-Heavy} rule, dominates every other fair API
rule. For more than two agents there is a one-dimensional family of
undominated Top-Heavy rules (and so any fair API rule is dominated by at least
one rule in the family). We call these rules {Top-Heavy} because they favor
the agents with high {normalized} utility to the extent that the Fair
Share requirement allows it.

The parallel analysis for the division of a bad in Section~\ref{sec4} yields
sharper results. For any number of agents there is a single \textit{Bottom
	Heavy} rule dominating every other API fair rule. This rule
favors the {agents with low normalized} disutility  as much as possible under Fair
Share.

Sections~\ref{sec5} and~\ref{sec6} compare the efficiency of our Top-Heavy
and Bottom-Heavy rules to that of the most efficient prior-dependent rules.
We start with the worst-case analysis in Section~\ref{sec5}: the worst case
is with respect to all possible prior distributions of the vector of
(dis)utilities. We focus on two indices. The {\textit{Competitive Ratio} (CR)%
} of a rule $\varphi $ compares it to the Optimal fair prior-dependent rule.
For goods, CR is the worst-case ratio of the optimal social welfare to the social welfare generated by
 $\varphi $; as usual, CR is at least $1$ for any $\varphi $. For bads it
is the ratio of the social cost generated by $\varphi $ to the optimal social cost. CR
quantifies the efficiency loss implied by almost prior-independence. The 
\textit{Price of Fairness} (PoF) is defined similarly but now the rule is
compared to the rule maximizing the social welfare (or minimizing the social
cost in the case of a bad) without any fairness constraints. PoF quantifies the
efficiency loss due to the fairness requirement.

Remarkably, we show that for any fair API rule, CR and PoF are equal, and so it is
enough to describe the results for PoF. In Example~\ref{ex_1}, we saw that
the social cost of the Bottom-Heavy rule was close to optimal. This is not a
coincidence. For two agents, the PoF of the Top-Heavy rule for a good is $109\%$ and the PoF of the Bottom-Heavy rule for a bad is $112.5\%$; for the
Proportional rule, these numbers are $121\%$ for a good and $200\%$ for a
bad. Thus the Top-Heavy and Bottom-Heavy rules outperform the Proportional one,
especially for a bad. As the number $n$ of agents grows,\ the PoFs of (some
of) the Top-Heavy rules and the Bottom-Heavy rule grow as $\sqrt{n}/{2}$ and ${%
	n}/{4}$, respectively. Thus fairness becomes costly for API rules when then
number of agents is large. However, the PoF for the Optimal fair
prior-dependent rule has the same asymptotic behavior (in the case of a good,
this was shown by~\citealt{Caragianis2009}); i.e., our API
rules provide the same worst-case guarantees as the prior-dependent rules.

Section~\ref{sec6} complements the worst-case analysis by looking at the
efficiency of our rules for particular prior distributions. We focus on a
benchmark case, where individual (dis)utilities are statistically
independent and drawn from familiar distributions, i.e., uniform, exponential, and
so on. While the worst-case results of Section~\ref{sec5} show that fairness
becomes extremely costly for large $n$, in the setting of Section~\ref{sec6}
the Top-Heavy and Bottom-Heavy rules generate, independently of $n$, a
constant positive fraction of the optimal social welfare (of the social cost for a
bad) \textit{unconstrained by Fair Share}. For example, if the distribution
of individual utilities is uniform on $[0,1]$, then the unconstrained social
welfare can only be $132\%$ higher than that of the Top-Heavy rule even if
the number of agents is large; for the exponential distribution, we get $%
188\%$. This confirms the common wisdom that allocation rules behave
much better on average than in the worst case.

Section~\ref{sec7} discusses possible extensions of our model such as
stronger fairness requirements, asymmetric ownership rights, and dividing a
mixture of goods and bads. Section~\ref{sec8} concludes. 

Appendices~\ref{appA},~\ref{appB}, and~\ref{appC} contain many proofs.

\subsection{Modeling choices}\label{subsect_assumptions}

\paragraph{Fairness.}

The Fair Share requirement (aka proportional fairness) was {introduced by
\cite{Steinhaus48}} at the onset of the fair division literature:
each agent should weakly prefer his allocation to the Equal Split of the
resources. It is a noncontroversial and fairly weak requirement.
Two popular strengthenings of Fair Share, \emph{Envy Freeness} and \emph{Max-min}
fairness, can also be discussed in the context of our model.

To define Envy Freeness in the case of a good, we fix the probability
distribution on utility profiles, and require, for any two agents $i$ and $j$%
, that agent $i$'s expected utility from his share be no less than his
(agent $i$'s) expected utility from agent $j$'s share. This is a much
tighter restriction on API rules than Fair Share that severely reduces
their efficiency; see the brief discussion in Section~\ref{sec7}.

Max-min fairness looks for an allocation where the smallest of individual
utilities (calibrated so that interpersonal comparisons make sense) is
maximized; see \citet{Ghodsi2011} and \citet{Bertsimas2011}. Our API rules
are not well suited to maximizing the smallest {normalized} utility (or minimizing
the largest {normalized} disutility).

\paragraph{{Normalization} of utilities.}

Our definition of social welfare and social cost uses {(dis)utilities normalized by mean values.} This  allows
interpersonal comparisons of utilities, such as the following: this object is worth $40\%$
more than average to Ann, but only $20\%$ more to Bob. {Normalization of (dis)utilities}
is a familiar tool of normative economics {and goes back to the concept of
\emph{Egalitarian Equivalence.}\footnote{{When a bundle $\omega$ of objects (goods or bads)  is divided, Egalitarian Equivalence calibrates an agent's absolute utility $u$ for the share $z$ as the fraction $\lambda $ of $\omega$ such that $u(z)=u(\lambda
		\omega)$; if $u$ is homogeneous of degree 1, e.g., instance additive, the
		calibrated utility is then $\frac{u(z)}{u(\omega)}$. Our calibration can be recovered if we identify the random object with a bundle $\omega$ by interpreting the probability of a particular realization as  its amount in the bundle.}} Introduced by~\cite{PaznerS1978}, it has been popular ever since in the division literature (e.g., \citealt{BramsTaylor1996, BMSY2018, Moulin2019}).} In that literature it is used to
pursue Max-min fairness, while we use it to maximize (minimize) a
utilitarian objective: the sum of {normalized} (dis)utilities.

{Normalizing} utilities by their expected value is natural but not the only
possible option. Another familiar approach is to calibrate the range of the
random utilities, from $0$ in the worst outcome to $1$ in the best:
maximizing the sum of utilities thus calibrated, known as
\emph{Relative Utilitarianism}, is the object
of recent\ axiomatic work by \cite{Dhillon1998}, \cite{DhillonMertens1999},
and \cite{BorgersChoo2017}.

We note finally that if individual (dis)utilities are measured in money and
transferable across agents, there is no need for further normalization and
fairness is achieved by cash compensations. Our division rules are useless
in that context.

\paragraph{Strategic manipulations.}

The Proportional and our Top-Heavy and Bottom-Heavy rules are fair only if
they rely on the correct profile of (dis)utilities and their expected
values. If these parameters are not objectively measurable, they must be
reported truthfully by the agents. As revelation mechanisms, our division
rules are not incentive compatible. Clearly, in the one-shot context of our
model, the only fair incentive-compatible division rule is the Equal Split,
which ignores utilities altogether.

\subsection{More relevant literature}

\label{subsect_lit} Starting with Diamond's well-known paradox \citep{Diamond1967}, the microeconomic literature on fairness under uncertainty
focuses on the trade-off between ex~post and ex~ante fairness in the context
of public decision making and discusses ways of adapting the social welfare
approach to capture this tension: notable contributions include \cite{Broome1984}, \cite{BenPorath1997}, and \cite{Gadjos2004}.

\cite{Myerson1981} initiates the discussion of fair division under
uncertainty, in the axiomatic bargaining model: there as in our model agents
are risk neutral and ex~ante fairness allows significant efficiency gains,
the same ones our rules are designed to capture.

The design of our division rules handling only a very limited amount of
information is methodologically close to the design of {prior-independent~\citep{Devanur2011} and prior-free~\citep{Hartline2001}} auctions and the
application of robust optimization to contract theory~\citep{Caroll2015}.
There as here, in contrast to the classic Bayesian approach where the
manager knows the prior distribution, either no information about the prior is
available at all or it is known that the prior belongs to a certain wide
class of distributions. And, there as here, the optimal worst-case behavior is the main
design objective.

Our model is static, yet we can interpret it as an allocation decision
taken multiple times, which justifies the use of ex~ante fairness,
i.e., fairness in expectation (see also the discussion of Example~\ref{ex_1}
in the Introduction). This interpretation links our setting and the active
current research about ``online'' resource
allocation, dealing with sequential allocative decisions when future
resources are uncertain, e.g., \cite{KarpVazirani1990}, \cite{Feldman2009}, and \cite{Devanur2019}. Our notion of the Competitive Ratio is inspired by the
competitive analysis common to this literature~\citep{BorodinYaniv2005}.

Online fair division is a very recent topic, adding fairness
concerns to the standard efficiency objectives of online resource
allocation. However, most of the papers on online fair division either
ignore the efficiency objective entirely and focus exclusively on fairness 
\citep{Benade2018}, or they consider both objectives but impose strong
simplifying assumptions on the structure of preferences~\citep{Aleksandrov2015}. The two papers closest to ours are the follow-up works by \cite{ZengPsomas} and~\cite{Gkatz2020}. The first one looks
at the fairness/efficiency trade-off when the allocation rule competes
against an increasingly adversarial nature. One of their settings (i.i.d.
goods with known distribution) reduces to our static problem, and the rule
they come up with can be seen as the Optimal fair prior-dependent rule, where
fairness is interpreted as Envy Freeness. The second paper considers a non-probabilistic setting, where the utilities are determined by an adversary, however, the sum of the utilities over periods is known to the manager. This requirement is parallel to our assumption of known expected values. Despite this similarity, characterization of optimal rules in the setting of \cite{Gkatz2020} turns out to be problematic even in the two-agent case. 


%
%


\section{The model}\label{sec2}
Definitions~\ref{def1} to \ref{def6} apply to the division of a good or a bad.
\begin{definition}\label{def1}
{A fair division problem} $\mathcal{P}=(N,\mu ,X)$ {%
		is described by the fixed set } {$N=\{1,2,\dots, n\}$}  {of agents, the
		probability distribution }$\mu$  {{over the positive orthant} $\mathbb{R}_{+}^{N}$}{, and the random variable }$X$ { in }$%
	\mathbb{R}
	_{+}^{N}$ { with distribution }$\mu $. {We always assume that
		the expectations }$\mathbb{E}_{\mu }(X_{i})$ {are bounded and positive
		for each }$i$.
\end{definition}	

We interpret $X_{i},\ i\in N$, as agent $i$'s random utility or disutility and impose no additional restriction
on the probability space or the distribution of $X$: (dis)utilities $X_{i}$
may be arbitrarily correlated across agents.

 { We write {$X_{i}^*=\frac{1}{%
		\mathbb{E}_{\mu }(X_{i})}X_{i}$} for agent $i$'s {normalized} utility or
	disutility.}  {We assume that upon
	the arrival of each object, the corresponding profile {$X^*$ of normalized
	(dis)utilities  is revealed:} it is the input of our division rules. In other
	words, the rule learns how lucky or unlucky each agent is  to receive
	the object that just appeared. }
\begin{definition}\label{def2}
	{A  (prior-dependent) division rule $\varphi$ is a collection of
		measurable mappings }   {{$\varphi ^{\mu}$ {from} $\mathbb{R}_{+}^{N}$ {to the simplex} 
			$\Delta (N)$ { of lotteries over~$N$, one for each prior distribution}}  $\mathcal{\mu}$. 
		{Given a division problem $\mathcal{P}=(N,\mu,X)$} and a
		realization}  $x^*\in 
	\mathbb{R}
	_{+}^{N}$  of the normalized (dis)utility profile $X^*$,
	agent $i$  gets the share $\varphi _{i}^\mu(x^*)$ of the
	object.
\end{definition} 	

Here ``dividing the object'' can be
interpreted either literally if the object is divisible, or as assigning
probabilistic shares, or time shares.

\begin{definition}\label{def3}
A division rule $\varphi$ is almost prior-independent (API)  if it does not depend on $\mu$, i.e., $\varphi^\mu=\varphi^{\mu'}$ for all distributions $\mu$ and\footnote{{In practice, it is reasonable to assume that the realized vector of (dis)utilities $X$ is observed directly, while the normalized (dis)utilities $X_i^*=\frac{X_i}{\mathbb{E}_\mu(X_i)}$ are derived from it. Hence, in order to apply an API rule one still needs to know the expected values.}} $\mu'$.  For API rules we will drop the superscript~$\mu$.	
\end{definition}
{We focus on rules that treat agents similarly, i.e.,  satisfy symmetry.}
\begin{definition}\label{def4}
{A rule is symmetric if a permutation of the agents permutes their shares accordingly.  {Formally, for any distribution $\mu$,  vector $x\in \mathbb{R}_+^N$, agent $i\in N$, and any permutation $\pi$ of $N$, we have $\varphi_i^\mu(x)=\varphi_{\pi(i)}^{\pi(\mu)}(\pi(x))$, where $\pi(x)$ and $\pi(\mu)$ are obtained from $x$ and $\mu$ by permuting coordinates: $\pi(x)_{\pi(j)}=x_j$, $j\in N$, and $\pi(\mu)(\pi(A))=\mu(A)$, for any measurable set $A$ and $\pi(A)=\{\pi(y)\,:\,y\in A\}$.}}
\end{definition}

The fairness constraint  sets a lower (resp. upper) bound on every agent's expected utility (resp. disutility).
\begin{definition}\label{def5}
{The division rule }$\varphi$
{ guarantees Fair Share (FS) if every agent's expected (dis)utility
	is at least (at most) }$\frac{1}{n}${-th of his expected (dis)utility
	for the entire object.} 
{If the object is a good, this means  {for each division problem $\mathcal{P}$ and each agent $
	i\in N$}},
{ {\begin{equation}
\mathbb{E}_{\mu }\left(\varphi _{i}^\mu(X^*)\cdot X^*_i\right)\geq \frac{1}{n}. \label{12}
\end{equation}}}
{The inequality is reversed if we divide a bad.}
\end{definition}

{We define expected social welfare (the expected social cost in the case of a bad) as the expected sum of normalized (dis)utilities
\begin{equation}\label{eq_SW}	
S(\varphi,\mathcal{P})=\mathbb{E}_{\mu}\left(\sum_{i\in N}\varphi _{i}^\mu(X^*)\cdot X_i^*\right).
\end{equation}
Our design goal, conditional upon satisfying Fair Share, is to maximize $S(\varphi,\mathcal{P})$
in the case of a good, or to minimize this quantity} in the case of a bad. 

{Both of our design objectives~\eqref{12} and~\eqref{eq_SW} are invariant with respect to rescaling of (dis)utilities. Since our division rules also depend on normalized (dis)utilities, in the rest of the paper we can restrict attention to those problems where $X$ and $X^*$ coincide.}
\begin{definition}\label{def6}
{We call the problem} $\mathcal{P}$ {%
	normalized if }$\mathbb{E}_{\mu }(X_{i})=1$ {for all} $i\in N$.
\end{definition}

All proofs are given for normalized problems and
extend automatically to general problems by replacing everywhere $X_{i}$ by $%
X_{i}^*=\frac{1}{\mathbb{E}_{\mu }(X_{i})}X_{i}$.

\textit{Notation. }  Throughout the paper we use the following notation. For a vector $x\in \mathbb{R}^N$ and a subset $M\subseteq N$, the sum of coordinates over $M$ is  denoted by $x_M=\sum_{j\in M}x_{j}$.  By $e^M\in \mathbb{R}^N$, we denote the indicator vector of a subset $M\subset N$, i.e., $e_{i}^{M}=1$ if $i\in M$ and $e_{i}^{M}=0$ if $i\notin M$. Finally $x\gg y$ means $x_{i}>y_{i}$ for all $i$.

\subsection{Three benchmark  {API rules}}\label{subsect_benchmark_rules}
The \textit{Equal Split rule}, $\varphi^{es}(x)=\frac{1}{n}e^{N}$ for all $%
x $, is the simplest {API} rule of all, and it implements
Fair Share.  {Not surprisingly, its {efficiency} is poor.} 

{On the other extreme, we have the \textit{Utilitarian rule} $\varphi^{ut}(x)=\frac{e^M}{|M|}$, where $M=\{i\in N:\, x_i=\max_{j\in N} x_j\}$ for a good and $M=\{i\in N:\, x_i=\min_{j\in N} x_j\}$ for a bad. This rule achieves the optimal welfare level  by allocating the object among agents with highest (lowest) normalized (dis)utilities. However, it drastically violates FS: in a two-agent normalized problem with a good, where  $X=(1,1+\varepsilon)$ with probability $(1-\varepsilon)$ and $(1,\varepsilon)$ with probability $\varepsilon$, the expected utility of the first agent $\mathbb{E} \left(\varphi^{ut}_1(X)X_1\right)=\varepsilon $ is below his fair share of $\frac{1}{2}$  for any $\varepsilon\in\left(0,\frac{1}{2}\right)$.}

{A natural compromise between these two rules is the \textit{Proportional rule}, which is defined as follows:
	$$
	\mbox{for a good:} \ \ \varphi _{i}^{pro}(x)=\frac{x_{i}}{x_{N}}, \ \  \mbox {if} \ x\neq 0,
	\quad\mbox{and } \ \ \varphi^{pro}(0)=\frac{e^N}{n}
	$$
	$$
	\mbox{for a bad:}  \ \  \varphi _{i}^{pro}(x)=
	\frac{\frac{1}{x_{i}}}{\sum_{j\in N}\frac{1}{x_{j}}},  \ \  \mbox{if } \ x\gg 0,
	\quad\mbox{and } \ \ \varphi ^{pro}(x)=\frac{e^M}{|M|}, \ \mbox{where } M=\{i\in N: x_i=0\}\ne\varnothing.
	$$}
{The next  proposition shows that the Proportional rule also guarantees FS and 
	generates a higher {social welfare} than $\varphi^{es}$ in the following strong ex~post sense.}
\begin{definition}\label{def7}
{Fix two {API} rules }$\varphi$ and $\psi$.  We say that 
	$\varphi${ dominates }$\psi${ if it always
		generates ex~post (for every realization of the {normalized} utilities) at least
		as much {social welfare}, and sometimes strictly more. Formally, in the case of a good,}%
	\begin{equation}
	\sum_{i\in N}\psi _{i}(x)\cdot x_{i}\leq \sum_{i\in N}\varphi
	_{i}(x)\cdot x_{i}\mbox{ for all }x\in 
	\mathbb{R}
	_{+}^{N}\mbox{ , with a strict inequality for some }x.  \label{17}
	\end{equation}
	{In the case of a bad, the inequalities are reversed.}
\end{definition}
\begin{proposition}\label{prop1}
	{{The Proportional rule guarantees Fair Share and dominates the Equal Split both for a good and for a bad.}}
\end{proposition}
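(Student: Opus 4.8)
The plan is to reduce everything to a normalized problem, so $\mathbb{E}_\mu(X_i)=1$ for all $i$ and $X=X^*$, and then treat a good and a bad separately. In each case the dominance claim is a pointwise (ex post) inequality, and Fair Share follows by integrating a second pointwise inequality against $\mu$. Both pointwise inequalities are completely elementary. Throughout write $\varphi=\varphi^{pro}$, $\psi=\varphi^{es}$, and $x_N=\sum_{j\in N}x_j$.

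\emph{Goods.} I would first record the ex post welfares: for $x\neq 0$ one has $\sum_{i\in N}\varphi_i(x)x_i=\tfrac1{x_N}\sum_{i\in N}x_i^{2}$ and $\sum_{i\in N}\psi_i(x)x_i=\tfrac{x_N}{n}$, while both vanish at $x=0$. Dominance in the sense of Definition~\ref{def7} then reads $n\sum_i x_i^{2}\ge(\sum_i x_i)^{2}=x_N^{2}$, which is Cauchy--Schwarz, with equality only when all coordinates of $x$ agree; hence it is strict for some $x$ and $\varphi$ dominates $\psi$. For Fair Share I would use the bound obtained by expanding $(x_i-\tfrac1n x_N)^{2}\ge 0$, namely $\varphi_i(x)x_i\ge \tfrac2n x_i-\tfrac1{n^{2}}x_N$ for every realization (a trivial equality at $x=0$), and integrate against $\mu$: $\mathbb{E}_\mu(\varphi_i(X)X_i)\ge \tfrac2n\,\mathbb{E}_\mu(X_i)-\tfrac1{n^{2}}\,\mathbb{E}_\mu(X_N)=\tfrac2n-\tfrac1n=\tfrac1n$, which is~\eqref{12}.

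\emph{Bads.} The key observation is that for $x\gg 0$ the product $\varphi_i(x)x_i=(\sum_{j}1/x_j)^{-1}$ does not depend on $i$, so $\sum_{i\in N}\varphi_i(x)x_i=n\,(\sum_j 1/x_j)^{-1}$ is the harmonic mean of $x$ while $\sum_{i\in N}\psi_i(x)x_i=x_N/n$ is its arithmetic mean; the AM--HM inequality gives dominance here, with equality only for constant $x$. When some coordinate of $x$ vanishes, $\varphi$ puts all weight on $M=\{j:x_j=0\}$, so $\sum_i\varphi_i(x)x_i=0\le x_N/n$; thus $\varphi$ dominates $\psi$ for a bad. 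For Fair Share, Cauchy--Schwarz gives $(\sum_j 1/x_j)(\sum_j x_j)\ge n^{2}$ when $x\gg 0$, hence $\varphi_i(x)x_i=(\sum_j 1/x_j)^{-1}\le x_N/n^{2}$; and if some coordinate vanishes the same bound holds since then $\varphi_i(x)x_i=0$ (either $x_i=0$, or $x_i>0$ and $\varphi_i(x)=0$). Integrating, $\mathbb{E}_\mu(\varphi_i(X)X_i)\le \mathbb{E}_\mu(X_N)/n^{2}=1/n$, i.e.\ the reversed form of~\eqref{12} required for a bad.

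I do not expect a genuine obstacle here: the proof is two applications of Cauchy--Schwarz together with AM--HM. The only points needing care are the degenerate inputs singled out in the definition of $\varphi^{pro}$ (the origin for a good, vectors with a zero coordinate for a bad), where one has to verify that the stated pointwise inequalities survive, and the short computation that $\varphi^{pro}_i(x)x_i$ is the same for every $i$ in the bad case, which is exactly what makes the Fair Share estimate uniform across agents.
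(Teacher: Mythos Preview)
Your proof is correct and largely parallels the paper's: the dominance claims (arithmetic vs.\ quadratic mean for a good, arithmetic vs.\ harmonic mean for a bad) and the Fair Share bound for a bad are handled identically. The one genuine difference is Fair Share for a good. The paper applies the Cauchy--Schwarz inequality to the pair of random variables $\frac{X_i^2}{X_N}$ and $X_N$, obtaining $\mathbb{E}_\mu\!\left(\frac{X_i^2}{X_N}\right)\mathbb{E}_\mu(X_N)\ge (\mathbb{E}_\mu X_i)^2$, which is an integral inequality. You instead expand $(x_i-\tfrac1n x_N)^2\ge 0$ to get the pointwise tangent-line bound $\frac{x_i^2}{x_N}\ge \tfrac2n x_i-\tfrac1{n^2}x_N$ and then integrate. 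Your route has the small advantage of being purely pointwise (no second moment or integral Cauchy--Schwarz needed); the paper's route is a one-liner once the right two variables are spotted. Both are elementary and either is fine here.
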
	
\begin{proof}[\textit{Proof for a good.}] Suppose that $%
\mathcal{P}$ is normalized.  {To prove FS,} apply the Cauchy--Schwartz inequality to the
two variables $\frac{X_{i}^{2}}{X_{N}}$ and $X_{N}$:
$\mathbb{E}_{\mu }\left(\frac{X_{i}^{2}}{X_{N}}\right)\cdot \mathbb{E}_{\mu }(X_{N})\geq
(\mathbb{E}_{\mu }X_{i})^{2}.$ 
Now the left-most expectation is simply $\mathbb{E}_{\mu }(\varphi
_{i}^{pro}(X)\cdot X_{i})$, agent $i$'s expected utility, while by the
normalization the other two terms are respectively $n$ and $1$.   {Next, the weak domination condition~(\ref{17}) reads as $\frac{x_N}{n}\leq \frac{\sum_{i\in N} x_i^2}{x_N}$, which is equivalent to the inequality between arithmetic and  quadratic means: $\frac{x_N}{n}\leq \sqrt{\frac{\sum_{i\in N} x_i^2}{n}}$. If  $x$ is not proportional to $e^N$, the inequality becomes strict.}

\smallskip

\noindent\textit{Proof for a bad.} Agent $i$'s expected
utility under $\varphi ^{pro}$ is now $\mathbb{E}_{\mu }\left(\varphi _{i}^{pro}(X)\cdot X_{i}\right)=\mathbb{E}_{\mu }\left(\frac{1%
}{\sum_{j\in N}\frac{1}{X_{j}}}\right)=\frac{1}{n}\mathbb{E}_{\mu }(\widetilde{X}),$ 
where  {$\widetilde{x}$ denotes the harmonic mean of the $x_{i}$'s.}  {FS then}
follows from the inequality $\widetilde{x}\leq \frac{x_N}{n}$  {between harmonic and arithmetic means.    The weak domination condition~(\ref{17}) boils down to the same inequality, which, as in the case of a good, becomes strict whenever $x$ is not proportional to $e^N$.} 
\end{proof}

{For a good, the ratio } $\frac{\sum_{i\in N}\varphi _{i}^{pro}(x)\cdot x_{i}}{\sum_{i\in
		N}\varphi _{i}^{es}(x)\cdot x_{i}}$ can be as high as $n$, while  for a bad the ratio
$\frac{\sum_{i\in N}\varphi _{i}^{es}(x)\cdot
	x_{i}}{\sum_{i\in N}\varphi _{i}^{pro}(x)\cdot x_{i}}$ can be
arbitrarily large.  {For example, } take $x=e^{\{1\}}$ for a good and $x=\varepsilon e^{\{1\}}+e^{N\diagdown \{1\}}$, where $\varepsilon $ is arbitrarily small, for a bad.

{One can try to  { achieve greater efficiency than the Proportional rule} does }by assigning probabilities to agents in proportion
(or inverse proportion) to some strictly higher power $q>1$ of their {normalized}
(dis)utilities, but such rules fail FS.\footnote{
	Suppose that we divide a good and $\mu $ picks, for each $i\geq 2$, the vector $%
	x^{i}=e^{\{1\}}+(n-1)e^{\{i\}}$ with probability $\frac{1}{n-1}$. Then the expected
	utility of agent $1$ is $\frac{1}{1+(n-1)^{q}}$, which is below $\frac{1}{n}$ for $%
	n\geq 3$. The proof for a bad is similar.}
{In the next two sections we construct fair {API} rules with higher performance than $\varphi^{pro}$. }

\section{Goods: The family of undominated {API} rules}\label{sec3}

Our first main result (Theorem~\ref{th1} below)  {describes the set of undominated  {API} rules in the sense of Definition~\ref{def7}.}

\subsection{ {Characterizing fairness for a good}} The key step toward Theorem~\ref{th1} characterizes the restriction imposed by Fair
Share on any {API} rule $\varphi $. Given a vector $x$ in $%
\mathbb{R}
_{+}^{N}$, we write its arithmetic average as $\overline{x}=\frac{x_N}{n}$.

\begin{proposition}\label{prop2}
 \textit{ {A  {symmetric}} {API} rule }$\varphi $\textit{\
	dividing a good satisfies Fair  {Share} if and only if there exists a number }$%
\theta, \ 0\leq \theta \leq 1$\textit{, such that}%
\begin{equation}
\varphi _{i}(x)\geq \max \left\{\frac{1}{n}+\frac{\theta }{n-1}\left(1-\frac{\overline{%
		x}}{x_{i}}\right), \ 0\right\}\mbox{ for all }i\in N\mbox{ and }x\in 
\mathbb{R}
_{+}^{N}  \label{10}
\end{equation}%
\textit{(where we use }$\frac{1}{0}=+\infty $\textit{)}.
\end{proposition}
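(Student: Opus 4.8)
The plan is to prove both implications; the ``only if'' direction carries essentially all the difficulty.

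\emph{Sufficiency.} Suppose \eqref{10} holds for some $\theta\in[0,1]$ and let $\mathcal P$ be a normalized problem. I first record the pointwise inequality
\[
\varphi_i(x)\,x_i\ \ge\ \frac{x_i}{n}+\frac{\theta}{n-1}\bigl(x_i-\overline x\bigr)\qquad\text{for all }x\in\mathbb R_+^N,\ i\in N .
\]
If $x_i>0$ this is obtained by multiplying \eqref{10} by $x_i$ when the affine term is the active one, and from $\varphi_i(x)x_i\ge0\ge\bigl(\tfrac1n+\tfrac{\theta}{n-1}(1-\tfrac{\overline x}{x_i})\bigr)x_i$ when the active term is $0$ (so the bracket is negative); if $x_i=0$ the left side is $0$ and the right side is $-\tfrac{\theta}{n-1}\overline x\le0$ because $\theta\ge0$. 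Taking $\mathbb E_\mu$ and using $\mathbb E_\mu(X_i)=1$ for every $i$ (hence $\mathbb E_\mu(\overline X)=1$), the correction term cancels and $\mathbb E_\mu(\varphi_i(X)X_i)\ge\frac1n$, which is FS. For a general problem replace $X$ by $X^*$, as stated in the paper.

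\emph{Necessity.} Let $\varphi$ be a symmetric API rule satisfying FS. By symmetry, the bound \eqref{10} for agent $i$ at $x$ is the bound for agent $1$ at the vector obtained from $x$ by transposing coordinates $1$ and $i$ (same average, first coordinate $x_i$); so it suffices to exhibit one $\theta\in[0,1]$ and verify \eqref{10} for $i=1$. Put $g(x):=\varphi_1(x)\,x_1\ge0$; symmetry of $\varphi$ makes $g$ invariant under permutations of the coordinates $2,\dots,n$. Restricting to normalized problems, FS for agent~$1$ says precisely that $\mathbb E_\mu(g(X))\ge\frac1n$ for every probability distribution $\mu$ on $\mathbb R_+^N$ with all first moments equal to $1$. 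The crux is the dual reformulation: FS for agent~$1$ forces $g$ to admit an affine minorant $\ell(x)=\lambda_0+\sum_{j}\lambda_j x_j\le g(x)$ on $\mathbb R_+^N$ with $\ell(e^N)=\lambda_0+\sum_j\lambda_j\ge\frac1n$. This is the LP dual of the moment program $\inf_\mu\mathbb E_\mu(g)$ with fixed first moments, evaluated at the moment vector $(1,e^N)$, whose value is $\ge\frac1n$ by FS; one then argues the supremum is attained (weak duality together with the test vectors below keeps the dual variables bounded, and $g\ge0$ controls the rest). Averaging $\ell$ over permutations of $\{2,\dots,n\}$, which is legitimate since $g$ is symmetric in those coordinates and $e^N$ is fixed, we may assume $\lambda_2=\dots=\lambda_n=:\lambda$, so $\ell(x)=\lambda_0+\lambda_1x_1+\lambda(x_N-x_1)$.

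Now I pin down the coefficients with special test vectors. From $x=0$: $\lambda_0\le g(0)=0$. From $x=t e^N$ with $g(te^N)=\varphi_1(te^N)\,t=\frac tn$ (the uniform vector forces equal shares): $\lambda_0+(\lambda_1+(n-1)\lambda)t\le\frac tn$ for all $t\ge0$, hence $\lambda_1+(n-1)\lambda\le\frac1n$. Together with $\ell(e^N)\ge\frac1n$ this gives $\lambda_0=0$ and $\lambda_1+(n-1)\lambda=\frac1n$. Setting $\theta:=n\lambda_1-1$ one checks $\ell(x)=\frac{x_1}{n}+\frac{\theta}{n-1}(x_1-\overline x)$, so $\ell\le g$ is exactly \eqref{10} for $i=1$ (for $x_1=0$ one has $\ell(x)\le0$, consistent with the fact that \eqref{10} is vacuous there). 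Finally $\theta\in[0,1]$: evaluating $\ell\le g$ at $x=(0,1,\dots,1)$, where $g(x)=0$, yields $(n-1)\lambda\le0$, so $\lambda\le0$, $\lambda_1\ge\frac1n$, $\theta\ge0$; and since \eqref{10} with this $\theta$ now holds for every agent, applying it at $x=e^N-e^{\{n\}}=(1,\dots,1,0)$ (where $\overline x=\frac{n-1}{n}$ and $1-\frac{\overline x}{x_i}=\frac1n$ for each $i\le n-1$) gives $\varphi_i(x)\ge\frac1n+\frac{\theta}{n(n-1)}$ for $i=1,\dots,n-1$; summing and using $\sum_{i\le n-1}\varphi_i(x)\le\sum_{i\in N}\varphi_i(x)=1$ we get $\frac{n-1}{n}+\frac{\theta}{n}\le1$, i.e.\ $\theta\le1$.

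The step I expect to be the real obstacle is producing the affine minorant $\ell$: since $g$ is only measurable (not lower semicontinuous), ruling out a duality gap in this infinite-dimensional moment problem needs care. A self-contained substitute is to construct, for each $x^0$ and $\varepsilon>0$, an explicit finitely supported normalized prior with an atom at $x^0$ for which FS of agent~$1$ drives $\varphi_1(x^0)$ to within $\varepsilon$ of the claimed value, and then to optimize over such priors; identifying the worst prior is exactly where the functional form $\frac1n+\frac{\theta}{n-1}\bigl(1-\frac{\overline x}{x_i}\bigr)$ appears.
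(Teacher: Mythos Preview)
Your overall strategy matches the paper's: the ``if'' direction is identical, and for ``only if'' both reduce to producing an affine minorant of $f(x):=\varphi_1(x)x_1$ that equals $\tfrac1n$ at $e^N$, then using symmetry and test vectors to pin down its form. Your computations after the minorant is in hand are correct, and your argument for $\theta\le1$ via the single vector $(1,\dots,1,0)$ is in fact cleaner than the paper's (which verifies the full system of inequalities over all subsets $M\subseteq N$ via the harmonic--arithmetic mean inequality).

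The gap you flag is real, and the paper closes it differently from either of your two suggestions. Rather than invoking LP duality for the moment problem (where, as you note, ruling out a duality gap is delicate since $f$ is merely measurable), the paper considers the \emph{convexification} $g$ of $f$, i.e.\ $g(x)=\inf\bigl\{\sum_k\mu_kf(y^k):\sum_k\mu_ky^k=x\bigr\}$. Fair Share for agent~$1$ says precisely that this infimum at $x=e^N$ is at least $\tfrac1n=f(e^N)$; combined with $g\le f$ this yields $g(e^N)=f(e^N)$. Now $g$ is convex and finite on $\mathbb R_+^N$ (indeed $0\le g\le f\le x_1$), and $e^N$ lies in the interior, so $g$ admits a subgradient $\alpha$ there. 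This gives $f(x)\ge g(x)\ge\tfrac1n+\alpha\cdot(x-e^N)$ for all $x$, which is exactly the affine minorant you need---with no semicontinuity hypothesis on $f$ and no duality gap to address. This convexification device is the key step you were missing.
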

\begin{proof}[Proof of the ``\textbf{if}'' statement.]
Assume that the division rule $\varphi $
for a good satisfies (\ref{10}); then%
\[
\varphi _{i}(x)\cdot x_{i}\geq \frac{1}{n}x_{i}+\frac{\theta }{n-1}(x_{i}-%
\overline{x}) \ \ \mbox{ for all } \ x.
\]%
For an arbitrary normalized problem $\mathcal{P}$ (Definition~\ref{def6}), we have $%
\mathbb{E}_{\mu }(X_{i}-\overline{X})=0$ and  the inequality~(\ref{12}%
) follows. 

\smallskip
\noindent\textit{Proof of the ``\textbf{only if}'' statement.}
Assume that the rule $%
\varphi $ satisfies Fair Share and define the real-valued function $f(x)=\varphi
_{1}(x)\cdot x_{1}$. By the symmetry of $\varphi$, we get  $f(e^{N})=\frac{1}{n}$.
Consider a convex combination in $%
\mathbb{R}
_{+}^{N}$, with an arbitrary number of terms, such that $\sum_{k=1}^{K}\mu
_{k}y^{k}=e^{N}$. The problem $\mathcal{P}$ in which $X=y^{k}$ with
probability $\mu _{k}$ is normalized and FS implies%
\[
\sum_{k=1}^{K}\mu _{k}f(y^{k})\geq \frac{1}{n}=f(e^{N}).
\]%
 {Recall that the convexification $g$ of $f$ is the pointwise-maximal convex function such that $g(x)\leq f(x)$ for all $x$. Alternatively, $g(x)$ can be represented as 
	\begin{equation}\label{eq_convexification}
g(x)=\inf \Big\{\sum_{k=1}^{K}\mu
_{k}f(y^{k})\Big\},
	\end{equation}
	where the infimum is over all\footnote{ {By the Caratheodory theorem \citep[Theorem 17.1]{Rockafellar1970},  it is enough to take the infimum in~(\ref{eq_convexification}) over convex combinations with at most $n+1$ points. This allows us to strengthen the ``only if'' part of Proposition~\ref{prop2}:  the bound~(\ref{10}) holds if the rule $\varphi$ satisfies FS in all problems with at most $n+1$ goods.}} convex combinations such that $\sum_{k=1}^{K}\mu
	_{k}y^{k}=x$; e.g.~\citet[Proposition 1.1]{Laraki2004}.}

The above inequality implies $g(e^{N})\geq f(e^{N})$ and the opposite
inequality is true by the definition of $g$, and so $g(e^{N})=f(e^{N})$. Because $g$
is convex and finite at $e^{N}$ there exists a vector $\alpha \in 
\mathbb{R}
^{N}$ supporting its graph at $(e^{N},g(e^{N}))$, i.e., such that for all $%
x\in 
\mathbb{R}
_{+}^{N}$: $
g(x)\geq g(e^{N})+\alpha \cdot (x-e^{N})$.  {Therefore,
	$$f(x)=\varphi
	_{1}(x)\cdot x_{1}\geq \frac{1}{n}+\alpha \cdot (x-e^{N}). 
	$$}
\begin{figure}[h!]
	\centering
	\vskip 0.0 cm
	{\includegraphics[width=15cm, clip=true, trim=0cm 1.0cm 0cm 4.5cm]{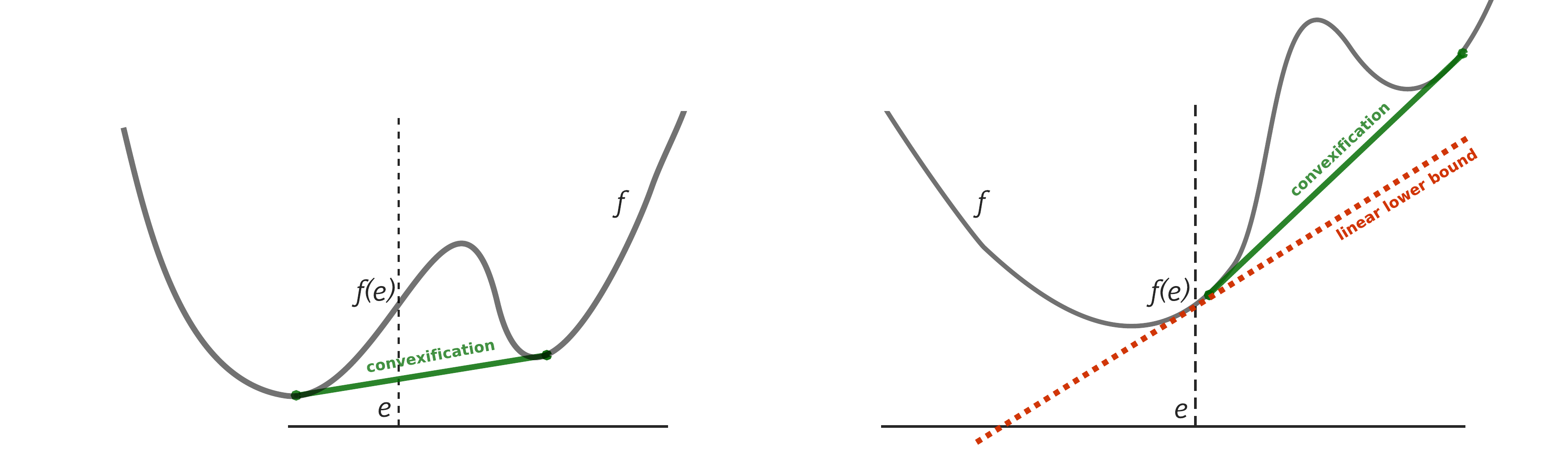}}
	\caption{The geometric intuition behind the proof of Proposition~\ref{prop2}. Right figure: the convexification of a function $f$ coincides with $f$ at $x=e$ if the graph of $f$ is supported by a linear function. The left figure illustrates the necessity of this condition.}
	\label{fig1}
\end{figure}

Apply the inequality above to $x=\lambda e^{N}$ for any $\lambda >0$. By the
symmetry of $\varphi $ we get%
\[
\frac{1}{n}\lambda \geq \frac{1}{n}+(\lambda -1)\alpha \cdot e^{N}%
\mbox{ for
	any }\lambda >0. 
\]%
 {Pushing $\lambda$ to $+\infty$ and to $+0$ yields two opposite inequalities: $\frac{1}{n}\geq \alpha \cdot e^{N}$ and $\alpha \cdot e^{N}\geq \frac{1}{n}$, respectively.
	Therefore, $\alpha \cdot e^{N}=\frac{1}{n}$ and $\varphi _{1}(x)\cdot
	x_{1}\geq \alpha \cdot x$ for all $x$.}

Again, the symmetry of $\varphi $ implies that we can take $\alpha _{j}=\alpha
_{i}$ for all $i,j\geq 2$. Indeed, if $x^{\prime }$ results from $x$ by
permuting coordinates $i$ and $j$, we have%
\[
\varphi _{1}(x)\cdot x_{1}=\varphi _{1}(x^{\prime })\cdot x_{1}\geq \frac{1}{
	2}(\alpha \cdot x+\alpha \cdot x^{\prime })=\widetilde{\alpha }\cdot x, 
\]%
where $\widetilde{\alpha }_{i}=\widetilde{\alpha }_{j}$ and $\widetilde{
	\alpha }\cdot e^{N}=\frac{1}{n}$ are preserved.

 {Set $\beta =-\alpha _{i}$ for all $i\geq 2$. {Since $\varphi_1(x)$ belongs to $[0,1]$, we obtain the following chain of inequalities:   $x_{1}\geq \varphi _{1}(x)\cdot x_{1}\geq \alpha
		_{1}x_{1}-\beta x_{N\diagdown \{1\}}$. Keeping $x_1$ bounded and pushing $x_{N\diagdown \{1\}}$ to infinity, we get that this chain of inequalities can  hold only if  $\beta \geq 0$.} Combining this with $\alpha \cdot e^{N}=%
	\frac{1}{n}$ we see that 
	\[
	\varphi _{1}(x)\cdot x_{1}\geq \alpha \cdot x=\frac{1}{n}x_{1}+\beta 
	{\large (}(n-1)x_{1}-x_{N\diagdown \{1\}}{\large )}. 
	\]%
	Changing the parameter $\beta $ to $\delta =n\beta $}  gives%
 {
	\[
	\varphi _{i}(x)\geq \frac{1}{n}+\delta \left(1-\frac{
		\overline{x}}{x_{i}}\right)\mbox{ for all }x\in 
	\mathbb{R}
	_{+}^{N} 
	\]
	and $i=1$. For the remaining agents $i\in N\setminus\{1\}$, this bound with the same~$\delta$ follows by the symmetry of $\varphi$: if $x$ and $x^{\prime}$ differ by permuting coordinates of $1$ and $i$, then $\varphi_1(x^{\prime})=\varphi_i(x)$.}

It remains to find the bounds on  {$\delta $} derived from the fact that $%
\varphi (x)$ is in $\Delta (N)$. For all $x\gg 0$, the above inequality 
and $\varphi (x)\geq 0$ imply 
\begin{equation}
\sum_{i\in N}\max \left\{\frac{1}{n}+\delta \left(1-\frac{\overline{x}}{x_{i}}%
\right),0\right\}\leq 1\mbox{ for all }x\in 
\mathbb{R}
_{+}^{N},  \label{13}
\end{equation}%
which is equivalent to the following property:%
\[
\mbox{for all }M\subseteq N: \ \sum_{i\in M}\left(\frac{1}{n}+\delta \left(1-%
\frac{\overline{x}}{x_{i}}\right)\right)=|M|\left(\frac{1}{n}+\delta \right)-\delta \overline{x}%
\left(\sum_{i\in M}\frac{1}{x_{i}}\right)\leq 1 \ \mbox{ for all }x\in 
\mathbb{R}
_{+}^{N}. 
\]%
 {By the inequality between harmonic and arithmetic means, $\frac{\sum_{i\in M} x_i}{|M|}\geq\frac{|M|}{\sum_{i\in M}\frac{1}{x_i}}$. Since $\overline{x}\geq \frac{1}{n}\sum_{i\in M} x_i$,	
	the infimum of $\overline{x}(\sum_{i\in M}\frac{1}{x_{i}})$ is $\frac{|M|^{2}%
	}{n}$, which is achieved for any $x$ parallel to $e^{M}$;} therefore,%
\[
|M|\left(\frac{1}{n}+\delta \right)\leq 1+\delta \frac{|M|^{2}}{n}\Longleftrightarrow
\left(1-\frac{|M|}{n}\right)(\delta |M|-1)\leq 0 
\]%
and we conclude that $\delta \leq \frac{1}{n-1}$. This gives the desired
inequality (\ref{10}) by setting $\theta =(n-1)\delta $.
\end{proof}

\subsection{ {Undominated rules for a good: The Top-Heavy family}}\label{subsect3.2}
Armed with Proposition~\ref{prop2}, we can now easily identify the undominated 
	{API} division rules (Definition~\ref{def7}) satisfying FS for goods.

For any $x\in 
\mathbb{R}
_{+}^{N}$, we\ write {$(x^{(1)},\ldots ,x^{(n)})$} for the
order statistics\footnote{%
	The vector with the same set of coordinates as $x$, rearranged in increasing
	order.} of $x$, and  {$\tau (x)=\{i\in N|x_{i}=x^{(n)}\}$} for the set of agents with the largest utility.

{We fix $\theta, \ 0<\theta \leq 1$,  and define the \textit{Top-Heavy} rule $\varphi^\theta$
	by placing as much weight on the agents from $\tau(x)$ as inequalities (\ref{10}) permit.}
\begin{definition}\label{def8}
{For $0<\theta\leq 1$, the Top-Heavy (TH) rule $\varphi^\theta$
	is given by}
\begin{equation}
\varphi _{i}^{\theta }(x)=\left\{
\begin{array}{cc}
{\displaystyle \max \left\{\frac{1}{n}+\frac{\theta }{n-1}\left(1-\frac{%
		\overline{x}}{x_{i}}\right),\ 0\right\},} &   i\in N\diagdown \tau (x)\\
{\displaystyle\frac{1}{|\tau (x)|}\left(1-\sum_{j\in
		N\diagdown \tau (x)}\varphi _{j}^{\theta }(x)\right),} & i\in
\tau (x) 
\end{array}
\right..
\label{16}
\end{equation}
 {{Thus  all agents except those with the highest values receive  the  share $\varphi _{i}^{\theta }(x)$, which  is equal to the lower bound~(\ref{10}), while the agents with the highest values equally split the rest.}}
\end{definition}

Inequality (\ref{13}) guarantees that the shares received by the agents with the highest values are non-negative. It also implies that the $i$-sequence of shares $\varphi
_{i}^{\theta }(x)$ is co-monotonic with that of utilities $x_{i}$,  {i.e., $x_k\geq x_i$ implies\footnote{%
		This is clear if we compare the shares of two agents $i,k$ outside $\tau (x)$; if $i\notin \tau (x)$ and $k\in \tau (x)$, inequality $\varphi _{i}^{\theta
		}(x)\leq \varphi _{k}^{\theta }(x)$ is 
		\[
		|\tau (x)|\varphi _{i}^{\theta }(x)+\sum_{j\in N\diagdown \tau (x)}\varphi
		_{j}^{\theta }(x)\leq 1, 
		\]%
		which follows from $\varphi _{i}^{\theta }(x)=\max \{\frac{1}{n}+\delta 
		{\large (}1-\frac{\overline{x}}{x_{i}}{\large ),0\}}$ and (\ref{13}).} $\varphi
	_{k}^{\theta }(x)\geq\varphi
	_{i}^{\theta }(x)$.}

The rule $\varphi ^{\theta }$ converges to Equal Split when $\theta $ goes
to zero, but Equal Split is clearly dominated by \textit{any} rule $\varphi
^{\theta }$ for $\theta >0$. This is why we excluded $0$ from the range of $%
\theta $.

Note that the  {discontinuity of $|\tau(x)|$ implies that} for $n\geq 3$, all rules $\varphi ^{\theta }$ are discontinuous at any $x$ where
at least  {two agents, but not all, have the highest utility {($x^{(1)}<x^{(n-1)}=x^{(n)}$).} For two agents, the TH rule is continuous.}

\begin{example}[the TH rule  $\varphi^1$ for two agents]
For two-agent problems, the rule $\varphi ^{1}$ has a simple expression. By
symmetry it is enough to define it when $x_{1}\leq x_{2}$:%
\begin{equation}
\varphi ^{1}(x)=\left\{\begin{array}{cc}
(0,1), & \ \frac{x_{1}}{x_2}\leq \frac{1}{2}\\
\left(1-\frac{
	x_{2}}{2x_{1}}, \ \frac{x_{2}}{2x_{1}}\right), &  \frac{1}{2}\leq \frac{x_{1}}{x_2}\leq 1 
\end{array}\right..   \label{23}
\end{equation}
The dependence of $\varphi_1^1$ on $\frac{x_1}{x_2}$ is depicted in Figure~\ref{fig2}.
\begin{figure}
	\centering
	{\includegraphics[width=6cm, clip=true]{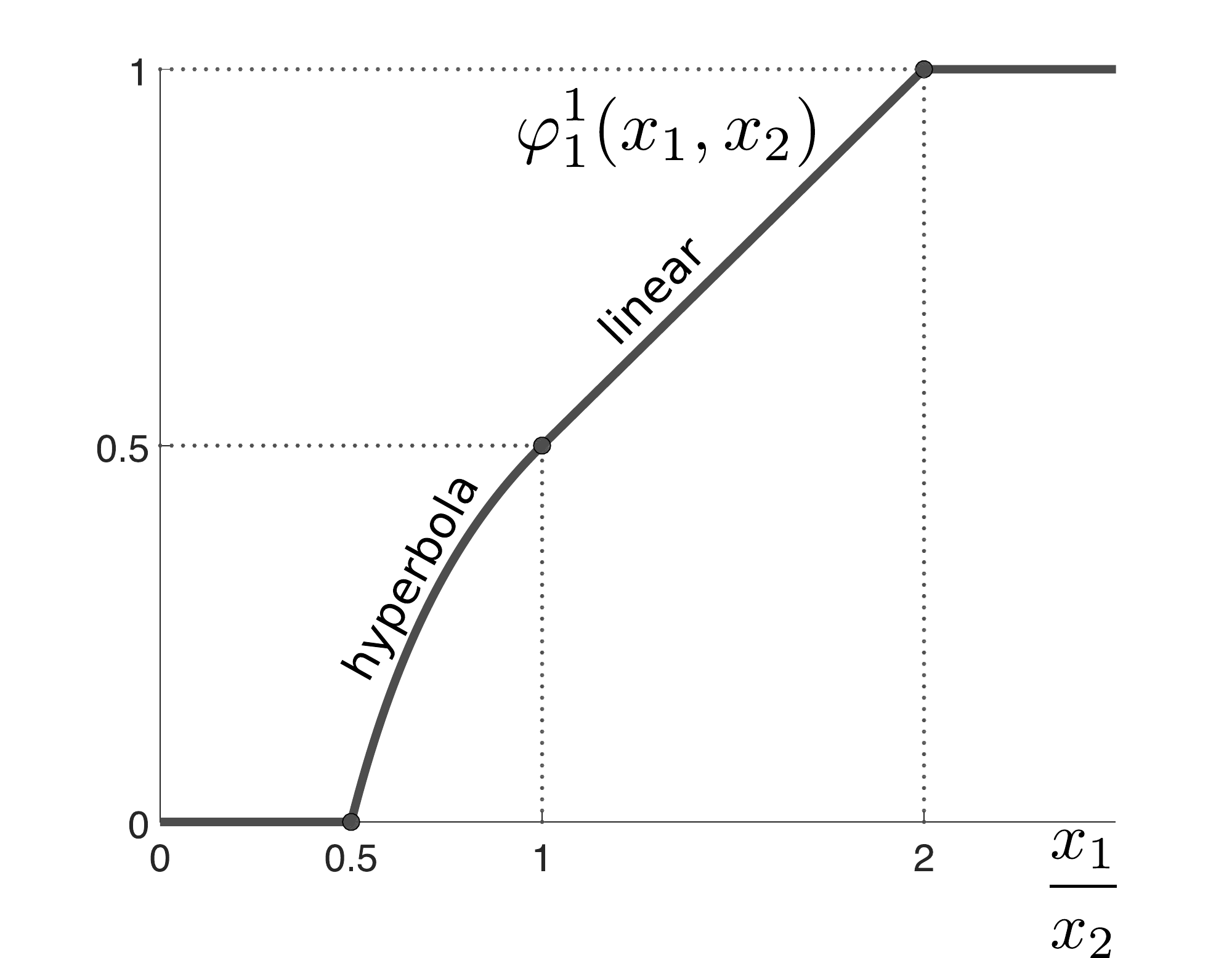}}
	\caption{The amount of a good received by the first agent under the TH rule $\varphi^1$ for two agents as a function of the ratio $t=\frac{x_1}{x_2}$. If the ratio is below $\frac{1}{2}$ or above $2$, the TH rule coincides with the Utilitarian one, which gives the whole good to an agent with the highest value. If the {normalized utilities} are closer, both agents receive a non-zero amount of the good:				
		$\varphi_1=1-\frac{1}{2t} $ on $\left[\frac{1}{2},1\right]$ and $\varphi_1=\frac{1}{2}t $ on $\left[1,2\right]$.}
	\label{fig2}
\end{figure}
\end{example}
\pagebreak
\begin{theorem}[for goods]\label{th1} \ 
\begin{enumerate}
	\item \textit{ For any }$n\geq 2$,\textit{ every  {symmetric} {API}  rule satisfying Fair Share is dominated by, or equal to, one Top-Heavy rule }$%
		\varphi ^{\theta }$\textit{,}  $0<\theta \leq 1$.	
\item \textit{If }$n=2$,\textit{ the Top-Heavy rule }$\varphi ^{1}$%
		\textit{ dominates every other  Top-Heavy rule $\varphi^\theta$, $0<\theta < 1$.}
\item\textit{If  }$n\geq 3$, \textit{the Top-Heavy rules
			$%
			\varphi ^{\theta }$\textit{,}  $0<\theta \leq 1$\textit{,} are undominated.}
\item \textit{{For $n\geq 3$,} the Proportional rule is dominated by the Top
		Heavy rule }$\varphi ^{\frac{n-1}{n}}$\textit{, but not by any other rule }$%
	\varphi ^{\theta }$\textit{.}
\end{enumerate}	
\end{theorem}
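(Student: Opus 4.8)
\medskip
\noindent\textit{Proof plan.}
All four parts rest on one elementary observation about the Top-Heavy rules, which I would isolate first as a lemma. Fix $\theta\in(0,1]$ and a realization $x\in\mathbb{R}_+^N$; for every $i\notin\tau(x)$ the share $\varphi_i^\theta(x)$ is exactly the right-hand side of~(\ref{10}). Then among all lotteries $p\in\Delta(N)$ satisfying $p_i\ge\varphi_i^\theta(x)$ for all $i\notin\tau(x)$, the ex~post welfare $\sum_{i\in N}p_ix_i$ is maximized at $p=\varphi^\theta(x)$, and the maximum is attained only when $p_i=\varphi_i^\theta(x)$ for every $i\notin\tau(x)$. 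The proof is immediate: writing $d_i=p_i-\varphi_i^\theta(x)$ we have $\sum_i d_i=0$ and $d_i\ge0$ for $i\notin\tau(x)$, hence $\sum_i d_ix_i=\sum_{i\notin\tau(x)}d_i\,(x_i-x^{(n)})\le0$, with equality only if all these $d_i$ vanish (since $x_i<x^{(n)}$ there). This says $\varphi^\theta$ greedily pushes every slack mass to the top, which is exactly what makes it welfare-maximal among rules obeying the Fair Share bound with parameter $\theta$.

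\smallskip
For Part~1, take a symmetric API rule $\psi$ satisfying Fair Share. Proposition~\ref{prop2} supplies $\theta_0\in[0,1]$ with $\psi_i(x)\ge\varphi_i^{\theta_0}(x)$ for all $i\notin\tau(x)$ and all $x$. If $\theta_0=0$ the bound forces $\psi_i(x)\ge\tfrac1n$ for every $i$, hence $\psi=\varphi^{es}$, which is dominated by $\varphi^1$. If $\theta_0>0$, applying the lemma at each $x$ gives $\sum_i\psi_i(x)x_i\le\sum_i\varphi_i^{\theta_0}(x)x_i$; if equality holds for every $x$ then $\psi_i(x)=\varphi_i^{\theta_0}(x)$ for all $i\notin\tau(x)$, and since a transposition inside $\tau(x)$ fixes $x$, symmetry of $\psi$ forces equal shares inside $\tau(x)$ too, i.e.\ $\psi=\varphi^{\theta_0}$; otherwise $\psi$ is strictly dominated by $\varphi^{\theta_0}$. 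Part~2 is then a one-line computation: for $n=2$ and $x_1\le x_2$, $\varphi^\theta$ gives agent~$1$ the share $\varphi_1^\theta(x)=\max\{\tfrac12-\theta(\tfrac{\overline{x}}{x_1}-1),0\}$, non-increasing in $\theta$, so $\sum_i\varphi_i^\theta(x)x_i=x_2-\varphi_1^\theta(x)(x_2-x_1)$ is non-decreasing in $\theta$ and strictly increasing at any $x$ with $x_1<x_2$ and $\varphi_1^\theta(x)>\varphi_1^1(x)$ (e.g.\ $x=(1,2)$); hence $\varphi^1$ dominates each $\varphi^\theta$, $0<\theta<1$.

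\smallskip
Parts~3 and~4 reduce, via Part~1 and transitivity of domination, to comparing Top-Heavy rules with one another and with $\varphi^{pro}$ on explicit realizations. For Part~3 it suffices to check that for $n\ge3$ and $\theta\ne\theta'$ neither of $\varphi^\theta,\varphi^{\theta'}$ weakly dominates the other (a dominator of $\varphi^\theta$ would otherwise force some $\varphi^{\theta'}$ to dominate it, contradiction). If $\theta>\theta'$, at $x=(1,\dots,1,t)$ with $t>1$ close to $1$ (all bounds~(\ref{10}) slack) one computes $\sum_i\varphi_i^\theta(x)x_i=\tfrac{n-1+t}{n}+\tfrac{\theta(t-1)^2}{n}$, strictly increasing in $\theta$; if $\theta<\theta'$, at $x=(1,\dots,1,M,M+1)$ with $M$ so large that the $n-2$ unit coordinates are clamped to $0$ for both parameters, $\sum_i\varphi_i^\theta(x)x_i=(M+1)-\varphi^\theta_{n-1}(x)$ with $\varphi^\theta_{n-1}(x)$ (the slack, positive share of the above-average agent with value $M$) strictly increasing in $\theta$. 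For Part~4(a), $\varphi_i^{pro}(x)=\tfrac1n\tfrac{x_i}{\overline{x}}\ge\tfrac1n\max\{2-\tfrac{\overline{x}}{x_i},0\}=\varphi_i^{(n-1)/n}(x)$ for $i\notin\tau(x)$ (it reduces to $r+\tfrac1r\ge2$ with $r=x_i/\overline{x}$), so the lemma gives $\sum_i\varphi_i^{pro}(x)x_i\le\sum_i\varphi_i^{(n-1)/n}(x)x_i$ for all $x$, strictly somewhere since the rules differ. For Part~4(b) I exhibit, for each $\theta\ne\tfrac{n-1}{n}$, an $x$ where $\varphi^{pro}$ strictly beats $\varphi^\theta$: for $\theta<\tfrac{n-1}{n}$, at $x=(1+\varepsilon,1,\dots,1)$ one gets $\sum_i\varphi_i^{pro}(x)x_i-\sum_i\varphi_i^\theta(x)x_i=\tfrac{\varepsilon^2}{n}(\tfrac{n-1}{n}-\theta)+O(\varepsilon^3)>0$ for small $\varepsilon>0$; for $\theta>\tfrac{n-1}{n}$, at the ``one valuable, one moderate, the rest worthless'' profile $x=(0,\dots,0,c,1)$ with $c\in(\tfrac1{n-1},1)$ one computes
\[
\sum_i\varphi_i^{pro}(x)x_i-\sum_i\varphi_i^\theta(x)x_i=\frac{((n-1)c-1)(1-c)}{n}\cdot\frac{\theta(c+1)-(n-1)c}{(n-1)c(c+1)},
\]
which is positive iff $\theta>\tfrac{(n-1)c}{c+1}$, and since $\tfrac{(n-1)c}{c+1}\to\tfrac{n-1}{n}$ as $c\downarrow\tfrac1{n-1}$, such a $c$ exists for every $\theta>\tfrac{n-1}{n}$.

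\smallskip
The one genuinely delicate step is Part~4(b) for $\theta>\tfrac{n-1}{n}$: near the equal profile $\varphi^\theta$ actually out-performs $\varphi^{pro}$ for such $\theta$ (the second-order term above flips sign), so the separating realization has to be sought far from $e^N$, and its welfare under $\varphi^\theta$ must be evaluated while tracking which bounds~(\ref{10}) are slack and which are clamped to $0$. Everything else is routine bookkeeping once the lemma is in place.
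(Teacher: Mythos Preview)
Your proof is correct. For Parts 1, 2, and 4(a) you follow the paper's argument closely; your ``lemma'' is the paper's inline observation $(\varphi_i(x)-\varphi_i^\theta(x))x_i\le(\varphi_i(x)-\varphi_i^\theta(x))x^{(n)}$ for $i\notin\tau(x)$, packaged cleanly. For Part~3 you choose different witnessing profiles---$(1,\dots,1,t)$ and $(1,\dots,1,M,M+1)$ in place of the paper's $(\tfrac34,\dots,\tfrac34,1+\tfrac{n-1}{4})$ and a three-value profile---but the strategy (show no two TH rules are comparable, then invoke Part~1 and transitivity) is the same; you make the reduction step explicit, which the paper leaves tacit.

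The real difference is Part 4(b). The paper asserts that $\varphi^{pro}$ is dominated by $\varphi^\theta$ \emph{if and only if} the bound~(\ref{10}) holds for $\varphi^{pro}$ with that~$\theta$, and then computes that this forces $\theta=\tfrac{n-1}{n}$. But Part~1's proof supplies only the ``if'' direction; the converse (that failing~(\ref{10}) at some $(x,i)$ precludes domination by $\varphi^\theta$) is not argued, and in fact fails in general---$\varphi^{es}$ violates~(\ref{10}) for every $\theta>0$ yet is dominated by every $\varphi^\theta$. Your route, exhibiting for each $\theta\ne\tfrac{n-1}{n}$ an explicit realization where $\varphi^{pro}$ strictly beats $\varphi^\theta$ (a perturbation of $e^N$ for $\theta<\tfrac{n-1}{n}$, the profile $(0,\dots,0,c,1)$ with a closed-form welfare difference for $\theta>\tfrac{n-1}{n}$), proves 4(b) directly and closes this gap.
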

\begin{proof}[Proof of statement $i)$] Fix an {API} rule $\varphi 
$ satisfying FS. There is a $\theta, \ 0\leq \theta \leq 1$, such that the
inequalities (\ref{10}) hold for all $i$ and $x$ (Proposition~\ref{prop2}). If $\theta
=0$, our rule is Equal Split, which we already noticed is dominated by each
rule $\varphi ^{\theta }$. If $\theta >0$, these inequalities imply $\varphi
_{i}(x)\geq \varphi _{i}^{\theta }(x)$ for all $x$ and all $i\notin \tau (x)$%
. Hence $(\varphi _{i}(x)-\varphi _{i}^{\theta }(x))x_{i}\leq (\varphi
_{i}(x)-\varphi _{i}^{\theta }(x))x^{(n)}$ for all $i\notin \tau (x)$.
Summing up these inequalities and adding $\sum_{j\in \tau (x)}(\varphi
_{i}(x)-\varphi _{i}^{\theta }(x))x_{j}$ on both sides gives the desired
weak inequalities in (\ref{17}). If none of the inequalities in (\ref{17})
is strict, we deduce $\varphi _{i}(x)=\varphi _{i}^{\theta }(x)$ for all $x$
and all $i\notin \tau (x)$ such that $x_{i}>0$. If there is some $i$ such that $%
x_{i}=0$ and $\varphi _{i}(x)>0$ (while $\varphi _{i}^{\theta }(x)=0$) then $%
\varphi (x)$ has less weight to distribute on $\tau (x)$ than $\varphi
^{\theta }$, contradicting our assumption. Because $\varphi $ is symmetric,
we conclude  that $\varphi (x)=\varphi ^{\theta }(x)$. 

\smallskip
\noindent\textit{Proof of statement $ii)$}  Fix $\theta <\theta ^{\prime }$ and $%
x_{1}\leq x_{2}$. The formula (\ref{16}) implies $\varphi _{1}^{\theta
}(x)\geq \varphi _{1}^{\theta ^{\prime }}(x)$ because the coefficient of $%
\theta $ in $\varphi _{1}^{\theta }(x)$ is $\frac{1}{2}(1-\frac{x_{2}}{x_{1}}%
)\leq 0$.   {Hence, under $\varphi^{\theta{^\prime}}$, the low-value agent~1 receives the good with lower probability than under $\varphi^{\theta}$. This yields inequality (\ref{17}) and, 
	for $\frac{1}{2}< \frac{x_1}{x_2}<1$, it is strict.} Thus $\varphi ^{1}$ dominates $\varphi ^{\theta }$ for $\theta <1$. 
Note that this argument does not extend to the case $n\geq 3$ because
if agent $i$'s utility is neither the smallest nor the largest, the sign of
the coefficient of $\theta $ in $\varphi _{i}^{\theta }(x)$ is ambiguous.

\smallskip
\noindent\textit{Proof of statement $iii)$}  
We check now that no TH rule $\varphi ^{\theta }$ dominates another TH rule $%
\varphi ^{\theta ^{\prime }}$. Assume that $0<\theta <\theta ^{\prime }$ and
consider first the profile $x_{i}=\frac{3}{4}$ if $i\neq n$ and $x_{n}=1+%
\frac{n-1}{4}$. Then $\overline{x}=1$ and all coordinates of $\varphi
_{i}^{\theta }(x)$ and $\varphi _{i}^{\theta ^{\prime }}(x)$ are strictly
positive. Compute $\varphi _{i}^{\theta }(x)-\varphi _{i}^{\theta ^{\prime
}}(x)=\frac{\theta ^{\prime }-\theta }{3(n-1)}>0$ for all $i\neq n$, and so 
$\varphi ^{\theta ^{\prime }}$ generates more surplus at $x$ than $\varphi
^{\theta }$.

To show an instance of the reverse comparison, we choose $x_{1}=\frac{\theta }{3}$,  $x_{i}=1+\frac{\frac{3}{4}-\frac{\theta }{3}%
}{n-2}$  for  {$2\leq i\leq n-1,$} and $x_{n}=\frac{5}{4}.$
Thus $\overline{x}=1$ and $\overline{x}<x_{i}<x_{n}$ for  {$2\leq i\leq n-1$.}
This implies $\varphi _{1}^{\theta }(x)=\varphi _{1}^{\theta ^{\prime
}}(x)=0 $, $\varphi _{i}^{\theta }(x)<\varphi _{i}^{\theta ^{\prime }}(x)$,
and $\varphi _{n}^{\theta }(x)>\varphi _{i}^{\theta ^{\prime }}(x)$.

\smallskip
\noindent\textit{Proof of statement $iv)$}  {In the proof of statement $i)$,} we showed that the rule $%
\varphi $ is dominated by $\varphi ^{\theta }$ if it satisfies inequalities (%
\ref{10}).  {Thus the rule $\varphi ^{pro}$ is dominated by the TH rule $%
	\varphi ^{\theta}$ if and only if for all $x\in 
	\mathbb{R}
	_{+}^{N}$  and $i\in N$ we have%
	$$\frac{x_{i}}{x_{N}}\geq \frac{1}{n}+\frac{\theta}{n-1}\left(1-\frac{\overline{x}}{%
		x_{i}}\right)\Longleftrightarrow \frac{x_{i}}{x_{N}}+\frac{\theta \cdot x_{N}}{n(n-1)x_i}\geq 
	\frac{1}{n}+\frac{\theta}{n-1}.$$
	By the inequality between arithmetic and geometric means, $\frac{x_{i}}{x_{N}}+\frac{\theta \cdot x_{N}}{n(n-1)x_i}\geq 2\sqrt{\frac{\theta}{n(n-1)}}$ and this lower bound is attained on $x$ such that $\frac{x_i}{x_N}=\sqrt{\frac{\theta}{n(n-1)}}$. Therefore, $\varphi ^{pro}$ is dominated by $
	\varphi ^{\theta}$  if and only if $2\sqrt{\frac{\theta}{n(n-1)}}\geq \frac{1}{n}+\frac{\theta}{n-1}$. We see that the geometric mean of $\frac{1}{n}$ and $\frac{\theta}{n-1}$ exceeds their arithmetic mean, which is only possible if the two means coincide with $\frac{1}{n}$ and $\frac{\theta}{n-1}$, respectively. Thus $\varphi ^{pro}$ is dominated by $\varphi ^{\theta}$ only for $\theta=\frac{n-1}{n}$.} \end{proof}

%
%
%
%

\section{Bads: The unique undominated {API} rule}\label{sec4}

{We adapt the approach developed in the previous section in order to characterize the undominated (Definition~\ref{def7}) {API} fair rules for a bad. }

{Surprisingly, in this case the dominating rule is  unique even for $n\geq 3$.}

\subsection{ {Characterizing fairness for a bad}}
We state the counterpart of Proposition~\ref{prop2} for a bad. The proof is in Appendix~\ref{appA}.

\begin{proposition}\label{prop3}
\textit{ {A  {symmetric}} {API} rule }$\varphi $\textit{
	dividing a bad satisfies Fair Share if and only if there exists a number }$%
\theta, \ 0\leq \theta \leq 1$\textit{, such that}%
\begin{equation}
\varphi _{i}(x)\leq \min \left\{\frac{1}{n}+\frac{\theta }{n-1}\left(\frac{\overline{x}%
}{x_{i}}-1\right),\ 1\right\}\mbox{ for all }i\in N\mbox{ and }x\in 
\mathbb{R}
_{+}^{N}  \label{22}
\end{equation}%
\textit{(where we set $\frac{1}{0}=+\infty $)}.
\end{proposition}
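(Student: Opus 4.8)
The plan is to follow the proof of Proposition~\ref{prop2} almost verbatim, the only structural change being that Fair Share for a bad points the other way, so the \emph{convexification} of the relevant one-dimensional function gets replaced by its \emph{concavification}. For the ``if'' direction, I would assume that~(\ref{22}) holds with some $\theta\in[0,1]$ and multiply by $x_i$; the resulting inequality $\varphi_i(x)\cdot x_i\leq\frac{1}{n}x_i+\frac{\theta}{n-1}(\overline{x}-x_i)$ also holds at $x_i=0$ (left side $0$, right side $\frac{\theta}{n-1}\overline{x}\geq 0$), so taking expectations in a normalized problem and using $\mathbb{E}_\mu(X_i-\overline{X})=0$ gives $\mathbb{E}_\mu(\varphi_i(X)\cdot X_i)\leq\frac{1}{n}$, which is FS for a bad.

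For the ``only if'' direction I would set $f(x)=\varphi_1(x)\cdot x_1$, so $f(e^N)=\frac{1}{n}$ by symmetry. Whenever $\sum_k\mu_k y^k=e^N$ is a finite convex combination in $\mathbb{R}_+^N$, the problem with $X=y^k$ with probability $\mu_k$ is normalized, and FS for a bad yields $\sum_k\mu_k f(y^k)\leq\frac{1}{n}=f(e^N)$. Let $h$ be the concavification of $f$, i.e.\ the pointwise-minimal concave majorant, which admits the representation $h(x)=\sup\{\sum_k\mu_k f(y^k)\,:\,\sum_k\mu_k y^k=x\}$. Since $\varphi_1\leq 1$, the linear map $x\mapsto x_1$ is a concave majorant of $f$, hence $h(x)\leq x_1$ and $h$ is finite on all of $\mathbb{R}_+^N$; combined with the previous display this forces $h(e^N)=f(e^N)=\frac{1}{n}$. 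As $e^N$ is interior to $\mathbb{R}_+^N$ and $h$ is finite and concave, $h$ has a supergradient $\alpha\in\mathbb{R}^N$ at $e^N$, giving $\varphi_1(x)\cdot x_1\leq h(x)\leq\frac{1}{n}+\alpha\cdot(x-e^N)$ for all $x$.

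From here the argument runs parallel to Proposition~\ref{prop2}, with the signs swapped. Plugging $x=\lambda e^N$ and sending $\lambda\to+\infty$ and $\lambda\to 0^+$ forces $\alpha\cdot e^N=\frac{1}{n}$; averaging $\alpha$ over permutations of the coordinates $2,\dots,n$ (which fix $f$, by symmetry of $\varphi$) lets me take $\alpha=(\alpha_1,\beta,\dots,\beta)$ with $\alpha_1+(n-1)\beta=\frac{1}{n}$, so that $\varphi_1(x)\cdot x_1\leq\alpha_1 x_1+\beta x_{N\setminus\{1\}}$. Now $\varphi_1\geq 0$: keeping $x_1$ bounded while letting $x_{N\setminus\{1\}}\to\infty$ in this inequality forces $\beta\geq 0$. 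Dividing by $x_1>0$, substituting $\alpha_1=\frac{1}{n}-(n-1)\beta$, and writing $\delta=n\beta\geq 0$ gives $\varphi_1(x)\leq\frac{1}{n}+\delta\bigl(\frac{\overline{x}}{x_1}-1\bigr)$; by symmetry the same bound holds for each $i$ with the same $\delta$. Finally, $\varphi_i(x)\geq 0$ forces $\frac{1}{n}+\delta\bigl(\frac{\overline{x}}{x_i}-1\bigr)\geq 0$ for all $x\gg 0$; sending $x_i\to+\infty$ with the other coordinates fixed drives $\frac{\overline{x}}{x_i}\to\frac{1}{n}$, so $\frac{1}{n}-\delta\frac{n-1}{n}\geq 0$, i.e.\ $\delta\leq\frac{1}{n-1}$; setting $\theta=(n-1)\delta\in[0,1]$ yields~(\ref{22}). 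As in the goods case, Caratheodory's theorem lets one restrict to combinations with at most $n+1$ atoms, so it even suffices that FS hold in problems with at most $n+1$ realizations of the bad.

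I expect the only genuinely new point to be the finiteness of the concavification $h$: unlike the convexification for goods, which is automatically bounded below by $0$ since $f\geq 0$, here I must exhibit the concave majorant $x\mapsto x_1$ (this is where $\varphi_1\leq 1$ is used) to rule out $h\equiv+\infty$ and thereby get a supergradient at $e^N$. Everything else is a sign flip of the goods argument, with one bookkeeping twist: the upper bound $\theta\leq 1$ now comes from non-negativity of shares $\varphi_i\geq 0$, whereas in Proposition~\ref{prop2} it came from the shares summing to at most $1$.
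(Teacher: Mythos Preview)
Your proposal is correct and follows the paper's own proof essentially step for step: concavification in place of convexification, supergradient at $e^N$, symmetry to reduce $\alpha$ to $(\alpha_1,\beta,\dots,\beta)$ with $\alpha\cdot e^N=\frac{1}{n}$, then $\beta\geq 0$ from $\varphi_1\cdot x_1\geq 0$, and finally $\delta\leq\frac{1}{n-1}$ from $\varphi_i\geq 0$ at (or near) $x=e^{\{i\}}$. Your explicit remark that $f(x)\leq x_1$ gives a finite concave majorant, hence a finite concavification and an honest supergradient at the interior point $e^N$, is a detail the paper glosses over; it is a genuine improvement in rigor, not a different approach.
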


\subsection{ {The undominated  Bottom-Heavy rule for a bad}}

{We can now use inequality (\ref{22})
	to construct, as in the previous section, the canonical \textit{Bottom-Heavy}
	rule $\varphi ^{1}$, which corresponds to $\theta=1$. 
	The construction relies on the same
	order statistics {$(x^{(1)},\ldots ,x^{(n)})$}, but is slightly more involved. We write $%
	\sigma (x;t)=\{i\in N\mid x_{i}=x^{(t)}\}$ (and so $\sigma (x;n)=\tau (x)$) and
	use the convention  {$x^{(0)}=-\infty$ and $\sigma(x,t)=\varnothing $ for $t>n$.}
	
	The BH rule places as much weight on the smallest
	disutilities as permitted by (\ref{22}). For $\theta=1$,  the right-hand side of (\ref{22}) simplifies to
	$\frac{1}{n}+\frac{\theta}{n-1}(\frac{\overline{x}}{x_{i}}-1)=\frac{1%
	}{n(n-1)}\frac{x_{N\diagdown \{i\}}}{x_{i}}$ and we get the following expression.

	\begin{definition}\label{def9}
The Bottom-Heavy (BH) rule $\varphi^1$ is defined  by 
			\begin{equation}
			\varphi _{i}^{1}(x)=\left\{
			\begin{array}{cc}
			{\displaystyle  \frac{1}{n(n-1)}\frac{x_{N\diagdown \{i\}}}{x_{i}},  } &   i: \ x_{i}\leq x^{(\widetilde{t})} \\
			{\displaystyle \frac{1}{|\sigma (x;\widetilde{t}+1)|}\left(%
				1-\frac{1%
				}{n(n-1)}\sum_{i:\,x_{i}\leq x^{(\widetilde{t})}}\frac{x_{N\diagdown \{i\}}}{x_{i}}\right),} & i\in\sigma (x;\widetilde{t}+1)\\
			0, & \mbox{otherwise}
			\end{array}
			\right.,
			\label{18}
			\end{equation}
where $\widetilde{t}$ is the maximal $t=0,1,2,\ldots,n$ such that $\frac{1%
			}{n(n-1)}\sum_{i:\,x_{i}\leq x^{(t)}}\frac{x_{N\diagdown \{i\}}}{x_{i}}\leq 1$.
		
			 {In other words, agents are weakly ordered by their values and the longest possible prefix of low-value agents (permitted by the feasibility condition $\varphi^1\in \Delta(N)$) receives shares equal to the upper bound~(\ref{10}); agents next to that prefix  split the rest equally,  and all others get nothing.}
	\end{definition}
	
	Note that for all vectors $x$ except those parallel to $e^{N}$ we have  $\frac{1}{n(n-1)}\sum_{i\in N}\frac{x_{N\diagdown \{i\}}}{x_{i}}>1$ and thus $\widetilde{t}\leq n-1$. Indeed, 
	the minimum of $\sum_{i\in N}\frac{x_{N\diagdown \{i\}}}{x_{i}}$ over $\mathbb{R}_{+}^{N}$ is $n(n-1)$, and it is achieved by any $x$ parallel to $e^{N}$, and only
	by those: for such a vector, $\widetilde{t}=n$ and $\varphi^{1}(x)=\frac{e^N}{n}.$
	
	If\, $\widetilde{t}=0$ the only agents with a positive share are those in $%
	\sigma (x;1)$, who have the smallest disutility, and so $\varphi ^{1}$
	selects an optimal utilitarian allocation.}

{Symmetrically to the case of goods,} the sequence of shares $\varphi _{i}^{1}(x)$ is anti-monotonic to the
sequence of disutilities $x_{i}$.

\begin{example}[the BH rule  $\varphi^1$ for two agents] 
	If $n=2$, the BH rule $\varphi ^{1}$ for bads is the
mirror image of the dominant TH rule $\varphi ^{1}$ (\ref{23}): 
\begin{equation*}
\varphi ^{1}(x)=\left\{\begin{array}{cc}
(1,0), & \ \frac{x_{1}}{x_2}\leq \frac{1}{2}\\
\left(\frac{%
	x_{2}}{2x_{1}},\ 1-\frac{x_{2}}{2x_{1}}\right),  & \frac{1}{2}\leq
\frac{x_1}{x_2}\leq 1 
\end{array}\right.. 
\end{equation*}
\begin{figure}[h!]
	\centering
	\vskip -0.5 cm
	{\includegraphics[width=6cm, clip=true]{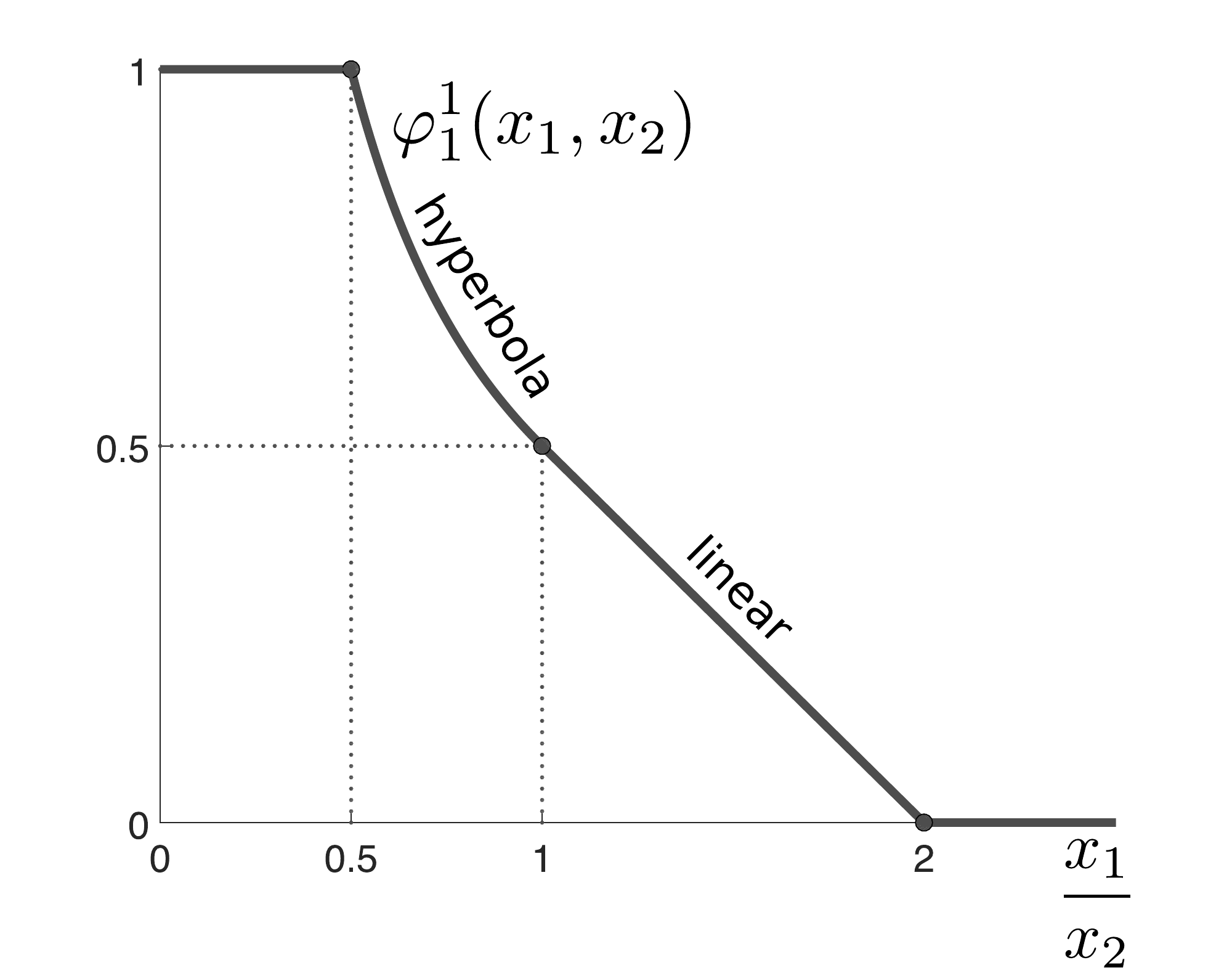}}
	\caption{The share of the first agent under the BH rule $\varphi^1$ for two agents as a function of $\frac{x_1}{x_2}$.}
	\label{fig3}
\end{figure}
\end{example}

\begin{theorem}[for bads]\label{th2}
\textit{For any }$n\geq 2$\textit{, the Bottom-Heavy rule }$%
\varphi ^{1}$\textit{ dominates every  other  {symmetric} {API} rule for
	bads satisfying Fair Share.}
\end{theorem}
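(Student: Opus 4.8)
The plan is to mirror the proof of Theorem~\ref{th1}$(i)$, but to exploit the explicit combinatorial structure of $\varphi^{1}$ from Definition~\ref{def9} so as to land on a \emph{single} dominating rule. I work with a normalized problem and a fixed realization $x$; by symmetry of both rules I may assume $x_{1}\le\dots\le x_{n}$ and that $x$ is not parallel to $e^{N}$ (the parallel case gives the same cost to every rule). Let $\widetilde t$ and $v:=x^{(\widetilde t+1)}$ be as in Definition~\ref{def9}, so that $\{i:x_{i}<v\}$ is exactly the prefix on which $\varphi^{1}_{i}(x)$ equals the (untruncated) bound $(\ref{22})$ with $\theta=1$, which I write $u^{1}_{i}(x)=\frac1n+\frac1{n-1}\big(\frac{\overline x}{x_{i}}-1\big)$, the agents with $x_{i}=v$ split the remainder, and the agents with $x_{i}>v$ receive $0$. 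Given an arbitrary symmetric API fair rule $\varphi$, Proposition~\ref{prop3} supplies $\theta\in[0,1]$ with $\varphi_{i}(x)\le u^{\theta}_{i}(x):=\min\{\frac1n+\frac{\theta}{n-1}(\frac{\overline x}{x_{i}}-1),\,1\}$ for all $i$.

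Since $\varphi(x),\varphi^{1}(x)\in\Delta(N)$, proving that $\varphi^{1}$ weakly beats $\varphi$ at $x$ (the reversed version of $(\ref{17})$) is equivalent to $\sum_{i\in N}(x_{i}-v)\big(\varphi_{i}(x)-\varphi^{1}_{i}(x)\big)\ge0$. Splitting the sum according to whether $x_{i}$ is $<v$, $=v$ or $>v$: the terms with $x_{i}>v$ are $\ge0$ because $\varphi^{1}_{i}(x)=0$ there, those with $x_{i}=v$ vanish, so it remains to show
\[
\sum_{i:\,x_{i}<v}(v-x_{i})\,\varphi_{i}(x)\ \le\ \sum_{i:\,x_{i}<v}(v-x_{i})\,u^{1}_{i}(x).
\]
For these prefix agents neither $u^{1}_{i}$ nor $u^{\theta}_{i}$ is truncated, so $u^{\theta}_{i}-u^{1}_{i}=\frac{1-\theta}{n-1}\big(1-\frac{\overline x}{x_{i}}\big)$; using $\varphi_{i}\le u^{\theta}_{i}$, $1-\theta\ge0$ and $v-x_{i}>0$, the display reduces to the elementary inequality
\[
(\star)\qquad \sum_{i:\,x_{i}<v}(v-x_{i})\Big(1-\frac{\overline x}{x_{i}}\Big)\ \le\ 0 .
\]

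To prove $(\star)$ I set $S=\{i:x_{i}<v\}$ and use that $S$ is obtained by deleting the largest coordinates, so $\mathrm{mean}(\{x_{i}\}_{i\in S})\le\overline x$ (in particular the global minimizer lies in $S$). Each summand equals $\frac{v-x_{i}}{x_{i}}(x_{i}-\overline x)$, which is $\le0$ for $x_{i}\le\overline x$ and $\ge0$ for $x_{i}>\overline x$; if $v\le\overline x$ there are no positive summands and $(\star)$ is immediate. Otherwise, since $t\mapsto\frac{v-t}{t}$ is decreasing and nonnegative on $(0,v]$, the factor $\frac{v-x_{i}}{x_{i}}$ is $\ge\frac{v-\overline x}{\overline x}$ when $x_{i}\le\overline x$ and $\le\frac{v-\overline x}{\overline x}$ when $\overline x<x_{i}<v$; combining this with $\sum_{i\in S,\,x_{i}<\overline x}(\overline x-x_{i})\ge\sum_{i\in S,\,x_{i}>\overline x}(x_{i}-\overline x)$ (the mean bound) yields $(\star)$. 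For the strict half of domination one argues as in Theorem~\ref{th1}$(i)$ that equality throughout forces $\varphi=\varphi^{1}$: if $\varphi$'s parameter is $\theta<1$ any $x$ with a strictly below-average coordinate already makes one of the estimates strict, and if $\theta=1$ then $\varphi(x)$ is, like $\varphi^{1}(x)$, a symmetric cost-minimizer on the same polytope, hence equal to it.

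The step I expect to be the crux is $(\star)$. Proposition~\ref{prop3} only controls $\varphi_{i}$ through $u^{\theta}_{i}$, and for an above-average agent sitting inside the $\varphi^{1}$-prefix one can have $u^{\theta}_{i}>u^{1}_{i}=\varphi^{1}_{i}$, so the comparison cannot be done coordinate by coordinate; $(\star)$ is precisely the assertion that the extra mass a looser rule may pile on mildly-above-average agents is always more than cancelled by the mass it is forced to relinquish on the genuinely cheapest agents. This cancellation is the asymmetry that makes the undominated rule for bads unique, in contrast with the one-parameter Top-Heavy family for goods.
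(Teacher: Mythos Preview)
Your argument is correct and is a genuine simplification of the paper's proof. The paper proceeds in two stages: it first defines an entire family of Bottom-Heavy rules $\varphi^\theta$, shows (as for goods) that any fair rule $\varphi$ with parameter $\theta$ is weakly dominated by $\varphi^\theta$, and then spends a separate ``straightforward but lengthy'' Step~3 proving that $\varphi^1$ dominates $\varphi^\theta$ for $\theta<1$, juggling the two cutoffs $t^+,t^-$ and the auxiliary quantities $\lambda,\mu,\psi^\varepsilon$. You compare $\varphi$ directly with $\varphi^1$, and the whole comparison collapses to the single elementary inequality $(\star)$. The inequality the paper eventually isolates at the very end of Step~3 (that $\sum_{i\le t^+}\delta_i\gamma_i>0$) is essentially your $(\star)$ in disguise, but the paper reaches it through the machinery of the intermediate rule $\varphi^{\theta^+}$ and its own cutoff $t^+$; your route uses only the order-theoretic fact that $S$ is a prefix, hence $\sum_{i\in S}(x_i-\overline x)\le 0$, together with the monotonicity of $t\mapsto(v-t)/t$. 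This is cleaner and makes transparent what drives the uniqueness for bads.

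Two small points. First, your justification of strict domination for $\theta=1$ is too compressed: the phrase ``symmetric cost-minimizer on the same polytope'' is misleading, since nothing so far says $\varphi(x)$ minimizes cost. What actually happens is that cost-equality at $x$ forces every nonnegative piece of your decomposition to vanish: $\varphi_i(x)=0$ for $x_i>v$, $\varphi_i(x)=u^1_i(x)=\varphi^1_i(x)$ for $x_i<v$ (termwise, since $\varphi_i\le u^1_i$ there and each summand $(v-x_i)(\varphi_i-\varphi^1_i)\le0$), and then equal shares on $\{x_i=v\}$ by symmetry and the simplex constraint; hence $\varphi(x)=\varphi^1(x)$. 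Second, for $\theta<1$ you should be explicit that the witnessing $x$ must have its strictly below-average agent lie in the nonempty prefix $S$; a concrete choice such as $x_1=\tfrac12$, $x_2=\cdots=x_n=a$ with $a\in(\tfrac12,\tfrac{n}{2})$ works (here $\widetilde t=1$, $S=\{1\}$, and $(\star)$ is strictly negative).
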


{In Appendix~\ref{appA}  we define a family of BH rules $\varphi^{\theta}, \ \theta\in[0,1]$ and first show, as in the case of goods, that any other rule is dominated by some $\varphi^\theta$; then we check that $\varphi^1$ dominates $\varphi^\theta$ for $\theta<1$.}  {This additional domination argument within the	family of BH rules $\varphi ^{\theta }$ is straightforward but
	lengthy.}

\section{Worst-case performances}\label{sec5}

\textit{Notation. }  We write $\Phi$ for the set of symmetric almost prior-dependent rules $\varphi$, $\Phi(FS)$
	for  rules $\varphi\in \Phi$ satisfying Fair Share, and $\Phi_{ind}(FS)$ for symmetric rules  
	 $\varphi\in \Phi(FS)$. Thus $\Phi_{ind}(FS)\subset \Phi(FS)\subset\Phi$. 
	{Let ${\Pi}_{n}$ be the set of
	normalized problems with $n$ agents.} {Finally we recall that $S(\varphi,\mathcal{P})$ denotes the expected social welfare (expected social cost); see~\eqref{eq_SW}.}
\begin{definition}\label{def10}
	{{The  \textit{Competitive Ratio}\footnote{ { {The term ``competitive ratio'' is borrowed from the literature on online algorithms:} there it is defined as a worst-case {factor by which the value of the objective  (the social welfare in our model) for an online rule is less than the value achieved by the best offline rule,} where the manager has full knowledge of the future.\\
	 \indent{Our model can be interpreted as online allocation problem with i.i.d. objects; see Section~\ref{subsect_lit}. Under this interpretation our definition of competitive ratio matches the traditional one. Knowing the future reduces to knowing} the empirical distribution of the future sequence of values, which in an i.i.d. environment with a large number of repetitions converges to the prior. Thus the best offline rule becomes just the best prior-dependent rule {in the long run}.}}
				 \textit{(CR)} of an {API} rule $\varphi\in \Phi_{ind}(FS)$ is defined as follows:}}
	\[
	\mbox{for a good: } \ CR_{n}(\varphi )=\sup_{\mathcal{P}\in 
		{\Pi }_{n}}\sup_{\psi \in \Phi(FS)}\frac{S (\psi ,\mathcal{P})}{S (\varphi ,\mathcal{P})} \qquad\quad
	\mbox{for a bad: } \ CR_{n}(\varphi )=\sup_{\mathcal{P}\in 
		{\Large \Pi }_{n}}\sup_{\psi \in \Phi(FS)}\frac{S (\varphi ,\mathcal{P})}{S (\psi ,\mathcal{P})}. 
	\]%
	{The CR identifies the worst-case loss in the social welfare caused by {almost} prior-independence.}

	For a good and a rule $\varphi \in
	\Phi (FS)$, we write $\pi(\varphi,\mathcal{P})$ for the ratio
of the optimal unconstrained {social welfare}  generated by the Utilitarian rule to the {social welfare} generated by $\varphi$. {For
			a bad, it is the ratio  of the {social cost}
			generated by }$\varphi ${ to the optimal {social cost}:}{%
			{\[
				\mbox{for a good: } \pi(\varphi,\mathcal{P})=\frac{\mathbb{E}%
					_{\mu }\left(\max_{i}X_{i}\right)}{S (\varphi ,\mathcal{P})}\ \ \mbox{for a bad: }\ \pi(\varphi ,\mathcal{P})=\frac{S (\varphi ,\mathcal{P})}{\mathbb{E}%
					_{\mu }\left(\min_{i}X_{i}\right)}. 
				\]%
			}
		}
		
		{The \textit{Price of Fairness (PoF)} of} $\varphi
		\in \Phi (FS)$ {is the worst possible ratio $\pi(\varphi,\mathcal{P})$:}
		 {\[
			PoF_{n}(\varphi )={\sup_{\mathcal{P}\in \Pi_{n}}\pi(\varphi ,\mathcal{P})}%
			\geq 1. 
			\]}%
	
\end{definition}	
\begin{lemma}\label{lm1}
{If the {API} rule }$\varphi \in \Phi_{ind}
	(FS)$\textit{ divides  {a good}, we have}%
	\[
	CR_{n}(\varphi )=PoF_{n}(\varphi )=\sup_{x\in 
		\mathbb{R}
		_{+}^{N}}\frac{\max_{i}x_{i}}{\sum_{i\in N}\varphi _{i}(x)\cdot x_{i}}. 
	\]%
	\textit{If }$\varphi \in \Phi_{ind}(FS)$\textit{ divides  {a bad}, we have}%
	\[
	CR_{n}(\varphi )=PoF_{n}(\varphi )=\sup_{x\in 
		\mathbb{R}
		_{+}^{N}}\frac{\sum_{i\in N}\varphi _{i}(x)\cdot x_{i}}{\min_{i}x_{i}}. 
	\]
\end{lemma}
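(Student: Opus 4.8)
The plan is to show that both $CR_n(\varphi)$ and $PoF_n(\varphi)$ equal the ex~post ratio
$$R:=\sup_{x\in\mathbb{R}_+^N}\frac{\max_i x_i}{\sum_{i\in N}\varphi_i(x)x_i}\qquad\text{(for a good)},$$
$R:=\sup_{x}\frac{\sum_i\varphi_i(x)x_i}{\min_i x_i}$ for a bad, by proving the chain $CR_n(\varphi)\le PoF_n(\varphi)\le R\le CR_n(\varphi)$; the equality $CR_n=PoF_n$ then drops out. I write the argument for a good; the bad case is the mirror image, with $\min$ in place of $\max$ and all inequalities reversed.

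\emph{Step 1 ($CR_n\le PoF_n$).} The Utilitarian rule is ex~post welfare-optimal, so for every problem $\mathcal{P}$ and every competitor $\psi\in\Phi(FS)$ one has, realization by realization, $\sum_i\psi_i(x^*)x_i^*\le\max_i x_i^*$, hence $S(\psi,\mathcal{P})\le\mathbb{E}_\mu(\max_iX_i^*)=\pi(\varphi,\mathcal{P})\,S(\varphi,\mathcal{P})$. Dividing by $S(\varphi,\mathcal{P})$ and taking the supremum first over $\psi$ and then over $\mathcal{P}\in\Pi_n$ gives the inequality.

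\emph{Step 2 ($PoF_n\le R$).} By definition of $R$, the pointwise bound $\max_ix_i\le R\sum_{i\in N}\varphi_i(x)x_i$ holds for every realization $x$; integrating against $\mu$ yields $\mathbb{E}_\mu(\max_iX_i^*)\le R\,S(\varphi,\mathcal{P})$, i.e.\ $\pi(\varphi,\mathcal{P})\le R$ for every $\mathcal{P}\in\Pi_n$; take the supremum. (If $\varphi(x)\cdot x=0$ for some $x\neq0$ — possible only for a bad — then $R=PoF_n=+\infty$ and there is nothing to prove.)

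\emph{Step 3 ($R\le CR_n$).} Fix $\varepsilon>0$ and pick a realization $x$ with $\rho(x):=\frac{\max_ix_i}{\varphi(x)\cdot x}$ within $\varepsilon$ of $R$ (or $\rho(x)\ge1/\varepsilon$ if $R=+\infty$); since $\varphi$ reacts only to ratios of normalized utilities, $\rho$ is scale-invariant, so we may rescale $x$ to have mean $\overline x\ge1$. Let $\mathcal{P}$ be the problem in which $X$ equals $\pi(x)$ with probability $\frac1{\overline x\,n!}$ for each permutation $\pi$ of $N$, and equals $0$ with the remaining probability $1-\frac1{\overline x}\ge0$. One checks that $\mathbb{E}_\mu(X_i)=1$ for all $i$, so $\mathcal{P}\in\Pi_n$, and $\mathcal{P}$ is permutation-invariant. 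Because $\varphi^{ut}$ is symmetric it gives every agent the same expected utility, which therefore equals $\tfrac1nS(\varphi^{ut},\mathcal{P})=\tfrac1n\mathbb{E}_\mu(\max_iX_i)\ge\tfrac1n\mathbb{E}_\mu(\overline X)=\tfrac1n$, so $\varphi^{ut}\in\Phi(FS)$. By the symmetry of $\varphi$, $\sum_i\varphi_i(\pi(x))\pi(x)_i=\varphi(x)\cdot x$ for every $\pi$, and the mass at $0$ contributes nothing; hence $S(\varphi,\mathcal{P})=\tfrac1{\overline x}\,\varphi(x)\cdot x$ and $S(\varphi^{ut},\mathcal{P})=\tfrac1{\overline x}\max_ix_i$, so $\frac{S(\varphi^{ut},\mathcal{P})}{S(\varphi,\mathcal{P})}=\rho(x)$. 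Thus $CR_n(\varphi)\ge\rho(x)$, and letting $\varepsilon\to0$ yields $CR_n(\varphi)\ge R$. For a bad the only change is that on a permutation-invariant problem $\varphi^{ut}$ gives each agent expected disutility $\tfrac1nS(\varphi^{ut},\mathcal{P})=\tfrac1n\mathbb{E}_\mu(\min_iX_i)\le\tfrac1n$, so it remains fair.

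I expect the substance to be entirely in Step~3: the point is that the worst-case realization can be placed inside a \emph{symmetric} test problem (padding the leftover probability onto the origin, which neither uses mean-budget nor affects welfare, once $x$ has mean $\ge1$), and on a symmetric problem the otherwise infeasible Utilitarian benchmark becomes Fair-Share–admissible, so the competitor in $CR$ can match the unconstrained optimum. Steps~1–2 are immediate, and the scale-invariance of $\rho$ invoked to reduce to $\overline x\ge1$ is clear for the Top-Heavy, Bottom-Heavy and Proportional rules, which are homogeneous of degree zero in $x$.
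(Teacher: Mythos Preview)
Your argument follows the paper's almost exactly: the chain $CR_n\le PoF_n\le R\le CR_n$, with the first two steps immediate and the third established by embedding a near-extremal realization $x$ into a permutation-symmetric normalized problem on which the Utilitarian rule turns out to be fair. The only difference in Step~3 is cosmetic: the paper rescales $x$ to $x/\bar x$ so that the $n!$ permutations already carry total mass~$1$, whereas you keep $x$ (after arranging $\bar x\ge1$) and park the leftover mass $1-1/\bar x$ at the origin; both constructions deliver the same ratio $\rho(x)$.

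One step needs tightening. You conclude ``so $\varphi^{ut}\in\Phi(FS)$,'' but membership in $\Phi(FS)$ requires Fair Share for \emph{every} problem, and $\varphi^{ut}$ fails it on non-symmetric ones (the paper exhibits a two-agent counterexample when introducing the benchmark rules). What your computation actually establishes is that $\varphi^{ut}$ satisfies FS on \emph{this particular} symmetric $\mathcal{P}$. The paper closes the gap by taking as competitor a prior-dependent rule $\psi$ that equals $\varphi^{ut}$ on symmetric priors and equals any fixed rule in $\Phi(FS)$ elsewhere; then $\psi\in\Phi(FS)$ and $S(\psi,\mathcal{P})=S(\varphi^{ut},\mathcal{P})$ on your test problem, which is all you need. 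With this one-line patch your Step~3 goes through. Your closing caveat about scale-invariance of $\rho$ is well placed; the paper's own argument tacitly relies on the same homogeneity.
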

\begin{proposition}[for goods]\label{prop4} \
	\begin{enumerate}
	\item \textit{The }$CR_{n}$\textit{ of any rule }$\varphi \in
	\Phi_{ind} (FS)$\textit{ is at most }$n$\textit{; the $CR_n$ of Equal Split is
		exactly }$n$\textit{.}
	
	\item \textit{The} $CR_{n}$\textit{ of the Proportional rule is 
	} {$\frac{\sqrt{n}}{2}+\frac{1}{2}$}\textit{. For instance, }$121\%$\textit{ for }$n=2$.
	
	\item \textit{The }$CR_{n}$\textit{ of the Top-Heavy rule }$%
	\varphi ^{\theta }$\textit{ is decreasing in }$\theta $. \textit{Moreover:}%
	\[
	CR_{n}(\varphi ^{1})=\frac{n}{2\sqrt{n}-1}=\frac{\sqrt{n}}{2}+\frac{1}{4}+O\left(%
	\frac{1}{\sqrt{n}}\right), 
	\]%
	\[
	CR_{n}(\varphi ^{\theta })=\frac{n}{2\sqrt{(n-1+\theta )\theta }+1-2\theta }%
	\geq CR_{n}(\varphi ^{1}). 
	\]
	
 \textit{For instance, }$CR_{2}(\varphi ^{1})\simeq 109\%$\textit{
		for }$n=2$\textit{.}
	
	\item \textit{The smallest  }$PoF_{n}$\textit{ of a {
			prior-dependent} rule in }$\Phi (FS)$\textit{ is such that}%
	\[
	\frac{n}{2\sqrt{n}-1}\geq \inf_{\varphi\in \Phi(FS)}PoF_{n}(\varphi )\geq \frac{n}{2%
		\sqrt{n}-\frac{1}{2}}=\frac{\sqrt{n}}{2}+\frac{1}{8}+O\left(\frac{1}{\sqrt{n}}\right). 
	\]
	
 \textit{For }$n=2$\textit{ it is }$108\%$\textit{.\smallskip }
	\end{enumerate}
	\end{proposition}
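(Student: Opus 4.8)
The four parts all rest on Lemma~\ref{lm1}, which for an API rule $\varphi\in\Phi_{ind}(FS)$ reduces both $CR_n(\varphi)$ and $PoF_n(\varphi)$ to the deterministic quantity $\sup_{x}\frac{\max_i x_i}{\sum_i\varphi_i(x)x_i}$, and on the pointwise Fair-Share bound~(\ref{10}). The plan is to treat (i)--(iii) as optimization problems over $x\in\mathbb{R}_+^N$, to get the upper bound in (iv) by exhibiting one good rule, and the lower bound in (iv) by exhibiting one bad prior. For (i): if agent $k$ has the largest utility then $x_k\ge\overline x$, so~(\ref{10}) gives $\varphi_k(x)\ge\frac1n$ and hence $\sum_i\varphi_i(x)x_i\ge\frac1n\max_i x_i$, i.e.\ $CR_n(\varphi)\le n$; for Equal Split $\sum_i\varphi^{es}_i(x)x_i=\overline x$ and $\max_i x_i/\overline x\to n$ along $x\to e^{\{1\}}$, so $CR_n(\varphi^{es})=n$. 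For (ii): the Proportional surplus is $\frac{\sum_i x_i^2}{x_N}$; normalizing $\max_i x_i=1$ and using convexity of $t\mapsto t^2$ (for fixed $\sum_{i\ne k}x_i$ the sum of squares, hence the denominator, is smallest when those coordinates coincide), the supremum is attained at $x=(t,\dots,t,1)$, and maximizing $\frac{1+(n-1)t}{1+(n-1)t^2}$ over $t\in[0,1]$ gives $t=\frac1{1+\sqrt n}$ and the value $\frac{\sqrt n+1}{2}$.

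For (iii) I substitute $\varphi^\theta$ into Lemma~\ref{lm1}. A reduction argument confines the worst case to $x=(t,\dots,t,1)$ with $t\in(0,1]$: adding agents to $\tau(x)$ only raises the surplus, and a convexity estimate built from~(\ref{10}) and the feasibility inequality~(\ref{13}) shows the adversary does best with the $n-1$ non-top coordinates equal. On this family, with $\overline x=\frac{1+(n-1)t}{n}$, the surplus equals $\frac1n[(n-1+\theta)t+\frac\theta t+1-2\theta]$, so by AM--GM its minimum is $\frac1n[2\sqrt{(n-1+\theta)\theta}+1-2\theta]$, attained at $t=\sqrt{\theta/(n-1+\theta)}\le1$ (where one checks the non-top shares are still positive). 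This yields $CR_n(\varphi^\theta)=\frac{n}{2\sqrt{(n-1+\theta)\theta}+1-2\theta}$, equal to $\frac n{2\sqrt n-1}$ at $\theta=1$. Monotonicity in $\theta$ follows because, after clearing the denominator and squaring, the statement that the denominator increases in $\theta$ collapses to $(n-1)^2\ge0$; in particular $CR_n(\varphi^\theta)\ge CR_n(\varphi^1)$ for $0<\theta\le1$.

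For (iv) the upper bound is immediate: $\varphi^1$ is API, so $\varphi^1\in\Phi_{ind}(FS)\subset\Phi(FS)$, and Lemma~\ref{lm1} together with (iii) give $PoF_n(\varphi^1)=CR_n(\varphi^1)=\frac n{2\sqrt n-1}$. For the lower bound I would use the adversarial family $\mathcal P_m$, $1\le m\le n-1$: split the agents into $k=n-m$ ``flat'' agents of constant utility $1$ and $m$ ``spiky'' agents, and take $\mu$ uniform over $m$ states, state $v$ assigning utility $m$ to spiky agent $v$, utility $0$ to the other spiky agents, and utility $1$ to every flat agent (this problem is normalized), so that $\mathbb E_\mu(\max_i X_i)=m$. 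Summing the Fair-Share inequalities over the flat agents forces their total share, aggregated over states, to be at least $\frac{km}{n}$, while Fair Share for spiky agent $v$ forces it to keep share at least $\frac1n$ in its own state; a short linear-programming argument then gives $\max_{\varphi\in\Phi(FS)}S(\varphi,\mathcal P_m)=\frac{m^2+n-m}{n}$, the optimum realized by giving each flat agent exactly $\frac1n$ in every state. Hence $PoF_n(\varphi)\ge\frac{nm}{m^2-m+n}$ for every $\varphi\in\Phi(FS)$, and taking $m$ the integer nearest $\sqrt n$ and bounding $\frac mn+\frac1m$ against its continuous minimum $\frac2{\sqrt n}$ yields $PoF_n(\varphi)\ge\frac n{2\sqrt n-\frac12}$. (For $n=2$ the family $\mathcal P_m$ degenerates and this bound falls below $1$, so the figure $108\%$ there is obtained separately by a direct two-agent optimization.)

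The calculus in (ii)--(iii) is routine. The two places I expect to demand real care are the reduction to the extremal profile $x$ in part (iii) and, in part (iv), both guessing the family $\mathcal P_m$ and verifying that it is extremal on both sides, i.e.\ the linear-programming computation of $\max_{\varphi\in\Phi(FS)}S(\varphi,\mathcal P_m)$ and the matching rule. That LP step, together with the rounding estimate $m\approx\sqrt n$, is the main obstacle.
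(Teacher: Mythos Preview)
Your proposal follows essentially the same route as the paper. Parts (i), (ii), and (iv) match almost exactly: for (i) the paper uses the FS inequality directly rather than the pointwise bound~(\ref{10}), but your argument via~(\ref{10}) is equally valid; for (ii) the paper also normalizes $x^{(n)}=1$ and leaves the one-variable calculus to the reader; for (iv) the paper uses precisely your adversarial family $\mathcal{P}_m$ (calling the spiky agents ``single-minded''), derives the same bound $\frac{nm}{m^2-m+n}$, and optimizes over integer $m\approx\sqrt n$ with the same $\tfrac{1}{2n}$ rounding correction.

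The one place where the paper is materially more careful than your sketch is the reduction in part (iii). You write ``reduce to $(t,\dots,t,1)$, then check the non-top shares are positive at the optimum.'' But checking positivity \emph{after} the reduction does not by itself rule out that the infimum of the surplus is approached through profiles where some non-top agents already have share zero (so the $\max\{\cdot,0\}$ is active and your formula for the surplus no longer applies). The paper closes this gap with a second case analysis: if agents $1,\dots,i^*-1$ have zero share, it first shows that pushing their utilities down to $0$ only lowers the surplus (because it shifts probability from the top agent to the middle agents), then equalizes the remaining $n-i^*$ non-top coordinates, redoes the one-variable optimization, and finally checks that the resulting minimum is \emph{increasing} in $i^*$, so the worst case is $i^*=1$. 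This is exactly the step you flagged as likely to ``demand real care,'' and it does; your ``convexity estimate built from~(\ref{10}) and~(\ref{13})'' is not quite the right tool here, since~(\ref{13}) plays no role in the paper's reduction.
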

Statements $iii)$ and $iv)$, together with Lemma~\ref{lm1}, make clear that the $PoF_{n}$ of the TH rule $\varphi ^{1}$\ is essentially the best  {$PoF%
	_{n}$} of
any fair prior-dependent rule.
\begin{proposition}[for bads]\label{prop5} \
	\begin{enumerate}
	\item \textit{The }$CR_{n}$\textit{ of Equal Split is unbounded
	(for any fixed }$n$\textit{) and that of the Proportional rule is }$n$.

\item \textit{ The CR}$_{n}$\textit{of the Bottom-Heavy rule }$\varphi ^{1}$\textit{ is such that}%
$$
\frac{n}{4}+\frac{5}{4}\geq CR_{n}(\varphi ^{1})\geq \frac{n}{4}+\frac{1}{2}%
+\frac{1}{4n}. 
$$
\textit{It is }$109\%$\textit{ for }$n=2$\textit{.}

\item \textit{ The smallest} $PoF_{n}$\textit{of a
	prior-dependent rule in }$\Phi(FS)$\textit{ is}%
$$
\inf_{\varphi \in \Phi(FS)}PoF_{n}(\varphi )=\frac{n}{4}+\frac{1}{2}+\frac{1}{4n}. 
$$
\textit{For }$n=2$\textit{ it is }$108\%$\textit{.}
\end{enumerate}
\end{proposition}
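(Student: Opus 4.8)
\emph{Plan.} By Lemma~\ref{lm1}, for an API rule $\varphi\in\Phi_{ind}(FS)$ dividing a bad one has $CR_n(\varphi)=PoF_n(\varphi)=\sup_{x}\frac{\sum_{i\in N}\varphi_i(x)x_i}{\min_i x_i}$; since this ratio is scale‑invariant I normalize $\min_i x_i=1$ and, by symmetry, order $x_1\le\dots\le x_n$. Statement $i)$ is then immediate: for Equal Split the ratio is $\frac{x_N}{n\min_i x_i}$, which tends to $+\infty$ along $x=\varepsilon e^{\{1\}}+e^{N\setminus\{1\}}$, $\varepsilon\downarrow 0$; for the Proportional rule the computation in the proof of Proposition~\ref{prop1} gives $\sum_i\varphi^{pro}_i(x)x_i=\widetilde x$, the harmonic mean, so the ratio equals $\widetilde x/\min_i x_i=\frac{n}{(\min_i x_i)\sum_j 1/x_j}\le n$, with equality approached along $x=e^{\{1\}}+Me^{N\setminus\{1\}}$, $M\to\infty$. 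The lower bound in $ii)$ follows by evaluating $\varphi^1$ at $x^\ast=e^{\{1\}}+\tfrac{n+1}{2}e^{N\setminus\{1\}}$: Definition~\ref{def9} gives $\widetilde t=1$, so agent $1$ receives $\tfrac{n+1}{2n}$ and the remaining $n-1$ agents split $\tfrac{n-1}{2n}$ equally, whence $\sum_i\varphi^1_i(x^\ast)x^\ast_i=\tfrac{n+1}{2n}+\tfrac{n+1}{2}\bigl(1-\tfrac{n+1}{2n}\bigr)=\tfrac{(n+1)^2}{4n}=\tfrac n4+\tfrac12+\tfrac1{4n}$.

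For the upper bound in $ii)$, write $P=\{i:x_i\le x^{(\widetilde t)}\}$ and $v=x^{(\widetilde t+1)}$. If $\widetilde t=0$ the rule is utilitarian and the ratio is $1$; for $\widetilde t\ge1$, combining the closed form $\varphi^1_i(x)=\tfrac{x_N-x_i}{n(n-1)x_i}$ on $P$ with the fact that the agents of $\sigma(x;\widetilde t+1)$, all of value $v$, share the remaining mass $1-\sum_{i\in P}\varphi^1_i(x)$, yields the identity
\[
\sum_{i\in N}\varphi^1_i(x)x_i=v+\sum_{i\in P}\frac{(x_N-x_i)(x_i-v)}{n(n-1)\,x_i},
\]
in which every summand is $\le0$ since $x_i\le v$ on $P$. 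Two reductions make this finite‑dimensional: (a) holding $P$ and $v$ fixed, the expression is non‑increasing in $x_N$ (because $\sum_{i\in P}\tfrac{v}{x_i}\ge|P|$), so one may lower every coordinate outside $P\cup\sigma(x;\widetilde t+1)$ down to $v$; (b) the feasibility inequality $\sum_{i\in P}\varphi^1_i(x)\le1$, already through its $i=1$ term $\tfrac{x_N-1}{n(n-1)}\le1$, forces $x_N\le n(n-1)+1$, hence (after (a)) $v\le n$ when $\widetilde t=1$ and a comparable bound otherwise. Optimizing the residual expression over $v$ gives at most $\tfrac{n-1}{4|P|}+\tfrac34+\tfrac1{4n}$, which is largest at $|P|=1$ — where the profile is $(1,v,\dots,v)$ and the expression reduces to $\tfrac{v(n+1-v)}{n}\le\tfrac{(n+1)^2}{4n}$ on $v\in[1,n]$ — so in particular $CR_n(\varphi^1)\le\tfrac n4+\tfrac54$, the extremal profile being $x^\ast$.

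For statement $iii)$, observe first that the optimal fair prior‑dependent rule $\varphi^{opt}$ — the element of $\Phi(FS)$ minimizing $S(\cdot,\mathcal P)$ for each prior — satisfies $\pi(\varphi^{opt},\mathcal P)\le\pi(\psi,\mathcal P)$ for every $\psi\in\Phi(FS)$ and every $\mathcal P$, hence $PoF_n(\varphi^{opt})=\inf_{\psi\in\Phi(FS)}PoF_n(\psi)$; so it suffices to compute $\sup_{\mathcal P\in\Pi_n}\tfrac{C^{FS}(\mathcal P)}{\mathbb{E}_\mu(\min_iX_i)}$, where $C^{FS}(\mathcal P):=\min_{\psi\in\Phi(FS)}S(\psi,\mathcal P)$. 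For the lower bound $\ge\tfrac n4+\tfrac12+\tfrac1{4n}$, take a prior that, up to a small correction restoring the normalization $\mathbb{E}_\mu(X_i)=1$, is uniform over the permutations of $x^\ast$; restricting (by symmetry) to rules parametrized by the low‑value agent's share and optimizing that parameter under FS shows every fair rule has cost at least $\bigl(\tfrac n4+\tfrac12+\tfrac1{4n}-o(1)\bigr)\mathbb{E}_\mu(\min_iX_i)$, and the correction vanishes in the limit. For the matching upper bound, the analysis of $ii)$ sharpens to the pointwise inequality $\sum_i\varphi^1_i(x)x_i\le\bigl(\tfrac n4+\tfrac12+\tfrac1{4n}\bigr)\min_ix_i$ (extremum at $x^\ast$); integrating it against any normalized $\mathcal P$ and using $\varphi^1\in\Phi(FS)$ gives $C^{FS}(\mathcal P)\le S(\varphi^1,\mathcal P)\le\bigl(\tfrac n4+\tfrac12+\tfrac1{4n}\bigr)\mathbb{E}_\mu(\min_iX_i)$.

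The main obstacle is the upper bound in $ii)$ and its sharpening in $iii)$: pinning down $\sup_x\tfrac{\sum_i\varphi^1_i(x)x_i}{\min_ix_i}$ means coping with the piecewise cutoff $\widetilde t$ in the definition of $\varphi^1$ and carrying out the constrained optimization above; reductions (a)–(b) cut it down to finitely many variables, but verifying that a single small coordinate against $n-1$ equal larger ones is globally extremal still requires real case analysis. The minimax step opening $iii)$ is routine, whereas isolating the worst‑case prior — and checking that the normalization correction is harmless in the limit — is the remaining delicate point.
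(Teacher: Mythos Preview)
Your treatment of statement $i)$ and the lower bound in $ii)$ is fine and essentially matches the paper (the paper derives the lower bound in $ii)$ from the lower bound in $iii)$ rather than by direct evaluation at a profile, but your direct evaluation at $x^\ast$ is valid).

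There is, however, a genuine gap in your lower bound for $iii)$. A prior that is uniform over the permutations of $x^\ast=(1,\tfrac{n+1}{2},\dots,\tfrac{n+1}{2})$ is (after rescaling) a \emph{symmetric} normalized problem, and in symmetric problems the Utilitarian rule already satisfies Fair Share (this is exactly the observation used in the proof of Lemma~\ref{lm1}). Concretely, a symmetric rule here is parametrized by the low-value agent's share $\lambda$, and Fair Share reduces to $\lambda\ge\tfrac1n$; since the cost $\lambda\cdot 1+(1-\lambda)\tfrac{n+1}{2}$ is \emph{decreasing} in $\lambda$, the minimizer is $\lambda=1$, which gives ratio $1$. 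So no lower bound above $1$ can come out of this construction, and a ``small correction'' to restore normalization does not help, because the problem is already normalizable by a pure rescaling. The paper instead uses a genuinely \emph{asymmetric} two-state prior: $\omega$ with profile $(\tfrac{4}{n+1},2,\dots,2)$ and $\omega'$ with profile $(\tfrac{2(n-1)}{n+1},0,\dots,0)$, each with probability $\tfrac12$. Here Fair Share for agent~$1$ forces $\psi_1(\omega)\le\tfrac{n+1}{2n}$, which in turn forces at least $\tfrac{n-1}{2n}$ of the bad onto high-disutility agents in state $\omega$, and the bound $\tfrac{(n+1)^2}{4n}$ follows.

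On the upper-bound side your approach also diverges from the paper. The paper proves $CR_n(\varphi^1)\le\tfrac n4+\tfrac54$ by first equalizing the coordinates inside $P$ (which can only increase the BH cost) and then optimizing in two scalar variables; for the upper bound in $iii)$ it does \emph{not} use $\varphi^1$ at all but, for each prior $\mathcal P$, constructs an explicit prior-dependent interpolation $y=\lambda x+\lambda'\tau$ between the utilitarian profile $x$ and a simplex point $\tau$ that restores Fair Share, and shows $y_N\le\tfrac{(n+1)^2}{4n}\,x_N$. Your route---prove the pointwise inequality $\sum_i\varphi^1_i(x)x_i\le\tfrac{(n+1)^2}{4n}\min_ix_i$ and integrate---would, if established, give both $CR_n(\varphi^1)=\tfrac{(n+1)^2}{4n}$ exactly (strictly stronger than the paper's $ii)$) and the upper bound in $iii)$ in one stroke. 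But the argument is incomplete: your reduction~(a) moves $x$ in a way that can increase $\widetilde t$, after which the identity you wrote for the BH cost no longer applies, and the bound ``$\le\tfrac{n-1}{4|P|}+\tfrac34+\tfrac1{4n}$'' for general $|P|$ is asserted without proof. So as it stands you have not established either the paper's weaker upper bound in $ii)$ or the upper bound in $iii)$.
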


\medskip
Again, the last two statements and Lemma~\ref{lm1} imply that the PoF$_{n}$ of the
BH rule $\varphi ^{1}$\ is essentially the best PoF$_{n}$ of any fair {prior-dependent} rule.

All three results (Lemma~\ref{lm1} and Propositions~\ref{prop5} and \ref{prop6}) are proved in Appendix~\ref{appB}.

\section{Asymptotic performance for standard distributions}\label{sec6}

We evaluate the performance of the TH,  BH, and 
Proportional rules in the benchmark setting where the number of agents is
large and their values are given by independent and identically distributed
(i.i.d.) random variables. 

 {We will see that in this setting the TH rules behave significantly better than under the worst-case assumption of Section~\ref{sec5}. In fact they keep a constant fraction of {the optimal social welfare} even for a large number of agents. The conclusion is almost the same for the BH rule, except for a certain subclass of distributions with support touching zero, for which the social cost can exceed the optimal social cost by a factor  of $O(\sqrt{n})$ (still much better than $O(n)$ in the worst case). The Proportional rule does much worse in several natural i.i.d. contexts detailed 
below.}

Fix a distribution ${\nu }\in \Delta (\mathbb{R}_{+})$ with unit mean and
assume that the vector $X=(X_{i})_{i=1,\ldots,n}$ of values is distributed
according to $\mu =\otimes _{i=1}^{n}\nu $; i.e., the values are independent
random variables with distribution $\nu $. The corresponding problem $%
\mathcal{P}_{n}(\nu )$ is normalized.

In Appendix~\ref{appC} we derive the somewhat cumbersome general formulas describing
the  {ratio} $\pi(\varphi ,\mathcal{P}_{n}(\nu ))$ when $n$ is large. Here we discuss examples and corollaries of
the general results.

\subsection{A good}

\subsubsection{Bounded support: $\protect\nu $ is the uniform distribution
	on $[0,1]$.}

In this case the TH rule $\varphi ^{1}$ and the Proportional rule $\varphi
^{pro}$ have similar  performances.

For $n=2$, the TH almost achieves the optimal welfare level. The Proportional
rule is $10\%$ behind: simple computations show that $\pi(\varphi ^{1},%
\mathcal{P}_{2}(\mathrm{uni}[0,1]))=\frac{8}{5+4\ln 2}\approx 1.03$ and $\pi(\varphi ^{pro},%
\mathcal{P}_{2}(\mathrm{uni}[0,1]))=\frac{2}{\ln2 -1}\approx 1.13$. Compare these numbers with
the worst-case guarantees from Proposition~\ref{prop4}: ${PoF_{2}(\varphi
	^{pro})}=\frac{\sqrt{2}+1}{2}\approx 1.21$ and ${PoF_{2}(\varphi
	^{1})}=\frac{2}{2\sqrt{2}-1}\approx 1.09$. We see that the Proportional rule
generates less social welfare for the uniform distribution than the TH rule for 
\textit{any} distribution.

For $n\rightarrow \infty $, Proposition~\ref{prop6} from Appendix~\ref{appC} and Lemma~\ref{lm3} below
imply that the  ratios for our two rules converge  {and the limit values are}
\[
\pi(\varphi ^{1},\mathcal{P}_{\infty }(\mathrm{uni}[0,1]))=\frac{1}{\frac{1}{16}+\ln
	2}\approx 1.32\ \ \mbox{and}\  \ \pi(\varphi ^{pro},\mathcal{P}_{\infty }(\mathrm{uni%
}[0,1]))= 1.5. 
\]

This result is in  sharp contrast with the worst-case behavior  {(Section~\ref{sec5}):}
there are problems $\mathcal{P}$ with $n$ agents such that the TH rule
generates only a $2/\sqrt{n}$ fraction of {the optimal social welfare.} Our next
result generalizes this observation.

\subsubsection{The TH rule keeps a positive fraction of the {optimal social welfare.}}

This holds in general, not just in the above example. Fix a distribution $%
\nu $ with mean $1$ and with non-zero average absolute deviation $D(\nu )=\int
|x-1|d\nu (x)$. Note that $D(\nu)$ is at most $2$. \smallskip

\begin{lemma}\label{lm2}
{If $\nu $ has mean $1$ and a finite moment  {$\int_{%
				\mathbb{R}_{+}}x^{\beta }d\nu (x)$} for some $\beta >2$, then the
		 ratio for the TH rule converges to a limit value that
		satisfies the following upper bound:
		\begin{equation}
		\pi(\varphi ^{1},\mathcal{P}_{\infty }(\nu))\leq \frac{2}{D}+\frac{4}{D^2}.
		\label{lowerbound}
		\end{equation}%
		If in addition $\nu $ has unbounded support, then}%
	\begin{equation}
	\pi(\varphi ^{1},\mathcal{P}_{\infty }(\nu))\geq \frac{1}{D}.  \label{upper bound}
	\end{equation}%
\end{lemma}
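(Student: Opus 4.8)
The plan is to compute the limiting value $\pi(\varphi^1,\mathcal{P}_\infty(\nu))$ by analysing separately the numerator $\mathbb{E}_\mu(\max_i X_i)$ and the denominator $S(\varphi^1,\mathcal{P}_n(\nu))$ as $n\to\infty$. First I would record the behaviour of the numerator: by a standard extreme-value estimate, if $\nu$ has a finite moment of order $\beta>2$ then $\mathbb{E}_\mu(\max_i X_i)=o(n)$ (indeed $o(n^{1/\beta})$ up to constants), and more to the point $\frac{1}{n}\mathbb{E}_\mu(\max_i X_i)\to 0$; this is why the ratio stabilizes. Then the whole game is to pin down $\lim_n \frac{1}{n}S(\varphi^1,\mathcal{P}_n(\nu))$, i.e. the per-capita social welfare of the Top-Heavy rule, and show it lies between $\frac{D}{2}$ and a quantity bounded below by $\big(\frac{2}{D}+\frac{4}{D^2}\big)^{-1}$ — reciprocals producing exactly the two displayed bounds.

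**The lower bound \eqref{upper bound}: $\pi(\varphi^1,\mathcal{P}_\infty(\nu))\ge 1/D$.** This reduces to showing $\limsup_n \frac1n S(\varphi^1,\mathcal{P}_n(\nu))\le D$. For each realization $x$, the Top-Heavy welfare is $\sum_i\varphi_i^1(x)x_i$. Using the explicit lower bound $\varphi_i^1(x)=\max\{\frac1n+\frac{1}{n-1}(1-\bar x/x_i),0\}$ for $i\notin\tau(x)$ and exactness on $\tau(x)$, I would bound the per-agent contribution. The cleanest route: since $\varphi_i^1(x)\le \frac1n+\frac{1}{n-1}(1-\bar x/x_i)$ for every $i$ with $x_i$ not maximal, and the top agents absorb the remaining mass, one gets $\sum_i\varphi_i^1(x)x_i \le \frac{x_N}{n} + \frac{1}{n-1}\sum_{i\notin\tau}(x_i-\bar x)\mathbf{1}\{x_i>\bar x\}\cdot(\text{something})$ — more carefully, I would split agents into those above and below the mean and use co-monotonicity. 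By the SLLN, $\bar x\to 1$ a.s., $\frac1n x_N\to 1$, and $\frac1n\sum_i(x_i-1)^+\to \frac{D}{2}$, so the excess welfare over the equal-split baseline of $1$ tends to at most $D$. Taking expectations and using uniform integrability (guaranteed by the moment assumption) gives $\limsup_n\frac1n S(\varphi^1,\mathcal{P}_n(\nu))\le D$, hence the ratio is $\ge 1/D$. The unbounded-support hypothesis is needed here so that $\tau(x)$ is a.s. a singleton and no mass is "wasted'' by forced equal-splitting among tied top agents.

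**The upper bound \eqref{lowerbound}.** Here I would instead lower-bound $\frac1n S(\varphi^1,\mathcal{P}_n(\nu))$. The subtle point is that $\varphi_i^1(x)$ equals the bound \eqref{10} only for agents outside $\tau(x)$ with positive truncated value, and the top agents get a possibly large residual share — but that residual is multiplied by the largest value $x^{(n)}$, which is favourable for a lower bound on welfare. Concretely, $\sum_i\varphi_i^1(x)x_i \ge \sum_{i\notin\tau}\big(\frac1n+\frac{1}{n-1}(1-\bar x/x_i)\big)^+ x_i + (\text{residual})\cdot x^{(n)}$, and since the residual is $1-\sum_{i\notin\tau}\varphi_i^1(x)\ge 0$, I can drop the top term's surplus over $x^{(n)}\ge \bar x$ and still be left with something $\ge \bar x + \frac{1}{n-1}\sum_{i\notin\tau}(x_i-\bar x)^+$ minus truncation corrections. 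Passing to the limit via SLLN, $\frac1n S\ge 1+\frac{D}{2}-(\text{corrections})$; bounding the corrections crudely by the tail contribution and using $D\le 2$, one obtains $\liminf_n\frac1n S(\varphi^1,\mathcal{P}_n(\nu))\ge \big(\frac{2}{D}+\frac{4}{D^2}\big)^{-1}$, i.e. $\frac{D^2}{2D+4}$. Combined with $\frac1n\mathbb{E}_\mu(\max_i X_i)\to 0$ this yields \eqref{lowerbound}.

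**Main obstacle.** The real difficulty is handling the truncation $(\cdot)^+$ in $\varphi_i^1$ together with the residual mass on the top agents: the rule is not simply ``$x\mapsto \bar x + \frac{1}{n-1}\sum(x_i-\bar x)$'' but clips negative shares at zero and redistributes, so the naive law-of-large-numbers heuristic overcounts. Controlling the clipped-off mass requires a uniform-integrability / concentration argument — precisely where the moment hypothesis $\int x^\beta d\nu<\infty$, $\beta>2$, and (for the lower bound on the ratio) the unbounded-support assumption enter. I expect the bulk of the work, deferred to Appendix~\ref{appC}, to be verifying that these error terms vanish after normalization by $n$, so that the per-capita welfare genuinely converges to the SLLN-predicted value $1+\frac{D}{2}$ up to the explicit slack captured in \eqref{lowerbound}–\eqref{upper bound}.
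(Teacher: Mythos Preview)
Your framework contains a basic scaling error that invalidates both arguments. The social welfare $S(\varphi^{1},\mathcal{P}_{n}(\nu))=\mathbb{E}_{\mu}\big[\sum_{i}\varphi_{i}^{1}(X)X_{i}\big]$ is a convex combination of the $X_{i}$'s (the shares satisfy $\varphi_{i}\ge 0$, $\sum_{i}\varphi_{i}=1$), so pointwise $\sum_{i}\varphi_{i}^{1}(X)X_{i}\le X^{(n)}$ and hence $S\le \mathbb{E}_{\mu}(X^{(n)})$. In particular $\tfrac{1}{n}S\to 0$, not to any positive constant. Your claimed inequalities $\limsup_{n}\tfrac{1}{n}S\le D$ and $\liminf_{n}\tfrac{1}{n}S\ge \tfrac{D^{2}}{2D+4}$ are, respectively, trivially true and false, and neither yields a bound on $\pi=\mathbb{E}_{\mu}(X^{(n)})/S$: both numerator and denominator, divided by $n$, tend to zero. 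The heuristic ``$\tfrac{1}{n}S\approx 1+\tfrac{D}{2}$'' comes from treating $S$ as an extensive sum over agents, which it is not.

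The paper proceeds quite differently. It first proves an exact asymptotic (Proposition~\ref{prop6}): writing $\sum_{i}\varphi_{i}^{1}(X)X_{i}=A+X^{(n)}-B$ with $A=\sum_{i}(2X_{i}-\tfrac{X_{N\setminus\{i\}}}{n-1})_{+}$ and $B=X^{(n)}\sum_{i}(\tfrac{2}{n}-\tfrac{X_{N\setminus\{i\}}}{n(n-1)X_{i}})_{+}$, one shows $\mathbb{E}_{\mu}A\to \mathbb{E}_{\nu}(2X_{1}-1)_{+}$ and $\mathbb{E}_{\mu}B\sim \mathbb{E}_{\mu}(X^{(n)})\cdot\mathbb{E}_{\nu}(2-\tfrac{1}{X_{1}})_{+}$, so that
\[
\frac{1}{\pi(\varphi^{1},\mathcal{P}_{n}(\nu))}\;\longrightarrow\;1-\mathbb{E}_{\nu}\Big(2-\tfrac{1}{X_{1}}\Big)_{+}+\frac{\mathbb{E}_{\nu}(2X_{1}-1)_{+}}{\lim_{n}\mathbb{E}_{\mu}(X^{(n)})}\;\ge\;\mathbb{E}_{\nu}\Big[X_{1}-\Big(2-\tfrac{1}{X_{1}}\Big)_{+}\Big].
\]
The unbounded-support hypothesis enters here, not to make $\tau(x)$ a singleton, but to send $\mathbb{E}_{\mu}(X^{(n)})\to\infty$ so the last term disappears and the inequality becomes an equality. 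Both bounds in the lemma are then elementary estimates on the single-variable quantity $\mathbb{E}_{\nu}[X_{1}-(2-1/X_{1})_{+}]$: the pointwise inequality $|x-1|\ge x-(2-1/x)_{+}$ gives $\pi\ge 1/D$, while restricting to $\{X_{1}\ge 1\}$ (where $X_{1}-(2-1/X_{1})_{+}=(X_{1}-1)^{2}/X_{1}$) and applying Cauchy--Schwarz to $\tfrac{D}{2}=\mathbb{E}_{\nu}[(X_{1}-1)\mathbf{1}_{\{X_{1}\ge 1\}}]$ yields $1/\pi\ge \tfrac{D^{2}}{2D+4}$. The moment assumption $\beta>2$ is used in Proposition~\ref{prop6} to control $\mathbb{E}_{\mu}(X^{(n)})^{2}$ in the decoupling step, not for any law-of-large-numbers uniform integrability as you suggest.
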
	
{The proof is in Appendix~\ref{appC}.} For instance, if $\nu $ is the exponential distribution we have%
\[
\pi(\varphi ^{1},\mathcal{P}_{\infty }(\mathrm{exp}))=\frac{1}{1-2e^{-\frac{1}{2}}-%
	\mathrm{Ei}(-1/2)}\approx 2.88, 
\]%
where $\mathrm{Ei}$ stands for a special function, the exponential integral.\footnote{$\mathrm{Ei}(x)=-\int_{-x}^\infty\frac{e^{-t}}{t}\,dt.$}
Contrast this with the situation for the Proportional rule.
\begin{lemma}\label{lm3}
{Under the assumptions of Lemma~\ref{lm2},} 
\[
\pi(\varphi ^{{pro}},\mathcal{P}_{n}(\nu ))=\frac{\mathbb{E}_{\mu }\left(\max_{i}X_{i}\right)}{\mathbb{E}_{\nu }(X_{1})^{2}}(1+o(1)),\quad \mbox{ as  }n\rightarrow
\infty, 
\]
\textit{(where }$a_{n}=o(1)$\textit{ means that }$a_{n}\rightarrow 0$%
\textit{, as }$n\rightarrow \infty $\textit{).}
\end{lemma}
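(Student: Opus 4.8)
The plan is to put $S(\varphi^{pro},\mathcal{P}_n(\nu))$ in closed form and show that it converges, as $n\to\infty$, to the second moment $\mathbb{E}_\nu(X_1^2)$ (the quantity written $\mathbb{E}_\nu(X_1)^2$ in the statement); since by Definition~\ref{def10} one has $\pi(\varphi^{pro},\mathcal{P})=\mathbb{E}_\mu(\max_i X_i)/S(\varphi^{pro},\mathcal{P})$ for a good, and since $\mathbb{E}_\nu(X_1^2)\ge(\mathbb{E}_\nu X_1)^2=1>0$, dividing $\mathbb{E}_\mu(\max_i X_i)$ by a denominator of the form $\mathbb{E}_\nu(X_1^2)(1+o(1))$ produces exactly $\frac{\mathbb{E}_\mu(\max_i X_i)}{\mathbb{E}_\nu(X_1^2)}(1+o(1))$, the multiplicative $(1+o(1))$ factor being unaffected by however $\mathbb{E}_\mu(\max_i X_i)$ varies with $n$. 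Since $\mathcal{P}_n(\nu)$ is normalized we may take $X^*=X$, and as $\varphi^{pro}_i(x)=x_i/x_N$ for a good we get $\sum_{i\in N}\varphi^{pro}_i(x)\,x_i=(\sum_i x_i^2)/(\sum_i x_i)$, hence
\[
S(\varphi^{pro},\mathcal{P}_n(\nu))=\mathbb{E}_\mu\!\left(\frac{\sum_{i=1}^n X_i^2}{\sum_{i=1}^n X_i}\right)=\mathbb{E}_\mu(Y_n),\qquad Y_n:=\frac{\tfrac1n\sum_{i=1}^n X_i^2}{\tfrac1n\sum_{i=1}^n X_i}.
\]

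The core of the argument is then $\mathbb{E}_\mu(Y_n)\to\mathbb{E}_\nu(X_1^2)=:m_2$. As $\beta>2$ we have $m_2<\infty$, so by the strong law of large numbers $\tfrac1n\sum_i X_i^2\to m_2$ and $\tfrac1n\sum_i X_i\to1$ almost surely, whence $Y_n\to m_2$ a.s. To upgrade this to convergence of expectations I would write $S_n=\sum_{i=1}^n X_i$ and split $\mathbb{E}_\mu(Y_n)=\mathbb{E}_\mu(Y_n\mathbf{1}_{S_n\ge n/2})+\mathbb{E}_\mu(Y_n\mathbf{1}_{S_n<n/2})$. On $\{S_n\ge n/2\}$ one has $Y_n\le\tfrac2n\sum_i X_i^2$, an average of i.i.d.\ integrable random variables; such averages converge in $L^1$ (e.g.\ by Scheff\'e's lemma, the summands being non-negative) and are therefore uniformly integrable, so $\{Y_n\mathbf{1}_{S_n\ge n/2}\}$ is uniformly integrable, and combined with the a.s.\ convergence this gives $\mathbb{E}_\mu(Y_n\mathbf{1}_{S_n\ge n/2})\to m_2$.

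The remaining term $\mathbb{E}_\mu(Y_n\mathbf{1}_{S_n<n/2})$ is the only delicate point, and it is exactly where the hypothesis $\beta>2$ is consumed: on the atypical event $\{S_n<n/2\}$ the Proportional weights $X_i/S_n$ can concentrate on one unusually large coordinate, so the crude bound $Y_n\le\max_i X_i$ is not integrable uniformly in $n$. I would instead use $Y_n\le\max_i X_i\le(\sum_i X_i^\beta)^{1/\beta}$ together with H\"older's inequality with exponents $\beta$ and $\beta/(\beta-1)$ to obtain
\[
\mathbb{E}_\mu(Y_n\mathbf{1}_{S_n<n/2})\ \le\ \bigl(n\,\mathbb{E}_\nu(X_1^\beta)\bigr)^{1/\beta}\,\mathbb{P}_\mu(S_n<n/2)^{\,1-1/\beta},
\]
and then bound $\mathbb{P}_\mu(S_n<n/2)\le\mathbb{P}_\mu(|S_n-n|>n/2)\le 4\,\mathrm{Var}_\nu(X_1)/n$ by Chebyshev; the right-hand side is then $O\!\bigl(n^{1/\beta}\cdot n^{-(1-1/\beta)}\bigr)=O(n^{2/\beta-1})\to0$ precisely because $\beta>2$ (for $\beta=2$ the same estimate yields only $O(1)$, consistent with the need for a strictly super-quadratic moment).

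Adding the two pieces gives $S(\varphi^{pro},\mathcal{P}_n(\nu))=\mathbb{E}_\mu(Y_n)=m_2(1+o(1))=\mathbb{E}_\nu(X_1^2)(1+o(1))$, and dividing $\mathbb{E}_\mu(\max_i X_i)$ by this yields the claimed identity. I expect the a.s.\ convergence and the bulk term to be routine (the strong law plus a standard uniform-integrability step); the genuine obstacle is the control of the small-denominator event $\{S_n<n/2\}$, which is what forces the $\beta>2$ assumption.
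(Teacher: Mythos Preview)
Your proof is correct and follows the same route as the paper, which also reduces the claim to showing $S(\varphi^{pro},\mathcal{P}_n(\nu))\to\mathbb{E}_\nu(X_1^2)$ via the law of large numbers. The paper, however, records only the one-line heuristic $n\,\mathbb{E}_\mu\!\left(\frac{X_1^2}{\sum_i X_i}\right)\to\mathbb{E}_\nu(X_1^2)$ without justifying the passage from almost-sure convergence to convergence of expectations; your splitting into $\{S_n\ge n/2\}$ and $\{S_n<n/2\}$, with uniform integrability on the bulk and the H\"older--Chebyshev tail bound on the complement, supplies exactly that missing justification and makes explicit where the hypothesis $\beta>2$ is used.
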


Indeed, by the law of large numbers,
\[
S (\varphi ^{{pro}},\mathcal{P}_{n}(\nu ))=\mathbb{E}_{\mu }\left(\sum_{i\in
	N}X_{i}\varphi _{i}^{{pro}}(X)\right)=n\cdot \mathbb{E}_{\mu }\left(\frac{(X_{1})^{2}}{%
	\sum_{i\in N}X_{i}}\right)\longrightarrow \mathbb{E}_{\mu }\left(\frac{(X_{1})^{2}}{\mathbb{E}%
	_{\mu }X_{1}}\right)=\mathbb{E}_{\nu }(X_{1})^{2}. 
\]

Lemma~\ref{lm3} implies that $\pi(\varphi ^{{pro}},\mathcal{P}_{\infty }(\nu ))$ tends
to $+\infty$ if $\nu $ has unbounded support, because $\mathbb{E}_{\mu
}\max_{i}X_{i}$ tends to infinity. For instance, $\pi(\varphi ^{pro},\mathcal{P}%
_{n}(\mathrm{exp}))=\frac{\ln n}{2}(1+o(1))$.

Of course, this limit is finite if the support of $\nu $ is bounded.

\subsection{A bad}

When a bad is divided, the performance of the BH and Proportional rules
is determined by the behavior of the distribution at the left-most point of
the support. Both rules generate a bounded multiple of {the optimal social cost}
 when $0$ does not belong to the support of $\nu $; the BH rule also does
 well when $\nu $ has a non-zero density at $0$. However, both rules have
poor performance if the support touches $0$ but $\nu $ has not enough
``weight'' near $0$. Here we give three examples to illustrate the general
asymptotic results of Appendix~\ref{appC}.

\subsubsection{The support does not touch zero: $\protect\nu $ is uniform on 
	$[\frac{1}{2},\frac{3}{2}]$.}

By Proposition~\ref{prop7} in Appendix~\ref{appC}, the ratios for the BH and
 Proportional rules converge to limit values that are pretty close to
each other:%
\[
\pi\left(\varphi^{1},\mathcal{P}_{\infty }\left(\mathrm{uni}\left[\frac{1}{2},\frac{3}{2}\right]\right)\right)=%
{e-1}\approx 1.72\ \ \mbox{and}\ \ \pi\left(\varphi ^{pro},\mathcal{P}_{\infty
}\left(\mathrm{uni}\left[\frac{1}{2},\frac{3}{2}\right]\right)\right)=\frac{2}{\ln 3}\approx 1.82. 
\]

\subsubsection{The support touches zero but there is not enough weight
	around it: $\protect\nu $ has density $\frac{3}{4}x(2-x)$ on $[0,2]$.}

For this distribution, {the optimal social cost} tends to zero while the
losses of the BH and  Proportional rules remain positive. Proposition~\ref{prop7}
shows that the  ratios for both rules tend to infinity at the
speed of $\sqrt{n}$, while the ratio for the BH rule remains  $\frac{1}{\sqrt{3}}%
\approx 0.58$ times lower than that for the Proportional one:%
\[
\pi(\varphi ^{1},\mathcal{P}_{n}(\nu ))=\frac{2}{3\sqrt{\pi }}{\sqrt{n}%
}(1+o(1))= \pi(\varphi ^{pro},\mathcal{P}_{n}(\nu ))\frac{1}{\sqrt{3}}(1+o(1)). 
\]

\subsubsection{The distribution has non-zero density at $0$ (e.g., $\protect%
	\nu $ is uniform on $[0,2]$).}

In this case, the BH rule outperforms the Proportional one in the limit.

\begin{lemma}\label{lm4}
{Assume that the distribution $\nu $ has a
	continuous density $f$ on an interval $[0,a]$ and $f(0)>0$. Then $\pi(\varphi
	^{1},\mathcal{P}_{n}(\nu ))$ converges to a finite limit as $n$ becomes
	large, whereas $\pi(\varphi ^{pro},\mathcal{P}_{n}(\nu ))=\Omega\left( \frac{n}{%
		\ln (n)}\right) $ as\footnote{%
		Recall that $a_{n}=\Omega(b_{n})$ if there exist $n_{0}$ and $C>0$ such that $%
		|a_{n}|\geq C|b_{n}|$ for all $n\geq n_{0}$.} $n\rightarrow \infty $.}
\end{lemma}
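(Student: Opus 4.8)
\textbf{The plan is to} compute the denominator $\mathbb{E}_\mu(\min_i X_i)$ and both numerators $S(\varphi^1,\mathcal{P}_n(\nu))$, $S(\varphi^{pro},\mathcal{P}_n(\nu))$ to leading order as $n\to\infty$ under the density assumption $f(0)>0$. Since $\nu$ has a continuous density with $f(0)>0$, the order statistic $X^{(1)}=\min_i X_i$ is concentrated near $0$ at scale $1/n$: rescaling, $nX^{(1)}$ converges in distribution to an exponential with rate $f(0)$, so $\mathbb{E}_\mu(\min_i X_i)=\frac{1}{f(0)n}(1+o(1))$. This is the common denominator of both $\pi$'s, and the whole lemma reduces to estimating the two expected social costs.

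\textbf{First I would} handle the Proportional rule. By the law of large numbers argument already used for goods after Lemma~\ref{lm3}, $S(\varphi^{pro},\mathcal{P}_n(\nu))=n\,\mathbb{E}_\mu\!\left(\frac{1/X_1}{\sum_{j}1/X_j}\right)=\mathbb{E}_\mu\!\left(\frac{n}{X_1\sum_j 1/X_j}\right)$; but unlike the goods case, $\mathbb{E}_\nu(1/X_1)=\int_0^a \frac{f(x)}{x}\,dx=+\infty$ because $f(0)>0$, so the naive LLN limit is useless. Instead I would show $S(\varphi^{pro},\mathcal{P}_n(\nu))$ stays bounded \emph{below} by a positive constant (each agent's expected cost is $\mathbb{E}_\mu\big(\frac{1}{\sum_j 1/X_j}\big)$, i.e.\ $\frac{1}{n}\mathbb{E}_\mu(\widetilde X)$ with $\widetilde X$ the harmonic mean, and the harmonic mean of $n$ i.i.d.\ draws from a density positive at $0$ behaves like $\frac{c}{\ln n}$ — roughly, $\frac1n\sum 1/X_j\sim f(0)\ln n$ with high probability, driven by the $\Theta(n)$ smallest values in an interval of length $\Theta(1)$ near $0$). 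This gives $S(\varphi^{pro},\mathcal{P}_n(\nu))=\Theta\!\big(\frac{1}{\ln n}\big)$, hence $\pi(\varphi^{pro},\mathcal{P}_n(\nu))=\Theta\!\big(\frac{n}{\ln n}\big)$, which is the claimed $\Omega(n/\ln n)$ lower bound. The careful part here is making the "harmonic-mean $\sim c/\ln n$" heuristic rigorous — splitting $\sum_j 1/X_j$ into the contribution of $X_j$ in a shrinking neighborhood of $0$ versus the rest and applying a concentration/Borel--Cantelli or second-moment argument.

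\textbf{Then I would} handle the BH rule and show $\pi(\varphi^1,\mathcal{P}_n(\nu))$ stays bounded. From Definition~\ref{def9}, each agent's share is at most $\frac{1}{n(n-1)}\frac{x_{N\setminus\{i\}}}{x_i}\le\frac{1}{n(n-1)}\frac{x_N}{x_i}$, and by the argument of statement~$i)$ of Theorem~\ref{th2} (reversed domination for bads) the expected cost of the BH rule is at most that of any fair rule, in particular at most that of Equal Split $=\frac{1}{n}\mathbb{E}_\nu(X_1)$, so $S(\varphi^1,\mathcal{P}_n(\nu))\le\mathbb{E}_\nu(X_1)=O(1)$. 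Combined with $\mathbb{E}_\mu(\min_i X_i)=\Theta(1/n)$ this only gives $\pi(\varphi^1)=O(n)$, which is not enough; I need the matching upper bound. The sharp estimate comes from the general asymptotic formula of Appendix~\ref{appC} (Proposition~\ref{prop7}), or directly: on the event that the bottom $\Theta(n)$ values lie in a neighborhood of $0$ where $f$ is bounded away from $0$ and $\infty$, the BH rule concentrates nearly all weight on agents with $x_i=\Theta(1/n)\cdot(\text{small constant})$, so $S(\varphi^1,\mathcal{P}_n(\nu))=\Theta(1/n)$, the \emph{same} order as the optimal cost $\mathbb{E}_\mu(\min_i X_i)$, giving $\pi(\varphi^1,\mathcal{P}_n(\nu))=O(1)$. \textbf{The main obstacle} is this last step: quantifying how the BH threshold $\widetilde t$ and the resulting weights on the smallest-disutility agents behave when the disutility distribution has a hard edge at $0$ with positive density, i.e.\ showing that the denominator $n(n-1)$ in~\eqref{18} is exactly compensated by the $\Theta(n^2)$ reciprocal-type terms $x_{N\setminus\{i\}}/x_i$ summed over the bottom agents, so that the total cost — and not merely one agent's cost — is of the optimal order $1/n$. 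This is precisely the content of the Appendix~\ref{appC} computation, which I would invoke.
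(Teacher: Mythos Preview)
Your treatment of the Proportional rule is essentially the paper's own: write $S(\varphi^{pro},\mathcal{P}_n(\nu))=n\,\mathbb{E}_\mu\big(1/\sum_j 1/X_j\big)$, then use that $\sum_j 1/X_j$ is dominated by the contribution of the small order statistics, which via $\mathbb{E}[X^{(k)}]\sim k/(f(0)n)$ behaves like $f(0)n\sum_k 1/k\sim f(0)n\ln n$, so $S(\varphi^{pro})\asymp 1/\ln n$ and $\pi(\varphi^{pro})\asymp n/\ln n$.

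For the BH rule there is a genuine gap. You propose to invoke Proposition~\ref{prop7}, but that proposition assumes $\mathbb{E}_\nu(1/X_1)<\infty$, and this hypothesis \emph{fails} precisely when $f(0)>0$: then $\int_0^a f(x)/x\,dx$ diverges at $0$. So Proposition~\ref{prop7} is unavailable here, and your fallback ``direct'' sketch is too vague --- in particular the phrase ``the bottom $\Theta(n)$ values'' points in the wrong direction.

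The paper's argument is sharper and rests on a single observation you do not state: the threshold index $\widetilde t$ in Definition~\ref{def9} stays \emph{bounded} as $n\to\infty$. Indeed, with $x_N\approx n$ and $x^{(k)}\approx k/(f(0)n)$ one has
\[
\frac{1}{n(n-1)}\sum_{k=1}^{t}\frac{x_{N\setminus\{k\}}}{x^{(k)}}\;\approx\;\frac{1}{n(n-1)}\sum_{k=1}^{t}\frac{f(0)\,n^{2}}{k}\;\approx\;f(0)\sum_{k=1}^{t}\frac{1}{k},
\]
which exceeds $1$ once $t$ reaches a value depending only on $f(0)$, not on $n$. Hence only $O(1)$ agents --- those with $X_i=\Theta(1/n)$ --- receive any share; each such agent contributes a cost $\varphi_i^1(X)X_i=\frac{x_{N\setminus\{i\}}}{n(n-1)}\approx\frac{1}{n}$ (or less, for the last one), so $S(\varphi^1,\mathcal{P}_n(\nu))=\Theta(1/n)$, matching $\mathbb{E}_\mu(\min_i X_i)$ and giving a finite limiting ratio. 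Replace your appeal to Proposition~\ref{prop7} with this bounded-$\widetilde t$ argument and the proof goes through.
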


A similar result for the case where the density is infinite at $x=0$ is the
subject of Lemma~\ref{lm5} in Appendix~\ref{appC}.

The statement about the BH rule follows from the asymptotic result for the
order statistic: the expected values of {$X^{(k)}$} for small numbers $k$
are equal to $\frac{k}{f(0)\cdot n}(1+o(1))$ as\footnote{%
	The order statistic $X^{(k)}$ has the same distribution as {$F^{-1}(Y^{(k)})$,}
	where $F$ is the distribution function of $\nu $ and $Y_{i}$, $i\in N$, are
	independent random variables uniformly distributed on $[0,1]$. By symmetry, $\mathbb{%
		E}(Y^{(k)})=\frac{k}{n+1}$.} $n\rightarrow \infty $. %
 Therefore, on average, only a bounded number of
agents with the smallest $X_{i}$ receive a non-zero portion of a bad, which
implies that the  ratio is bounded away from infinity.

For the Proportional rule, we have $S(\varphi^{pro},\mathcal{P}%
_n(\nu))=n\cdot\mathbb{E}\left(\frac{1}{\sum_k \frac{1}{X^{(k)}}}\right)$. For large $n$,
we can estimate the denominator from below by the harmonic series; taking
into account that $\mathbb{E} (X^{(1)})=\frac{1}{f(0)\cdot n}(1+o(1))$ we get the
desired asymptotic formula.

\section{Extensions}\label{sec7}
\subsection*{Envy-Freeness}
An alternative, much more demanding interpretation of fairness in our model
is (ex~ante) \textit{Envy-Freeness}, which means, in the case of a good:%
\[
\mathbb{E}_{\mu }(\varphi _{i}^\mu(X^*)\cdot X_{i}^*)\geq \mathbb{E}_{\mu
}(\varphi _{j}^\mu(X^*)\cdot X_{i}^*) \ \mbox{ for all } \ i,j \ \mbox{ and } \ \mathcal{P}=(N,\mu,X).
\]
Fixing $i$ and summing up the $n$ inequalities given above  when $j$ covers $N$ 
(including $j=i$), we see that Envy-Freeness implies Fair Share.

The critical Proposition~\ref{prop2} can be adapted as follows. Set $g(x)=(\varphi
_{1}(x)-\varphi _{2}(x))\cdot x_{1}$ so that Envy-Freeness  {for a symmetric {API} rule is equivalent to} $\mathbb{E}%
_{\mu }(g(X))\geq g(e^{N})=0$ whenever $\mathbb{E}_{\mu }(X)=e^{N}$, and
deduce in the same way that there is a vector $\beta \in 
\mathbb{R}
^{n}$ such that $(\varphi _{1}(x)-\varphi _{2}(x))\cdot x_{1}\geq \beta
\cdot (x-e^{N})$ for all $x$. By the symmetry of $\varphi $ and $\varphi (x)\in
\Delta (N)$, it is immediate that there exists $\theta \geq 0$ such that, for
any $x$ with weakly increasing coordinates,
\[
\theta \left(1-\frac{x_{i-1}}{x_{i}}\right)\leq \varphi _{i}(x)-\varphi _{i-1}(x)\leq
\theta \left(\frac{x_{i}}{x_{i-1}}-1\right)\mbox{ for all }i=1,\ldots,n. 
\]%
Applying this when $x_{i}$ is a geometric sequence with a large exponent
gives $\theta \leq \frac{2}{n(n-1)}$, and by choosing $\theta ^{\ast }=\frac{%
	2}{n(n-1)}$ and defining $\varphi $ appropriately, we guarantee {the PoF}
 {of the order of $n$, comparable} to the  {minimal}
Price of Envy-Freeness for  prior-dependent rules; see \cite{Caragianis2009}.

{For a bad, Envy-Freeness is defined with the opposite sign in the inequality.} Similarly, we have that if the coordinates of $x$ are weakly
increasing, an Envy-Free rule $\varphi $ is such that%
\[
\theta \left(1-\frac{x_{i-1}}{x_{i}}\right)\leq \varphi _{i-1}(x)-\varphi _{i}(x)\leq
\theta \left(\frac{x_{i}}{x_{i-1}}-1\right)\mbox{ for all }i=1,\ldots,n, 
\]%
where again the parameter $\theta $ is at most $\frac{2}{n(n-1)}$. However, this
time the  performance of such a rule is fairly poor, as one can
see with $\theta ^{\ast }=\frac{2}{n(n-1)}$ and the disutility profile $%
x_{i}=2^{i-1}$ for all $i$. The most efficient profile of  {shares} is then $%
\varphi _{i}(x)=(n-i)\theta ^{\ast }$ and the ratio $\frac{1}{x_{1}}%
(\sum_{1}^{n}\varphi _{i}(x)x_{i})$ is then in the order of $\frac{2^{n}}{%
	n^{2}}$.
\subsection*{Asymmetric  {ownership rights}}
If the agents are endowed with unequal ownership rights on the object,
captured by the shares $\lambda \in \Delta (N)$, it is natural to adapt Fair
Share as follows (for goods): {$\mathbb{E}_{\mu }(\varphi _{i}(X^*)\cdot
X_{i}^*)\geq \lambda _{i}$} for all $i$. We can again
adapt the argument in Proposition~\ref{prop2} to characterize this constraint by the
existence, for each $i$, of a linear form lower-bounding the function $%
x\rightarrow \varphi _{i}(x)\cdot x_{i}$. But  {we cannot use arguments based on symmetry to reduce the number of free parameters} and the characterization of the undominated fair
rules is much more difficult.
 { \subsection*{Mixture of goods and bads}
	The case of objects with utility of a random sign is interesting but
	difficult: we cannot apply our technique when expected utilities can be
	zero; even if this case is ruled out, when realized utilities are both
	positive and negative, the rule must for efficiency divide the object
	between positive utility agents only and this random change in the size of
	the recipients throws off our computations, starting with the Proportional
	rule (Proposition~\ref{prop1}) and the key Propositions~\ref{prop2} and~\ref{prop3}.}

\section{Conclusion}\label{sec8}

We initiate the discussion of fair division problems, where the manager has limited access to statistical information about  the realized utilities (or disutilities). Such limitations are the major concerns in literatures on robust mechanism design and online algorithms, but as far as we know, they have never been discussed in the field of fair division. 
	
We discuss a prototypical fair division problem with just one random object to divide.  
The setting proved to be quite rich and at the same time tractable enough for the explicit description of the best rules: the entirely new families of the Top-Heavy and Bottom-Heavy rules. By contrast, the literatures on robust mechanism design and online algorithms typically find a certain approximation to the best rules, and, in the rare cases where the best rules are described, the best rules turn out to be previously known ones. 

Having the best rules in hand allowed us to push the analysis further and compute the exact values of the Competitive Ratio and the Price of Fairness. Then we found that, surprisingly, the risk-averse manager knowing the first moments of the underlying distribution can do almost as well as the manager having detailed statistical information. We do not have an intuitive explanation of this effect. Understanding it in greater depth and describing environments that exhibit similar phenomena would be a challenging avenue for future research.    

The paper suggests many concrete theoretical open questions, e.g., the extension of the results to the setting with many random objects delivered at once, and the other questions touched on in Section~\ref{sec7}.

\bibliographystyle{nonumber}

\appendix

\section{Proofs for Section~\ref{sec4}}\label{appA}

\subsection{Proof of Proposition~\ref{prop3}}

\begin{proof}[{The }``\textbf{if}'' statement. ]  The proof is the same as the ``if'' statement in
Proposition~\ref{prop2} for goods, after reversing the inequalities.

\smallskip
\noindent\textit{{The }``\textbf{only if}'' statement. } The proof is similar to the ``only if'' statement in Proposition~\ref{prop2}. Fix an  {API} rule $%
\varphi $ satisfying FS and define $f(x)=\varphi _{1}(x)\cdot x_{1}$; by
symmetry, $f(e^{N})=\frac{1}{n}$. For any coefficients $\mu \in \Delta
(K)$ and convex combination $\sum_{k=1}^{K}\mu _{k}y^{k}=e^{N}$ in $%
\mathbb{R}
_{+}^{N}$, we apply FS to the normalized problem in which $X=y^{k}$ with
probability $\mu _{k}$ and obtain $\sum_{k=1}^{K}\mu _{k}f(y^{k})\leq
f(e^{N})$. Therefore the concavification $g$ of $f$ coincides with $f$ at $%
e^{N}$, and there is some $\alpha \in 
\mathbb{R}
^{N}$ supporting its graph at $(e^{N},g(e^{N}))$, i.e.,%
\[
\varphi _{1}(x)\cdot x_{1}\leq \alpha \cdot (x-e^{N})+\frac{1}{n}%
\mbox{ for
	all }x\in 
\mathbb{R}
_{+}^{N}. 
\]%
The same symmetry arguments show that $\alpha $\ takes the form $\alpha
=(\alpha _{1},\beta ,\beta ,\ldots ,\beta )$ and $\alpha \cdot e^{N}=\frac{1%
}{n}$. This time the inequality $0\leq \varphi _{1}(x)\cdot x_{1}\leq \alpha
_{1}x_{1}+\beta x_{N\diagdown \{1\}}$ implies $\alpha \geq 0$. Setting $\delta
=n\beta $ and rearranging we get:%
\[
\varphi _{i}(x)\leq \frac{1}{n}+\delta \left(\frac{\overline{x}}{x_{i}}-1\right)%
\mbox{
	for all }i\in N\mbox{ and }x\in 
\mathbb{R}
_{+}^{N}.
\]

 {It remains to find the bound on $\delta$.} Because $\frac{\overline{x}}{x_{i}}\geq \frac{1}{n}$, the inequality $\varphi
_{i}(x)\geq 0$ holds everywhere if it holds at $x=e^{\{i\}}$, where it
implies the bound $\delta \leq \frac{1}{n-1}$. Then the change of parameters 
$\theta =(n-1)\delta $ implies the desired inequality~(\ref{22}). 
\end{proof}

\subsection{Proof of Theorem~\ref{th2}}
\begin{proof}[Step 1.]
First we define the whole family of Bottom-Heavy rules for $\theta\in[0,1]$ by
	\begin{equation}
	\varphi _{i}^{\theta }(x)=\left\{
	\begin{array}{cc}
	{\displaystyle    \frac{1}{n}+\frac{\theta }{n-1}\left(\frac{\overline{x}%
		}{x_{i}}-1\right),} &   i: \ x_{i}\leq x^{(\widetilde{t})} \\
	{\displaystyle \frac{1}{|\sigma (x;\widetilde{t}+1)|}\left(%
		1-\sum_{i:x_{i}\leq x^{(\widetilde{t})}}\varphi _{i}^{\theta }(x)\right),} & i\in\sigma (x;\widetilde{t}+1)\\
	0, & \mbox{otherwise}
	\end{array}
	\right.,
	\end{equation}
	{where $\widetilde{t}$ is the maximal $t=0,1,2,\dots,n$ such that $\sum_{i:\,x_{i}\leq x^{(t)}}\large \left(\frac{1}{n}+\frac{\theta 
		}{n-1}\left(\frac{\overline{x}}{x_{i}}-1\right)\right)\leq 1$.} The definition is correct since $ \frac{1}{n}+\frac{\theta }{n-1}\left(\frac{\overline{x}%
	}{x_{i}}-1\right)$ is always non-negative and $\sum_{i\in N} \frac{1}{n}+\frac{\theta }{n-1}\left(\frac{\overline{x}%
	}{x_{i}}-1\right)\geq 1$ with strict inequality for $x$ not parallel to $e^N$ and $\theta\ne 0$.

\smallskip
\noindent\textit{Step 2.} Next we prove that if the  {API}
rule $\varphi $ satisfies inequalities (\ref{22}) for some $\theta,\ 0\leq \theta
\leq 1$, then $\varphi ^{\theta }$ dominates $\varphi $ or equals $\varphi $%
. In Step 3 we show that $\varphi ^{1}$ dominates $\varphi ^{\theta }$ if $%
\theta <1$.

First, for $\theta =0$, inequalities (\ref{22}) imply that $\varphi $ itself
is Equal Split, i.e., $\varphi ^{0}$. From now on, we assume that $\theta >0$.

Along the ray through $e^{N}$ the rules $\varphi $ and $\varphi ^{\theta }$
coincide by symmetry. Now we fix $x\in 
\mathbb{R}
_{+}^{N}$ not parallel to $e^{N}$ and let $\widetilde{t}$ be defined as above. From (\ref{22}) we get $\varphi _{i}(x)\leq \varphi _{i}^{\theta }(x)$
for all $i$ such that $x_{i}\leq x^{(\widetilde{t})}$. Thus%
\begin{equation}
\sum_{i:\,x_{i}\leq x^{(\widetilde{t})}}(\varphi _{i}(x)-\varphi
_{i}^{\theta }(x))x_{i}\geq \sum_{i:\,x_{i}\leq x^{(\widetilde{t})%
}}(\varphi _{i}(x)-\varphi _{i}^{\theta }(x))x^{(\widetilde{t}+1)}.
\label{19}
\end{equation}%
Next we have $\sum_{i:\,x_{i}\geq x^{(\widetilde{t}+1)}}\varphi
_{i}^{\theta }(x)x_{i}=\sum_{i:\,x_{i}\geq x^{(\widetilde{t}+1)}}\varphi
_{i}^{\theta }(x)x^{(\widetilde{t}+1)}$ because $\varphi _{i}^{\theta
}(x)=0$ if $x_{i}>x^{(\widetilde{t}+1)}$. Thus%
\begin{equation}
\sum_{i:\,x_{i}\geq x^{(\widetilde{t}+1)}}(\varphi _{i}(x)-\varphi
_{i}^{\theta }(x))x_{i}\geq \sum_{i:\,x_{i}\geq x^{(\widetilde{t}%
		+1)}}(\varphi _{i}(x)-\varphi _{i}^{\theta }(x))x^{(\widetilde{t}+1)}.
\label{20}
\end{equation}%
Summing up these two inequalities gives the corresponding weak inequality (%
\ref{17}).

Assume finally that all inequalities (\ref{17}) are equalities. If at least
one $x_{i}$ is zero, (\ref{22}) implies that $\varphi (x)$ does not put any
weight outside $\sigma (x,1)$, and so $\varphi (x)=\varphi ^{\theta }(x)$. If
each $x_{i}$ is strictly positive, our assumption implies that (\ref{19}) is
an equality; but the definition of $\widetilde{t}$ implies $x^{(\widetilde{t})}<x^{(\widetilde{t}+1)}$. Therefore $\varphi _{i}(x)=\varphi _{i}^{\theta }(x)$ as
long as $x_{i}\leq x^{(\widetilde{t})}$. Now (\ref{20}) cannot be an
equality if $\varphi (x)$ puts any weight on agents with disutilities greater
than $x^{(\widetilde{t}+1)}$, and we conclude that $\varphi (x)=\varphi
^{\theta }(x)$ by the symmetry of $\varphi $.\smallskip

\smallskip
\noindent\textit{Step 3.}We show that $\varphi ^{\theta ^{+}}$ dominates $%
\varphi ^{\theta ^{-}}$ if $\theta ^{+}>\theta ^{-}>0$. We write these rules as
$\varphi ^{+}$ and $\varphi ^{-}$ for simplicity, and fix $x\in 
\mathbb{R}
_{+}^{N}$. For $\varepsilon =+,-$, denote $\widetilde{t}$ for $\varphi ^{\varepsilon }(x)$ by 
$t^{\varepsilon }$.

We use the notation%
\[
\delta _{i}=\frac{1}{n-1}\left(\frac{\overline{x}}{x_{i}}-1\right)\mbox{ and }\psi
_{i}^{\varepsilon }=\frac{1}{n}+\theta ^{\varepsilon }\delta _{i}.
\]

We prove inequality (\ref{17}) between $\varphi ^{+}$ and $\varphi ^{-}$ for
a vector $x$ with no two equal coordinates. This will be enough because each
mapping $\varphi ^{\theta }$ is only discontinuous at $x$ if $|\sigma (x,%
\widetilde{t}+1)|>1$, and the total disutility $\sum_{i\in N}\varphi _{i}^{\theta
}(x)x_{i}$ is continuous at such points.

Finally we label the coordinates of $x$ increasingly, so that $x_{i}=x^{(i)}$ for all $i$, and the definition of $\varphi ^{\varepsilon }(x)$ is
notationally simpler: $\varphi _{i}^{\varepsilon }(x)=\psi _{i}^{\varepsilon
}>0$ for $1\leq i\leq t^{\varepsilon }$; $0\leq \varphi _{t^{\varepsilon
	}+1}^{\varepsilon }(x)<\psi _{t^{\varepsilon }+1}^{\varepsilon }$ ; $\varphi
_{j}^{\varepsilon }(x)=0$ for $j>t^{\varepsilon }+1$.

We claim first that  $t^{+}\leq t^{-}$, and if $t^{+}=t^{-}=t$ then $\lambda =%
\frac{\varphi _{t^{+}+1}^{+}(x)}{\psi _{t^{+}+1}^{+}}<\mu =\frac{\varphi
	_{t^{-}+1}^{-}(x)}{\psi _{t^{-}+1}^{-}}$, where $0\leq \lambda ,\mu <1$. To
prove this we compute%
\[
1=\sum_{1}^{t^{+}}\psi _{i}^{+}+\lambda \psi _{t^{+}+1}^{+}=\frac{%
	t^{+}+\lambda }{n}+\theta ^{+}(\delta _{\{1,\ldots ,t^{+}\}}+\lambda \delta
_{t^{+}+1}). 
\]%
As $\frac{t^{+}+\lambda }{n}<1$, this implies $\delta _{\{1,\ldots
	,t^{+}\}}+\lambda \delta _{t^{+}+1}>0$; therefore,%
\[
1>\frac{t^{+}+\lambda }{n}+\theta ^{-}(\delta _{\{1,\ldots ,t^{+}\}}+\lambda
\delta _{t^{+}+1}). 
\]%
However, by repeating the computation above for $\varphi ^{-}(x)$ we get%
\[
1=\frac{t^{-}+\mu }{n}+\theta ^{-}(\delta _{\{1,\ldots ,t^{-}\}}+\mu \delta
_{t^{-}+1}). 
\]%
We see that $t^{-}<t^{+}$ leads to a contradiction between the last two
statements. Also, if $t^{-}=t^{+}=t$ they imply $\lambda \psi _{t+1}^{-}<\mu
\psi _{t+1}^{-}$, and  so $\lambda <\mu $ because $\psi _{i}^{-}>0$ for all $i$.
The claim is proved.

Next we evaluate the difference $\Delta $ in total disutility generated by
our two rules:%
\[
\Delta =\sum_{N}{\large (}\varphi _{i}^{+}(x)-\varphi _{i}^{-}(x){\large )}%
x_{i}= 
\]%
\[
=\sum_{1}^{t^{+}}(\psi _{i}^{+}-\psi _{i}^{-})x_{i}+(\lambda \psi
_{t^{+}+1}^{+}-\psi _{t^{+}+1}^{-})x_{t^{+}+1}-\sum_{t^{+}+2}^{t^{-}}\psi
_{i}^{-}x_{i}-\mu \psi _{t^{-}+1}^{-}x_{t^{-}+1}, 
\]%
where we have assumed that $t^{+}<t^{-}$; if instead $t^{+}=t^{-}=t$ the last
three terms of the sum reduce to $(\lambda \psi _{t+1}^{+}-\mu \psi
_{t+1}^{-})x_{t^{+}+1}$. As $x_{i}$ is increasing in $i$ we have%
\[
\Delta \leq \sum_{1}^{t^{+}}(\psi _{i}^{+}-\psi _{i}^{-})x_{i}+\lambda \psi
_{t^{+}+1}^{+}x_{t^{+}+1}-(\psi _{\{t^{+}+1,\ldots ,t^{-}\}}^{-}+\mu \psi
_{t^{-}+1}^{-})x_{t^{+}+1} 
\]%
and from $\varphi _{N}^{+}(x)=\varphi _{N}^{-}(x)$ we get $\psi
_{\{t^{+}+1,\ldots ,t^{-}\}}^{-}+\mu \psi
_{t^{-}+1}^{-}=\sum_{1}^{t^{+}}(\psi _{i}^{+}-\psi _{i}^{-})+\lambda \psi
_{t^{+}+1}^{+}$. Rearranging the right-hand term in the above inequality,
and recalling the definition of $\psi _{i}^{\varepsilon }$ gives%
\[
\Delta \leq \sum_{1}^{t^{+}}(\psi _{i}^{+}-\psi
_{i}^{-})(x_{i}-x_{t^{+}+1})=(\theta ^{+}-\theta ^{-})\sum_{1}^{t^{+}}\delta
_{i}(x_{i}-x_{t^{+}+1}). 
\]%
We show finally that the right-hand term above is strictly negative, as
desired.

The sequence $\delta _{i}$ is (strictly) decreasing and initially positive.
As $\delta _{\{1,\ldots ,t^{+}\}}+\lambda \delta _{t^{+}+1}>0$, we have $%
\delta _{\{1,\ldots ,t^{+}\}}>0$. The sequence $\gamma
_{i}=x_{t^{+}+1}-x_{i} $ is positive and (strictly) decreasing. These facts
imply that $\sum_{1}^{t^{+}}\delta _{i}\gamma _{i}$ is strictly positive. Let $%
\delta _{i^{\ast }}$ be the first strictly negative term in the sequence $%
\delta _{i}$: we have $\sum_{1}^{i^{\ast }-1}\delta _{i}\gamma _{i}\geq
\sum_{1}^{i^{\ast }-1}\delta _{i}\gamma _{i^{\ast }}$ as all terms are non-negative and $\gamma _{i}$ is decreasing; also $\sum_{i^{\ast }}^{t^{+}}\delta
_{i}\gamma _{i}>\sum_{i^{\ast }}^{t^{+}}\delta _{i}\gamma _{i^{\ast }}$ as $%
\delta _{i}<0$ and $\gamma _{i}<\gamma _{i^{\ast }}$. Thus $-\Delta
=\sum_{1}^{t^{+}}\delta _{i}\gamma _{i}>\delta _{\{1,\ldots ,t^{+}\}}\gamma
_{i^{\ast }}$.\end{proof}

\section{Proofs for Section~\ref{sec5}}\label{appB}

\subsection{Proof of Lemma~\ref{lm1}}
\begin{proof}[For goods.] The inequality $CR_{n}(\varphi )\leq
PoF_{n}(\varphi )$ is clear. Next, for any $\mathcal{P}\in {\Large \Pi }_{n}$
there exists some $x\in 
\mathbb{R}
_{+}^{N}$ such that%
\[
\frac{\mathbb{E}_{\mu }(\max_{i}X_{i})}{S (\varphi ,\mathcal{P})}\leq 
\frac{\max_{i}x_{i}}{\sum_{i\in N}\varphi _{i}(x)\cdot x_{i}}. 
\]%
This proves $PoF_{n}(\varphi )\leq \sup_{x\in 
	\mathbb{R}
	_{+}^{N}}\frac{\max_{i}x_{i}}{\sum_{i\in N}\varphi _{i}(x)\cdot x_{i}}$. 

Next
we pick an arbitrary $x\in 
\mathbb{R}
_{+}^{N}$ and check the inequality $\frac{\max_{i}x_{i}}{\sum_{i\in
		N}\varphi _{i}(x)\cdot x_{i}}\leq CR_{n}(\varphi )$, thus completing the
proof.
Consider a problem $\mathcal{P}\in {\Large \Pi }_{n}$ that
selects each of the $n!$ permutations of $\frac{1}{\overline{x}}x$ with
equal probability $\frac{1}{n!}$.  {We call a problem \emph{symmetric} if the distribution $\mu$ is symmetric in all variables $x_{i}$.} By the symmetry of the rule $\varphi $ we have $S
(\varphi ,\mathcal{P})=\sum_{i\in N}\varphi _{i}(x)\cdot x_{i}$. It will be
enough to construct a rule $\psi \in \Phi (FS)$ such that $S (\psi ,\mathcal{P%
})=\max_{i}x_{i}$, because $\frac{S (\psi ,\mathcal{P})}{S (\varphi ,%
	\mathcal{P})}\leq CR_{n}(\varphi )$. To this end, we note that the
Utilitarian rule $\varphi^{ut}$ violates FS in general (see the example in Section~\ref{subsect_benchmark_rules}) but not if the problem $\mathcal{P}$ is symmetric.\footnote{%
	Indeed, $\mathbb{E}_{\mu }(X_{1})\leq \mathbb{E}_{\mu }(\max_{i}X_{i})=S (\varphi^{ut},%
	\mathcal{P})=\sum_{i}\mathbb{E}_{\mu }(\varphi^{ut}_{i}(X)\cdot
	X_{i})=n\mathbb{E}_{\mu }(\varphi^{ut}_{1}(X)\cdot X_{1}).$ 
} Thus, we can pick a $\psi $ that is equal to $\varphi^{ut}$ for symmetric
problems, and satisfies FS elsewhere.

\smallskip
\noindent\textit{For bads.} The argument is similar and therefore omitted.
\end{proof}

\subsection{Proof of Proposition~\ref{prop4}}
\begin{proof}[Statement $i)$.] Pick $\varphi \in \Phi_{ind}(FS)$ and $\mathcal{%
	P}\in {\Large \Pi }_{n}$. The FS property implies%
\[
S (\varphi ,\mathcal{P})=\sum_{i\in N}\mathbb{E}_{\mu }(\varphi
_{i}(X)\cdot X_{i})\geq \frac{1}{n}\sum_{i\in N}\mathbb{E}_{\mu }(X_{i})\geq 
\frac{1}{n}\mathbb{E}_{\mu }(\max_{i}X_{i}) 
\]%
and the first claim follows. If $\varphi $ is the Equal Split rule, the
first inequality shown above is an equality, and the second one is an equality if
the random variable $X$ is uniform over the coordinate profiles $e^{\{i\}}$.

\smallskip
\noindent\textit{Statement $ii)$.}  By Lemma~\ref{lm1} we must evaluate $\sup_{x\in 
	\mathbb{R}
	_{+}^{N}\diagdown \{0\}}\frac{\sum_{i\in N}x_{i}}{\sum_{i\in N}x_{i}^{2}}%
\max_{i}x_{i}$. By rescaling $x$ we can assume that $x_{1}=1=\max_{i\geq 2}x_{i}$; then we must show that%
\[
\sup \frac{1+\sum_{2}^{n}x_{i}}{1+\sum_{2}^{n}x_{i}^{2}}=\frac{\sqrt{n}+1}{2},
\]%
where the supremum is on all $x_{2},\ldots ,x_{n}\in \lbrack 0,1]$. The argument is  straightforward and therefore  omitted.

\smallskip
\noindent\textit{Statement $iii)$.} We fix $\theta,\ 0<\theta \leq 1$, set $%
N=\{1,\ldots ,n\}$, and rewrite inequalities (\ref{10}) as

\[
\varphi _{i}^{\theta }(x)\geq \max \left\{\left(\frac{1}{n}+\frac{\theta }{n-1}\right)-\frac{%
	\theta }{n(n-1)}\frac{x_{N}}{x_{i}},\ 0\right\}\mbox{ for all }i\mbox{ and }x\in 
\mathbb{R}
_{+}^{N}.
\]%
By Lemma~\ref{lm1} we must evaluate the smallest feasible value of $\frac{1}{x^{(n)}}\{\sum_{i=1}^{n}\varphi _{i}^{\theta }(x)\cdot x_{i}\}$ in $%
\mathbb{R}
_{+}^{N}$. This function is continuous in $x$ (even though 
$\varphi ^{\theta }$ itself is not in those profiles where several agents
have the highest utility), and so it will be enough to compute the infimum of
this ratio for profiles $x$ such that $x_{i}<x_{n}$ for all $i\leq n-1$.

We first compute the desired lower bound when $(\frac{1}{n}+\frac{\theta }{%
	n-1})-\frac{\theta }{n(n-1)}\frac{x_{N}}{x_{i}}\geq 0$ for all $i$, so that
all agents $i\leq n-1$ get exactly this share and agent $n$ gets%
\[
\varphi _{n}^{\theta }(x)=1-\sum_{i=1}^{n-1}\varphi _{i}^{\theta }(x)=\frac{1%
}{n}-\theta +\frac{\theta }{n(n-1)} \left(\left(\sum_{i=1}^{n-1}\frac{1}{x_{i}}%
\right)x_{n}+n-1+\sum_{\{i,j\}\subset \{1,\ldots ,n-1\}}\left(\frac{x_{i}}{x_{j}}+\frac{%
	x_{j}}{x_{i}}\right) \right). 
\]

On the right-hand side, if we fix the sum $\sum_{i=1}^{n-1}x_{i}$, the first
sum is minimal when all utilities are equal; the second sum is also minimal
and equal to $(n-1)(n-2)$ when utilities are equal. It is also clear that
for $i,j\leq n-1$ the sum $\varphi _{i}^{\theta }(x)\cdot x_{i}+\varphi
_{j}^{\theta }(x)\cdot x_{j}$ is constant when we equalize $x_{i}$ and $%
x_{j} $ while keeping their sum constant. Thus, we can assume that $x_{i}=y$
for $1\leq i\leq n-1$, so that the share of agent $n$ is%
\[
\varphi _{n}^{\theta }(x)=\frac{1}{n}-\theta +\frac{\theta }{n}\left(\frac{x_{n}}{%
	y}+n-1\right)=\frac{1}{n}(1-\theta )+\frac{\theta }{n}\frac{x_{n}}{y}. 
\]%
Then we compute%
\[
\frac{1}{x_{n}}\left(\sum_{i=1}^{n}\varphi _{i}^{\theta }(x)\cdot x_{i}\right)=\varphi
_{n}^{\theta }(x)+(n-1)\frac{y\cdot \varphi _{1}^{\theta }(x)}{x_{n}}=\frac{1}{n}%
\left((1-2\theta )+\theta \frac{x_{n}}{y}+(n-1+\theta )\frac{y}{x_{n}}%
\right) 
\]%
and the minimum in $x_{n},y$ of this expression is achieved for $\frac{x_{n}%
}{y}=\big(\frac{(n-1+\theta )}{\theta }\big)^{\frac{1}{2}}$ (which is greater than $1$,
as needed) and its value is%
\[
\frac{1}{n}\left((1-2\theta )+2\sqrt{(n-1+\theta )\theta }\right), 
\]%
as stated. Clearly it is decreasing in $\theta $.

It remains to consider the case where for some $i^{\ast }\leq n-1$ we have,
for all $i\leq i^{\ast }-1$ and all $j\geq i^{\ast }$,%
\[
\left(\frac{1}{n}+\frac{\theta }{n-1}\right)-\frac{\theta }{n(n-1)}\frac{x_{N}}{x_{i}}%
<0\leq \left(\frac{1}{n}+\frac{\theta }{n-1}\right)-\frac{\theta }{n(n-1)}\frac{x_{N}}{%
	x_{j}}. 
\]
Observe that if we decrease $x_{i}$ to zero for all $i\leq i^{\ast }-1$
without changing other coordinates, the share of each agent $j,i^{\ast }\leq
j\leq n-1,$ increases (strictly if some $x_{i}$ is positive), while that of
agent $n$ decreases; therefore the ratio $\frac{1}{x^{(n)}}%
\{\sum_{i=1}^{n}\varphi _{i}^{\theta }(x)\cdot x_{i}\}$ decreases. Thus, it
is enough to assume $x_{i}=0$ for all $i\leq i^{\ast }-1$. Computing the
share of agent $n$ and the total utility $\sum_{i=1}^{n}\varphi _{i}^{\theta
}(x)\cdot x_{i}$ is then more involved but very similar, and the argument
that we can assume $x_{i}=y$ for $i^{\ast }\leq i\leq n-1$ is unchanged. Therefore,%
\[
\varphi _{n}^{\theta }(x)=\frac{i^{\ast }}{n}\left(1-\frac{n-i^{\ast }}{n-1}%
\theta \right)+\frac{n-i^{\ast }}{n(n-1)}\theta \frac{x_{n}}{y}, 
\]%
\[
\frac{1}{x_{n}}\left(\sum_{i=1}^{n}\varphi _{i}^{\theta }(x)\cdot x_{i}\right)=\frac{%
	i^{\ast }}{n}-\frac{(n-i^{\ast })(i^{\ast }+1)}{n(n-1)}\theta +\frac{%
	n-i^{\ast }}{n(n-1)}\left(\theta \frac{x_{n}}{y}+(n-1+i^{\ast }\theta )%
\frac{y}{x_{n}}\right) 
\]%
of which the minimum in $x_{n},y$ is%
\[
\frac{i^{\ast }}{n}-\frac{(n-i^{\ast })}{n(n-1)}\left((i^{\ast
}+1)\theta -2\sqrt{(n-1+i^{\ast }\theta )\theta }\right) 
\]%
and this quantity is increasing in $i^{\ast }$ because $(i^{\ast }+1)\theta -2%
\sqrt{(n-1+i^{\ast }\theta )\theta }$ does. Therefore, the worst case is  $%
i^{\ast }=1$, and we are done.

\smallskip
\noindent\textit{Statement $iv)$.}  Clearly $\inf_{\varphi \in \Phi(FS)}PoF_{n}(\varphi )\leq \inf_{\varphi \in \Phi_{ind} (FS)}PoF_{n}(\varphi )\leq
PoF_{n}(\varphi ^{1})$, and so the inequality $\inf_{\varphi \in \Phi(FS)}PoF_{n}(\varphi )\leq \frac{n}{2\sqrt{n}-1}$ follows from Lemma~\ref{lm1} and
statement $iii)$.

Next we fix $n,m,$ such that $1\leq m\leq n-1$ and consider the problem $\mathcal{P}%
(n,m)\in {\Large \Pi }_{n}$ with $n$ agents, $m$ equiprobable states, and the utilities defined in Table~\ref{tab_worst_case}.
\begin{table}[h!]
	\begin{center}
		\caption{States and utilities}		
		{\begin{tabular}{c|cccc}\label{tab_worst_case}
				state & $\omega _{1}$ & $\omega_2$  &$\ldots$ & $\omega _{m}$\\
				\hline  
				probability & $1/m$ & $1/m$ & $\ldots$ & $1/m$ \\ 
				\hline
				$X_{1}$ & $m$ & $0$ & $\ldots$ & $0$ \\
				$X_{2}$ & $0$ & $m$ & $\ldots$ & $0$ \\
				$\vdots$ & $0$ & $0$ & $\ddots$ &$0$\\
				$X_{m}$ & $0$ & $0$ & $\ldots$ &$m$ \\ 
				\hline
				$X_{m+1}$ & 1 & 1 & $\ldots$ & 1 \\ 
				$\vdots$ &  1 & 1 & $\ldots$ & 1 \\ 
				$X_{n}$ & 1 & 1 & $\ldots$ & 1 
		\end{tabular}}
	\end{center}
\end{table}

Let $N_{1}$ be the set of the $m$ 
``single-minded'' agents and $N_{2}$ be the set of the other $n-m$
``indifferent'' agents. Fix an arbitrary {
	prior-dependent} rule $\varphi \in \Phi(FS)$ and let $\mathbb{E}_{\mu }(Y_{i})=%
\mathbb{E}_{\mu }(\varphi_{i}^{\mu}(X)\cdot X_{i})$ be the expected
utility of agent $i$.

We denote by  $\lambda _{k}$ the total share $\varphi$ gives to $N_{2}$ at state $%
\omega _{k}$. Then the identity $\mathbb{E}_{\mu }(Y_{N_{2}})=\frac{1}{m}%
\sum_{k=1}^{m}\lambda _{k}$ and Fair Share imply $\sum_{k=1}^{m}\lambda
_{k}\geq \frac{m(n-m)}{n}$. If $\varphi$ gives the remaining shares to single-minded agent $k$ in state $\omega _{k}$, then $\mathbb{E}_{\mu }(Y_{N_{1}})=%
\frac{1}{m}\sum_{k=1}^{m}(1-\lambda _{k})m=m-\sum_{k=1}^{m}\lambda _{k}$.
This is the best $\varphi$ can do for the utilitarian objective. Compute%
\[
\mathbb{E}_{\mu }(Y_{N})=\left(m-\sum_{k=1}^{m}\lambda _{k}\right)+\left(\frac{1}{m}%
\sum_{k=1}^{m}\lambda _{k}\right)=m-\frac{m-1}{m}\sum_{k=1}^{m}\lambda _{k}\leq 
\]%
\[
\leq m-\frac{(m-1)(n-m)}{n}=\frac{m^{2}}{n}-\frac{m}{n}+1 
\]%
\[
\Longrightarrow \left( \frac{\mathbb{E}_{\mu }(\max_{i}X_{i})}{\mathbb{E}%
	_{\mu }(Y_{N})}\right)^{-1}=\frac{\mathbb{E}_{\mu }(Y_{N})}{m}\leq \frac{m}{n}+%
\frac{1}{m}-\frac{1}{n}. 
\]%
The minimum of $\frac{m}{n}+\frac{1}{m}-\frac{1}{n}$ over real numbers is
achieved for $m=\sqrt{n}$, and is worth $\frac{2}{\sqrt{n}}-\frac{1}{n}%
=(CR_{n}(\varphi ^{1}))^{-1}$. As $m$ is an integer and $m\rightarrow f(m)=%
\frac{m}{n}+\frac{1}{m}$ is convex, the minimum over integers is at most $%
\alpha =\max \{f(\sqrt{n}+\frac{1}{2}),f(\sqrt{n}-\frac{1}{2})\}$. Routine
computations show that $\alpha \leq \frac{2}{\sqrt{n}}+\frac{1}{2n}$; therefore $\left(%
\frac{\mathbb{E}_{\mu }(\max_{i}X_{i})}{\mathbb{E}_{\mu }(Y_{N})}\right)^{-1}\leq 
\frac{2\sqrt{n}-\frac{1}{2}}{n}$ and the proof is complete.\end{proof}

\subsection{Proof of Proposition~\ref{prop5}}
\begin{proof}[Statement $i)$.] If $\varphi $ is the Equal Split rule,
then $\frac{1}{\min_{i}x_{i}}(\sum_{i\in N}\varphi _{i}(x)\cdot x_{i})=\frac{%
	x_{N}}{n\cdot \min_{i}x_{i}}$ for all $x\in 
\mathbb{R}
_{+}^{N}$. This ratio is clearly unbounded, and the claim follows by
Lemma~\ref{lm1}.

Recall  that, by definition,  {the Proportional rule $\varphi^{pro}$ coincides with the Utilitarian rule at any profile} $x\in 
\mathbb{R}
_{+}^{N}$ with at least one zero coordinate. For $x\gg 0$ we have $\frac{1}{%
	\min_{i}x_{i}}(\sum_{i\in N}\varphi _{i}^{pro}(x)\cdot x_{i})=\frac{1}{%
	\min_{i}x_{i}}\frac{n}{\sum_{i\in N}\frac{1}{x_{i}}}=\frac{\widetilde{x}}{%
	\min_{i}x_{i}}$, where $\widetilde{x}$ is the harmonic mean of $x_{i}$.
The inequality $\widetilde{x}\leq n\min_{i}x_{i}$ is always true, and  {asymptotically}
becomes an equality when $x_{1}=\min_{i}x_{i}$, and all other coordinates
are equal and go to infinity. Therefore, the $CR_n(\varphi ^{pro})$ is indeed $n$.

\smallskip
\noindent\textit{Statement $ii)$.} The lower bound follows from the lower
bound on $\inf_{\varphi \in \Phi(FS)}PoF_{n}(\varphi )$ (statement $iii$ proven
below) and from $CR_{n}(\varphi ^{1})=PoF_{n}(\varphi ^{1})\geq \inf_{\varphi
	\in \Phi(FS)}PoF_{n}(\varphi )$.

To prove the upper bound\textit{ }$PoF_{n}(\varphi ^{1})\leq \frac{n}{4}+%
\frac{5}{4}$, we fix an arbitrary profile $x$ and majorize $\frac{1}{%
	\min_{i}x_{i}}(\sum_{i\in N}\varphi _{i}^{1}(x)\cdot x_{i})$. Because $%
\varphi ^{1}$ is homogeneous of degree zero and symmetric, and $\varphi ^{1}$
{coincides with the Utilitarian rule} if $x_{1}=0$, we can without loss of generality assume that $x_{1}=1$ and $x_{i}$ is 
 weakly increasing in~$i$. We must bound $U_{N}(x)=\sum_{i\in N}\varphi
_{i}^{1}(x)\cdot x_{i}$.  {By the continuity of $U_N(x)$, we can assume that none of the coordinates of $x$ are equal, i.e., that $x_i$ is strictly increasing.}

By the definition of $\varphi^{1}$ there exists an  {index $\widetilde{t}$} such that%
\[
\frac{1}{n(n-1)}\sum_{i=1}^{\widetilde{t}}\frac{x_{N\diagdown \{i\}}}{x_{i}}\leq 1<\frac{1%
}{n(n-1)}\sum_{i=1}^{\widetilde{t}+1}\frac{x_{N\diagdown \{i\}}}{x_{i}} 
\]%
and $\varphi _{i}^{1}(x)=\frac{1}{n(n-1)}\frac{x_{N\diagdown \{i\}}}{x_{i}}$ for 
$i\leq \widetilde{t}$.

We set $\Delta =n(n-1)-\sum_{i=1}^{\widetilde{t}}\frac{x_{N\diagdown \{i\}}}{x_{i}}$, $%
\Delta \geq 0$, and develop $U_{N}(x)$ as follows:%
\[
n(n-1)U_{N}(x)=\sum_{i=1}^{\widetilde{t}}x_{N\diagdown \{i\}}+\Delta
x_{\widetilde{t}+1}=(\widetilde{t}-1)\sum_{i=1}^{\widetilde{t}}x_{i}+\widetilde{t}\sum_{j=\widetilde{t}+1}^{n}x_{j}+\Delta x_{\widetilde{t}+1}. 
\]

Suppose that we replace each $x_{i},\ 2\leq i\leq \widetilde{t}$, by their average $y=\frac{1}{\widetilde{t}-1}%
\sum_{i=2}^{\widetilde{t}}$, ceteris paribus: this will decrease the total weight
given by $\varphi ^{1}$ to these coordinates, which is $\frac{x_{N}}{n(n-1)}%
(\sum_{2}^{\widetilde{t}}\frac{1}{x_{i}})$, and it will increase the weight to coordinates $%
x_{\widetilde{t}+1}$ and beyond. Therefore, this move increases $U_{N}(x)$, and so we can
assume that these $\widetilde{t}-1$ coordinates are all equal to $y$. We also set $%
\sum_{j=\widetilde{t}+1}^{n}x_{j}=w$. Now we try to bound%
\[
n(n-1)U_{N}(x)=(\widetilde{t}-1)(1+(\widetilde{t}-1)y)+\widetilde{t}w+\Delta x_{\widetilde{t}+1} 
\]%
under the constraints%
\[
\Delta =n(n-1)+\widetilde{t}-(1+(\widetilde{t}-1)y+w)\left(1+\frac{\widetilde{t}-1}{y}\right)\geq 0\mbox{ ; }0\leq \Delta
x_{\widetilde{t}+1}\leq 1+(1-\widetilde{t})y+w\mbox{ ; }w\geq (n-\widetilde{t})y, 
\]%
where we infer the second inequality from the fact that $\Delta \leq \frac{%
	x_{N\diagdown \{(\widetilde{t}+1)\}}}{x_{\widetilde{t}+1}}$ and the third one from the fact that the
coordinates of $x$ are weakly increasing. These inequalities imply%
\[
n(n-1)U_{N}(x)\leq \widetilde{t}(1+(\widetilde{t}-1)y)+(\widetilde{t}+1)w, 
\]%
\[
(1+(\widetilde{t}-1)y+w)\left(1+\frac{\widetilde{t}-1}{y}\right)\leq n(n-1)+\widetilde{t}\Longrightarrow \left(1+\frac{\widetilde{t}-1}{y}\right)w\leq
n(n-1)-(\widetilde{t}-1)\left(y+\frac{1}{y}\right)+(\widetilde{t}-1)-(\widetilde{t}-1)^{2} 
\]%
\[
\Longrightarrow w\leq (n(n-1)+\widetilde{t}-1)\frac{y}{y+\widetilde{t}-1}-(\widetilde{t}-1)y. 
\]%
Combining $w\geq (n-\widetilde{t})y$ and the upper bound above gives%
\[
(n-\widetilde{t})y\leq (n(n-1)+\widetilde{t}-1)\frac{y}{y+\widetilde{t}-1}-(\widetilde{t}-1)y\Longrightarrow y+\widetilde{t}-1\leq n+\frac{\widetilde{t}-1}{n-1%
}\leq n+1. 
\]%
Next we combine the upper bound on $n(n-1)U_{N}(x)$ with that on $w$:%
\[
n(n-1)U_{N}(x)\leq \widetilde{t}(1+(\widetilde{t}-1)y)+(\widetilde{t}+1)(n(n-1)+\widetilde{t}-1)\frac{y}{y+\widetilde{t}-1}-(\widetilde{t}+1)(\widetilde{t}-1)y= 
\]%
\[
=\widetilde{t}-(\widetilde{t}-1)y+(\widetilde{t}+1)(n(n-1)+\widetilde{t}-1)\frac{y}{y+\widetilde{t}-1}. 
\]%
We now majorize the above upper bound in the two real variables $\widetilde{t},y$ such
that $y+\widetilde{t}\leq n+2$. Observe first that this bound is increasing in $y$ because its
derivative has the sign of $\frac{(\widetilde{t}+1)(n(n-1)+\widetilde{t}-1)}{(y+\widetilde{t}-1)^{2}}-1$ and $\frac{%
	(\widetilde{t}+1)(n(n-1)+\widetilde{t}-1)}{(y+\widetilde{t}-1)^{2}}\geq \frac{3(n^{2}-n+1)}{(n+1)^{2}}$. Thus, we can
take $y+\widetilde{t}=n$ and use the inequality $\frac{\widetilde{t}+1}{n+1}\leq 1$ to deduce the
bound%
\[
n(n-1)U_{N}(x)\leq \widetilde{t}+\frac{n(n-1)(\widetilde{t}+1)y}{n+1}+(\widetilde{t}-1)y\left(\frac{\widetilde{t}+1}{n+1}-1\right)\leq n+%
\frac{n(n-1)}{n+1}(\widetilde{t}+1)(n+2-\widetilde{t}). 
\]%
The maximum in $\widetilde{t}$ of $(\widetilde{t}+1)(n+2-\widetilde{t})$ is $\frac{(n+3)^{2}}{4}$ for $\widetilde{t}=\frac{%
	n+1}{2}$; therefore%
\[
\Longrightarrow U_{N}(x)\leq \frac{1}{n-1}+\frac{(n+3)^{2}}{4(n+1)}=\frac{n}{%
	4}+\frac{5}{4}-\frac{2}{n^{2}-1},
\]%
completing the proof of statement $ii)$.

\smallskip
\noindent\textit{Statement $iii)$.} \ \\ 
\noindent\textit{Step 1. Lower bound on $\inf_{\varphi \in \Phi
		(FS)}PoF_{n}(\varphi )$.} Consider the normalized problem $\mathcal{P}$ with
two equally probable states $\omega ,\omega ^{\prime }$, and the
corresponding profiles of disutilities%
\[
x_{1}=\frac{4}{n+1},\ \ x_{i}=2\ \mbox{ for }\ 2\leq i\leq n\mbox{ ; }\ \ \
x_{1}^{\prime }=2\cdot\frac{n-1}{n+1},\ \ x_{i}^{\prime }=0\ \mbox{ for }\ 2\leq
i\leq n. 
\]%
Without the FS constraint the total disutility is minimized by giving to agent $%
1 $ the whole bad in state $\omega $, and no share at all in state $\omega
^{\prime }$, so that $\mathbb{E}_{\mu }(\min_{i}X_{i})=\frac{2}{n+1}$. The
FS constraint caps the share of agent $1$ at $\frac{n+1}{2n}$ in state $%
\omega $ and so at least $\frac{n-1}{2n}$ goes to the other agents and the expected
total disutility is at least $\frac{1}{n}+\frac{1}{2}\frac{n-1}{2n}2=\frac{%
	n+1}{2n}$. Therefore, for any $\varphi \in \Phi (FS)$ we have%
\[
\frac{S (\varphi ,\mathcal{P})}{\mathbb{E}_{\mu }(\min_{i}X_{i})}\geq 
\frac{(n+1)^{2}}{4n}=\frac{n}{4}+\frac{1}{2}+\frac{1}{4n}. 
\]%

\smallskip
\noindent\textit{\textit{Step 2}. \textit{Upper bound on }$\inf_{\varphi \in \Phi
		(FS)}PoF_{n}(\varphi )$.} 
 Fix a problem $\mathcal{P}\in {\LARGE \Pi }_{n}$
and let $\mathcal{C}$ denote the compact convex set of the disutility profiles
feasible by some prior-dependent rule. Then $\mathcal{C}$ contains the
simplex $\Delta (N)$ because the rule giving the object always to agent $i$
achieves the unit vector $e^{i}$. Let $x\in \mathcal{C}$ achieve the
smallest total disutility in $\mathcal{C}$: $x_{N}=\mathbb{E}_{\mu
}(\min_{i}X_{i})$. We must construct a profile $y$ in $\mathcal{C}$ satisfying
FS and such that%
\[
y_{N}\leq \left(\frac{n}{4}+\frac{1}{2}+\frac{1}{4n}\right)x_{N}. 
\]%
If $x$ satisfies FS we can take $y=x$ and if $x_{N}=1$ we take $y=\frac{1}{n}%
e^{N}$, the center of the simplex. Otherwise some coordinates of $x$ are
above $\frac{1}{n}$; upon relabeling coordinates we have%
\[
x_{1}\geq \ldots \geq x_{t}>\frac{1}{n}\geq x_{t+1}\geq \ldots \geq x_{n} 
\]%
and we keep in mind $x_{N}<1$. We use below the notation $K=\{1,\ldots ,t\}$
and $L=\{t+1,\ldots ,n\}$.

We wish to choose $y=\lambda x+\lambda ^{\prime }\tau $, a convex
combination of $x$ and some $\tau \in \Delta (N)$, such that%
\[
\lambda x_{k}+\lambda ^{\prime }\tau _{k}=\frac{1}{n}\text{ for }1\leq k\leq
t\text{ ; }\lambda x_{\ell }+\lambda ^{\prime }\tau _{\ell }\leq \frac{1}{n}%
\text{ for }t+1\leq \ell \leq n. 
\]%
For any $\lambda \in \lbrack 0,1]$ such that $\lambda x_{1}\leq \frac{1}{n}$%
, each one of the $t$ equalities shown above defines $\tau _{k}$ in $[0,1]$
(because $\lambda x_{k}+\lambda ^{\prime }\geq \frac{\lambda }{n}+\lambda
^{\prime }\geq \frac{1}{n}$) and their sum $\tau _{K}$. We can then find non-negative numbers $\tau _{\ell }$ satisfying the last $n-t$ inequalities as well
as $\tau _{L}=1-\tau _{K}$ {iff }$\tau _{K}\leq 1$ and $\lambda
x_{L}+\lambda ^{\prime }\tau _{L}\leq \frac{n-t}{n}$.

By construction, $\lambda ^{\prime }\tau _{K}=\frac{t}{n}-\lambda x_{K}$ and  so
the last two inequalities are%
\[
\tau _{K}\leq 1\Longleftrightarrow \frac{t}{n}-\lambda x_{K}\leq \lambda
^{\prime }\Longleftrightarrow \lambda (1-x_{K})\leq \frac{n-t}{n}, 
\]%
\[
\lambda x_{L}+\lambda ^{\prime }(1-\tau _{K})\leq \frac{n-t}{n}%
\Longleftrightarrow \lambda x_{L}+\lambda x_{K}+\lambda ^{\prime }\leq
1\Longleftrightarrow x_{N}\leq 1. 
\]

The latter inequality is true. The former is a consequence of $\lambda
x_{1}\leq \frac{1}{n}$ because by the definition of $K$ we have $%
(n-t)x_{1}+x_{K}>1$, implying $\frac{1}{nx_{1}}<\frac{n-t}{n(1-x_{K})}$. Therefore $%
\lambda x_{1}\leq \frac{1}{n}$ is the only constraint on the choice of $%
\lambda $.

We choose $\lambda $ to minimize%
\[
\frac{y_{N}}{x_{N}}=\frac{\lambda x_{N}+\lambda ^{\prime }}{x_{N}}=\lambda +%
\frac{1-\lambda }{x_{N}}=\frac{1}{x_{N}}-\lambda (\frac{1}{x_{N}}-1), 
\]%
which is decreasing in $\lambda $. Therefore we pick $\lambda =\frac{1}{nx_{1}}$ to get%
\[
\frac{y_{N}}{x_{N}}=\frac{nx_{1}-1}{nx_{1}x_{N}}+\frac{1}{nx_{1}}\leq \frac{%
	nx_{1}-1}{nx_{1}^{2}}+\frac{1}{nx_{1}}=\frac{n+1}{n}\frac{1}{x_{1}}-\frac{1}{%
	nx_{1}^{2}}. 
\]%
We leave it to the reader to check that the maximum of the right-hand term for $%
x_{1}\in \lbrack \frac{1}{n},1]$ is reached at $x_{1}=\frac{2}{n+1}$ and is
precisely $\frac{n}{4}+\frac{1}{2}+\frac{1}{4n}$. 
\end{proof}

Interestingly, the PoF we just computed is the inverse of the
``price of maximin fairness'' for classic bargaining problems (corresponding in our
model to the division of a good); see~\citet[Theorem~1]{Bertsimas2011}.

\section{Asymptotic results and missing proofs for Section~\ref{sec6}}\label{appC}

\subsection{A good}
\begin{proposition}\label{prop6}
 \textit{Fix a distribution }$\nu $%
\textit{ of }$X_{i}$\textit{ with }$E_{\nu }X_{1}=1$\textit{ and }$E_{\nu
}(X_{1})^{\beta }<\infty $\textit{ for some }$\beta >2$\textit{. Consider a
	problem }$\mathcal{P}_{n}(\nu )$\textit{ with }$n$\textit{ agents and }$%
\mu =\otimes _{i=1}^{n}\nu $\textit{. Then the  ratio for
	the TH rule }$\varphi ^{\theta }$\textit{, }$\theta \in (0,1]$\textit{,
	satisfies} 
\begin{equation}
\pi(\varphi ^{\theta },\mathcal{P}_{n}(\nu ))=\frac{1}{1-\mathbb{E}_{\nu }\left(
	1+\theta -\frac{\theta }{X_{1}}\right) _{+}+\frac{\mathbb{E}_{\nu }\left(
		X_{1}(1+\theta )-\theta \right) _{+}}{\mathbb{E}_{\mu }(X^{(n)})}}\left(1
+O\left( \frac{1}{n^{\frac{1}{2}-\frac{1}{\beta }}}\right)\right),
\label{eq_asymp_good_iid}
\end{equation}%
\textit{for a large number of agents\footnote{%
		$a_{n}=O(b_{n})$ if there exist $n_{0}$ and $C>0$ such that $%
		|a_{n}|\leq C|b_{n}|$ for all $n\geq n_{0}$.} }$n$\textit{. Here }$(y)_{+}$\textit{
	denotes }$\max \{y,0\}$\textit{.}
\end{proposition}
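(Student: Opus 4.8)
The plan is to evaluate $S(\varphi^\theta,\mathcal{P}_n(\nu))$ up to a relative error $O(n^{1/\beta-1/2})$ and then form the ratio $\pi(\varphi^\theta,\mathcal{P}_n(\nu))=\mathbb{E}_\mu(X^{(n)})/S(\varphi^\theta,\mathcal{P}_n(\nu))$. The degenerate case $\nu=\delta_1$ is trivial ($\varphi^\theta$ is then Equal Split and both sides of \eqref{eq_asymp_good_iid} equal $1$), so assume $\nu$ non-degenerate. By Definition~\ref{def8} the agents in $\tau(X)$ all carry the common value $X^{(n)}$ and jointly receive the mass $1-\sum_{i\notin\tau(X)}\varphi_i^\theta(X)$, which yields the \emph{exact} identity
\[
S(\varphi^\theta,\mathcal{P}_n(\nu))=\mathbb{E}_\mu\big[\,X^{(n)}\big(1-A(X)\big)+B(X)\,\big],\qquad
A(x)=\sum_{i:\,x_i<x^{(n)}}\varphi_i^\theta(x),\quad B(x)=\sum_{i:\,x_i<x^{(n)}}\varphi_i^\theta(x)\,x_i .
\]
Writing $h_\theta(t)=(1+\theta-\theta/t)_+$ and noting $h_\theta(t)\,t=\big((1+\theta)t-\theta\big)_+$, the claim \eqref{eq_asymp_good_iid} is exactly the assertion that $A(X)$ concentrates near $A_0:=\mathbb{E}_\nu h_\theta(X_1)$ and $B(X)$ near $B_0:=\mathbb{E}_\nu\big((1+\theta)X_1-\theta\big)_+$, the subtle point being that $X^{(n)}$ multiplies $1-A(X)$ and is typically unbounded.

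The first step is a deterministic term-by-term comparison of $\varphi_i^\theta(x)=\big(\tfrac1n+\tfrac{\theta}{n-1}(1-\tfrac{\overline x}{x_i})\big)_+$, $i\notin\tau(x)$, with $\tfrac1n h_\theta(x_i)$, where $\theta_n:=\tfrac{n\theta}{n-1}$ and $\overline x=x_N/n$. On the indices where both are positive one has $x_i>\tfrac{\theta}{1+\theta}$, so $1/x_i\le\tfrac{1+\theta}{\theta}$, and the difference equals $\tfrac1n\big((\theta_n-\theta)-\tfrac{\theta_n\overline x-\theta}{x_i}\big)=O\!\big(\tfrac1n(|\overline x-1|+\tfrac1n)\big)$ because $|\theta_n-\theta|=O(1/n)$ and $|\theta_n\overline x-\theta|=O(|\overline x-1|+1/n)$; in the interval between the thresholds $\tfrac{\theta_n\overline x}{1+\theta_n}$ and $\tfrac{\theta}{1+\theta}$ one of the two quantities vanishes and the other is again $O\!\big(\tfrac1n(|\overline x-1|+\tfrac1n)\big)$ by a short estimate near the kink of $h_\theta$. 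Summing over the at most $n$ such indices and adding the single omitted top index (contributing $\le 2/n$ to $A$ and $\le 2X^{(n)}/n$ to $B$) gives, with $C=C(\theta)$,
\[
\big|A(X)-\tfrac1n\sum_{i=1}^n h_\theta(X_i)\big|\le C\big(|\overline X-1|+\tfrac1n\big),\qquad
\big|B(X)-\tfrac1n\sum_{i=1}^n h_\theta(X_i)X_i\big|\le C\big(|\overline X-1|+\tfrac{1+X^{(n)}}{n}\big).
\]
Since $h_\theta\le 1+\theta$ is bounded and $t\mapsto h_\theta(t)t$ grows at most linearly, the law of large numbers gives $\tfrac1n\sum_i h_\theta(X_i)\to A_0$ and $\tfrac1n\sum_i h_\theta(X_i)X_i\to B_0$; quantitatively, Rosenthal's (equivalently Marcinkiewicz–Zygmund's) inequality together with $\mathbb{E}_\nu X_1^\beta<\infty$, $\beta>2$, yields $\|\overline X-1\|_\beta=O(n^{-1/2})$, $\big\|\tfrac1n\sum_i h_\theta(X_i)-A_0\big\|_\beta=O(n^{-1/2})$ and $\big\|\tfrac1n\sum_i h_\theta(X_i)X_i-B_0\big\|_\beta=O(n^{-1/2})$. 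Hence $\mathbb{E}_\mu|A(X)-A_0|=O(n^{-1/2})$, $\big\|A(X)-A_0\big\|_{\beta/(\beta-1)}\le\big\|A(X)-A_0\big\|_\beta=O(n^{-1/2})$, and $\mathbb{E}_\mu|B(X)-B_0|=O(n^{-1/2})$ (the extra term $\mathbb{E}_\mu X^{(n)}/n=O(n^{1/\beta-1})$ is of lower order because $\beta>2$).

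It remains to handle the cross term. By Hölder with exponents $\beta$ and $\tfrac{\beta}{\beta-1}$ and the union bound $\mathbb{E}_\mu(X^{(n)})^\beta\le n\,\mathbb{E}_\nu X_1^\beta$,
\[
\big|\mathbb{E}_\mu[X^{(n)}(A_0-A(X))]\big|\le \|X^{(n)}\|_\beta\,\|A(X)-A_0\|_{\beta/(\beta-1)}=O(n^{1/\beta})\cdot O(n^{-1/2})=O\big(n^{1/\beta-1/2}\big),
\]
so $S(\varphi^\theta,\mathcal{P}_n(\nu))=(1-A_0)\,\mathbb{E}_\mu(X^{(n)})+B_0+O(n^{1/\beta-1/2})$. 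Here $B_0\ge\mathbb{E}_\nu\big((1+\theta)X_1-\theta\big)=1$, and $h_\theta(t)-1\le\theta(t-1)$ for every $t\ge0$ (check the cases $t\ge1$, $\tfrac{\theta}{1+\theta}\le t<1$, $0\le t<\tfrac{\theta}{1+\theta}$), so $A_0\le 1+\theta\,\mathbb{E}_\nu(X_1-1)=1$ with equality only for $\nu=\delta_1$; thus $1-A_0$ is a strictly positive constant and $(1-A_0)\mathbb{E}_\mu(X^{(n)})+B_0\ge 1$. Dividing $\mathbb{E}_\mu(X^{(n)})$ by $S(\varphi^\theta,\mathcal{P}_n(\nu))$ and factoring $\mathbb{E}_\mu(X^{(n)})$ out of the numerator and the leading part of the denominator gives
\[
\pi(\varphi^\theta,\mathcal{P}_n(\nu))=\frac{1}{(1-A_0)+B_0/\mathbb{E}_\mu(X^{(n)})}\Big(1+O\big(n^{1/\beta-1/2}\big)\Big),
\]
which is \eqref{eq_asymp_good_iid} since $n^{1/\beta-1/2}=n^{-(1/2-1/\beta)}$. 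I expect the main obstacle to be the bookkeeping in the second step — uniformly controlling the gap between $\varphi_i^\theta$ and its surrogate $\tfrac1n h_\theta(\cdot)$ near the kink of $h_\theta$ and near the $n$-dependent truncation threshold — together with the Hölder step above, which is precisely where the hypothesis $\beta>2$ is used and which fixes the exponent $1/2-1/\beta$ in the error term.
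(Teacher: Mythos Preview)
Your proof is correct and follows essentially the same route as the paper: write $S(\varphi^\theta,\mathcal{P}_n)=X^{(n)}(1-A)+B$, use the law of large numbers to replace $A,B$ by the constants $A_0=\mathbb{E}_\nu h_\theta(X_1)$ and $B_0=\mathbb{E}_\nu((1+\theta)X_1-\theta)_+$, and control the cross term $\mathbb{E}_\mu[X^{(n)}(A-A_0)]$ via a H\"older-type bound together with the union-bound moment estimate for $X^{(n)}$. Two small technical differences are worth noting. First, the paper defines $A$ and $B$ as sums over \emph{all} indices $i$ (not just $i\notin\tau(X)$); the contributions from $i\in\tau(X)$ cancel identically between $X^{(n)}A$ and $B$, so the decomposition is still exact, and this avoids your implicit assumption of a ``single omitted top index,'' which fails when $\nu$ has atoms at the right end of its support --- your displayed bound on $|A(X)-\tfrac1n\sum_i h_\theta(X_i)|$ is not literally true then, but the same cancellation repairs it. Second, for the cross term the paper uses Cauchy--Schwarz ($p=q=2$) together with $\mathbb{E}_\mu(X^{(n)})^2=O(n^{2/\beta})$, whereas you use H\"older with exponents $(\beta,\beta/(\beta-1))$ and Rosenthal's inequality; both yield the same error exponent $\tfrac12-\tfrac1\beta$.
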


\medskip Note that the only dependence on $n$ in formula~(\ref%
{eq_asymp_good_iid}) is through the expected value of $X^{(n)}=%
\max_{i=1,..,n}X_i$ and the error term.

\begin{proof}[Proof of Proposition~\ref{prop6}.]

For simplicity, we assume that $\theta=1$ (proofs for other values
of $\theta$ follow the same logic).

By the definition of the TH rule $\varphi ^{1}$ we can represent {the
social welfare} as 
\[
\sum_{i}X_{i}\varphi _{i}^{1}(X)=\sum_{i=1}^{n}X_{i}\left( \frac{2}{n}-\frac{%
	X_{N}-X_{i}}{n(n-1)X_{i}}\right) _{+}+X^{(n)}\left(
1-\sum_{i=1}^{n}\left( \frac{2}{n}-\frac{X_{N}-X_{i}}{n(n-1)X_{i}}\right)
_{+}\right)= 
\]%
\[
=A+X^{(n)}-B. 
\]%
Consider the contribution of $A$ first. Since all $X_{i}$ have the same
distribution, it follows that  $\mathbb{E}_{\mu }A=\mathbb{E}_{\mu }\left( 2X_{1}-\frac{%
	\sum_{j\neq 1}X_{j}}{n-1}\right) _{+}$. Let us show that $\Delta _{0}=%
\mathbb{E}_{\mu }(A)-\mathbb{E}_{\nu }\left( 2X_{1}-1\right) _{+}$ is small.
The function $(\,\cdot \,)_{+}$ is Lipschitz with constant one; thus by the
Cauchy inequality and the independence of $X_{j}$, we have
\[
\left\vert \Delta _{0}\right\vert \leq \mathbb{E}_{\mu }\left(\left\vert 1-\frac{%
	\sum_{j\neq 1}X_{j}}{n-1}\right\vert\right) =\frac{1}{n-1}\mathbb{E}_{\mu
}\left(\left\vert {\sum_{j\neq 1}(X_{j}-1)}\right\vert \right)\leq
\]%
\[
\leq \frac{1}{n-1}\sqrt{\mathbb{E}_{\mu }\left( \sum_{j\neq
		1}(X_{j}-1)\right) ^{2}}=\frac{\sqrt{\mathbb{V}_{\nu }(X_{1})}}{\sqrt{n-1}}%
=O\left( \frac{1}{\sqrt{n}}\right) 
\]%
if the variance $\mathbb{V}_{\nu }$ of $X_{1}$ is finite.

Now we will check that $\mathbb{E}_\mu (B)$ is close to $\mathbb{E}_\mu
(X^{(n)})\cdot \mathbb{E}_\nu \left((2-1/X_1)_+\right)$ (as if $X^{(n)}$ is independent of $X_i$
and $\sum {X_j}$ approximately equals its expectation). This is done in two
steps:

\begin{itemize}
	\item \emph{Step 1:} Prove that $\mathbb{E}_{\mu }(B)$ does not change much if we substitute $%
	(2-1/X_{1})_{+}$ for $(2-\sum_{j\neq 1}X_{j}/(n-1)X_{1})_{+}$.
	
\item \emph{Step 2:} Prove that the random variables $X^{(n)}$ and $(2-1/X_{1})_{+}$
	can be decoupled; the expected value of the product is close to the product
	of expectations.
\end{itemize}

\smallskip
\noindent\textit{Step 1: \  $\sum_{j\neq 1}X_{j}/(n-1)$ can be
	replaced by its expectation.}

Since $X_j$ are independent and identically distributed we have 
\[
\mathbb{E}_\mu (B) = \mathbb{E}_\mu \left(X^{(n)} \left(2-\frac{\sum_{j\ne 1}X_j}{
	(n-1)X_1}\right)_+\right)= \mathbb{E} \left(X^{(n)} \left(2-\frac{1}{X_1}\right)_+\right) +
\Delta_1, 
\]
where 
\[
\Delta_1=\mathbb{E}_\mu \left(X^{(n )}\left(\left(2-\frac{\sum_{j\ne 1}X_j}{(n-1)X_1%
}\right)_+ - \left(2-\frac{1}{X_1}\right)_+\right)\right)=\mathbb{E}_\mu \left(X^{(n)}
h(X) \right).
\]
Consider two cases depending on how far  the sum $\sum_{j\ne 1} X_j$ is from
its expected value. Let $Q$ be the event that $\left|\frac{\sum_j X_j}{n-1}%
-1 \right|>\frac{1}{2}$, $\overline{Q}$ its complement, and $1_Q$, $1_{\overline{Q}}$  their indicator functions. Then the probability  $\mathbb{P}_\mu(Q)=\mathbb{E}(1_Q)$ is at
most $\frac{8 \mathbb{V}_\nu (X_1)}{n-1}$ by the Markov inequality. Let us
represent $\Delta_1$ as $\mathbb{E}_\mu \left(X^{(n)}h(X)1_Q\right)+\mathbb{E}_\mu
\left(X^{(n)}h(X)1_{\overline{Q}}\right)$. For the first term, we use the estimate $h\leq 2$ and
then apply the Cauchy inequality: 
\[
\mathbb{E}_\mu \left(X^{(n)}|h(X)|1_Q\right) \leq 2\mathbb{E}_\mu \left(X^{(n)}1_Q\right)\leq \sqrt{\mathbb{E}_\mu \left(|X^{(n)}|^{2}\right)}%
\sqrt{\mathbb{P}_\mu(Q)}.
\]
To bound the second term, consider the following inequality for $y,z\leq 2$: $%
\big||y|_+ - |z|_+\big|\leq (1_{y\geq 0}+1_{z\geq 0})\cdot \big|y-z\big|. $ Applying it
to $h$ we get 
\[
|h(x)|\leq \left(1_{\left\{\frac{1}{x_1}\leq \frac{2(n-1)}{\sum_{j\ne 1 }x_j}
	\right\}} + 1_{\left\{\frac{1}{x_1}\leq 2\right\}} \right) \left|\frac{
	\sum_{j\ne 1} x_j}{(n-1)x_1} - \frac{1}{x_1} \right|. 
\]
For $x\in \overline{Q}$, the function $h$ is non-zero only if $\frac{1}{x_1}\leq 
\frac{4}{3}$. Thus, for such $x$, we have $|h(x)|\leq \frac{8}{3}\left|\frac{%
	\sum_{j\ne 1} (x_j-1)}{n-1} \right|. $ Finally, we get 
\[
\mathbb{E}_\mu \left( X^{(n)}|h(X)|1_{\overline{Q}}\right)\leq \frac{8}{3(n-1)}\mathbb{E}_\mu
\left(X^{(n)}\left|\sum_{j\ne 1} (X_j-1)\right|\right)\leq \frac{8}{3(n-1)} \sqrt{\mathbb{%
		E }_\mu \left(|X^{(n)}|^{2}\right)} \sqrt{ \mathbb{E} \left(\left(\sum_{j\ne 1} (X_j-1)\right)^2\right). 
} 
\]
Combining all the estimates together, we see that $|\Delta_1|= O\left(\frac{%
	\sqrt{\mathbb{E}_\mu \left(|X^{(n)}|^2\right)} }{\sqrt{n}}\right). $ We will estimate $%
\mathbb{E}_\mu \left(|X^{(n)}|^2\right)$ at the end of the proof.

\smallskip
\noindent\textit{Step 2: Decouple $X^{(n)}$ and $(2-1/X_{1})_{+}$.}

We proved that $B$ is close to $\mathbb{E}_{\mu
}\left(X^{(n)}(2-1/X_{1})_{+}\right)$. Now we want to decouple the two factors and
show that $B$ is close to $\mathbb{E}_{\mu }\left(X^{(n)}\right)\cdot \mathbb{E}_{\nu
}\left((2-1/X_{1})_{+}\right)$. Define $\Delta _{2}=\mathbb{E}_{\mu }(X^{(n)})\cdot 
\mathbb{E}_{\nu }\left(\left( 2-\frac{1}{X_{1}}\right) _{+}\right)-\mathbb{E}_{\mu
}\left(X^{(n)}\left( 2-\frac{1}{X_{1}}\right) _{+}\right).$ The random variable $\xi
=\max_{i=2...n}X_{i}$ is independent of $\left( 2-\frac{1}{X_{1}}\right)
_{+}$. Therefore,
\[
\Delta _{2}=\mathbb{E}_{\mu }(X^{(n)}-\xi )\cdot \mathbb{E}_{\nu }\left(
2-\frac{1}{X_{1}}\right) _{+}-\mathbb{E}_{\mu }\left((X^{(n)}-\xi )\left( 2-%
\frac{1}{X_{1}}\right) _{+} \right).
\]%
By definition, $X^{(n)}$ is greater than $\xi $. Hence $|\Delta
_{2}|\leq 2\mathbb{E}_{\mu }(X^{(n)}-\xi ).$ To estimate the difference
of expectations define $X_{-j}^{(n)}$ as $\max_{k=1,..n,\ j\neq i}X_{k}$.
Then $\mathbb{E}(X^{(n)}_{-j})=\mathbb{E}(\xi) $ for all $j$. If $%
X_{i}=X^{(n)}$, then all $X_{-j}^{(n)}$
except the one with $j=i$ coincide and are equal to $X^{(n)}$. Thus, $n%
\mathbb{E}(\xi) =\mathbb{E}\left(\sum_{j=1..n}X_{-j}^{(n)}\right)\geq (n-1)\mathbb{E}%
(X^{(n)})$ and $\mathbb{E}(X^{(n)})-\mathbb{E}(\xi) \leq \frac{\mathbb{E%
	}(X^{(n)})}{n}.$ Finally, $|\Delta _{2}|=O\left( \frac{\mathbb{E}_{\mu
	}(X^{(n)})}{n}\right) . $

\medskip
Let us estimate $\mathbb{E}_\mu \left(\left(X^{(n)}\right)^\alpha\right)$. For $\alpha>0$, we have $\mathbb{E}_\mu \left(\left(X^{(n)}\right)^\alpha\right)
=-\int_{0}^\infty t^\alpha d\,\mathbb{P}_\mu(\{X^{(n)}\geq t\}) $ and integration
by part gives 
\[
\alpha \int_{0}^\infty t^{\alpha-1}\mathbb{P}_\mu(\{X^{(n)}\geq t\}) dt
=\int_{0}^T+\int_{T}^\infty. 
\]

The first integral does not exceed $T^{\alpha }$. To estimate the second one
we combine the union bound with the Markov inequality: $\mathbb{P}_{\mu
}(\{X^{(n)}\geq t\})\leq n\mathbb{P}_{\nu }(\{X_{1}\geq t\})\leq n\frac{\mathbb{E}%
	_{\nu }((X_{1})^{\beta })}{t^{\beta }}.$ Therefore, 
\[
\alpha \int_{T}^{\infty }t^{\alpha -1}\mathbb{P}_{\mu }(\{X^{(n)}\geq t\})dt\leq \alpha n\mathbb{E}_{\nu }\left((X_{1})^{\beta }\right)\int_{T}^{\infty
}t^{\alpha -\beta -1}dt=\frac{\alpha }{\beta -\alpha }n\mathbb{E}_{\nu
}\left((X_{1})^{\beta }\right)\frac{1}{T^{\beta -\alpha }} 
\]%
for $\beta >\alpha $. Optimizing over $T$, we get $\mathbb{E}_{\mu }\left(\left(
X^{(n)}\right)^{\alpha }\right)\leq \left( \frac{\beta }{\beta -\alpha }\right)
\left( n\mathbb{E}_{\nu }\left((X_{1})^{\beta }\right)\right) ^{\frac{\alpha }{\beta }%
}=O\left( n^{\frac{\alpha }{\beta }}\right) .$

\medskip
It remains to put all the pieces together: 
\[
\Delta_0+\Delta_1+\Delta_2=O\left(\frac{1}{\sqrt{n}}\right)+O\left(\frac{ 
	\sqrt{E_\mu\left(|X^{(n)}|^{2}\right)} }{\sqrt{n}}\right)+O\left(\frac{\mathbb{E}_\mu
	(X^{(n)})}{n}\right)=O\left(\frac{1}{n^{\frac{1}{2}-\frac{1}{\beta}}}\right) 
\]
for any $\beta>2$ such that $\mathbb{E}_\nu (X_1)^\beta<\infty$. 
This
implies formula~(\ref{eq_asymp_good_iid}) for $\theta=1$.
\end{proof}

\medskip

\subsubsection{Proof of Lemma~\ref{lm2}} 
For unbounded distributions, $\mathbb{E}\left(X^{(n)}\right)$ tends to $+\infty$ and
thus by Proposition~\ref{prop6} the  ratio for $\varphi^1$ converges
to $\left(1-\mathbb{E}_\nu \left(2-\frac{1}{X_1}\right)_+\right)^{-1} $. Thus, the lower bound
immediately follows from the inequality $|x_1-1|\geq x_1- \left(2-\frac{1}{%
	x_1}\right)_+$.

For the upper bound, we have 
\[
\left(\pi(\varphi^1, \mathcal{P}_\infty(\nu))\right)^{-1}\geq \mathbb{E}_\nu \left(X_1- \left(2-\frac{1}{X_1}%
\right)_+\right)\geq \mathbb{E}_\nu\left(\left(X_1- \left(2-\frac{1}{X_1}%
\right)_+\right) 1_{\{X_1\geq 1\}}\right)= 
\]
\[
=\mathbb{E}_\nu\left(\left(X_1+\frac{1}{X_1}-2\right)1_{\{X_1\geq 1\}}\right)= \mathbb{E}%
_\nu\left(\left(\frac{(X_1-1)^2}{X_1}\right)1_{\{X_1\geq 1\}}\right)= \mathbb{E}_\nu
\left(g(X_1)1_{\{X_1\geq 1\}}\right), 
\]
where $1_A$ stands for the indicator of the event $A$. In order to relate
the expected value of $g(X_1)$ to $D$, we apply the Cauchy inequality 
\[
\frac{D}{2}=\mathbb{E}_\nu \left(|X_1-1|1_{\{X_1\geq 1\}}\right)=\mathbb{E}_\nu \left(\sqrt{%
	g(X_1)}1_{\{X_1\geq 1\}} \cdot \frac{|X_1-1|1_{\{X_1\geq 1\}}}{\sqrt{g(X_1)}}
\right)\leq 
\]
\[
\leq \sqrt{\mathbb{E}_\nu \left(g(X_1)1_{\{X_1\geq 1\}}\right)} \sqrt{\mathbb{E}_\nu\left(\frac{%
		(X_1-1)^2}{g(X_1)}1_{\{X_1\geq 1\}}\right)}. 
\]
The second factor on the right-hand side can be estimated as follows:
\[
\mathbb{E}_\nu\left(\frac{(X_1-1)^2}{g(X_1)}1_{\{X_1\geq 1\}}\right)=\mathbb{E}_\nu \left(X_1
1_{\{X_1\geq 1\}}\right)=\mathbb{E}_\nu \left(|X_1-1| 1_{\{X_1\geq 1\}}\right)+ \mathbb{E}_\nu
\left(1_{\{X_1\geq 1\}}\right)\leq \frac{D}{2}+1, 
\]
which completes the proof. \qed

\subsection{Bads}
\subsubsection{Not much weight around zero}
\begin{proposition}\label{prop7}
\textit{Consider a distribution }$%
\nu $\textit{ such that }$E_{\nu }(X_{1})=1$\textit{ and }$E_{\nu }\left(\frac{1}{%
	X_{1}}\right)<\infty $\textit{. Then the  ratio for the BH rule can
	be represented as} 
\begin{equation}
\pi(\varphi ^{1},\mathcal{P}_{n}(\nu ))=\frac{\mathbb{P}_{\nu }(\{X_{1}<T\})+\gamma \mathbb{P}_{\nu }(\{X_{1}=T\})}{\mathbb{E}_{\mu
	}\left(\min_{i\in N}X_{i}\right)}%
(1+o(1)),\ \ n\rightarrow \infty ,  \label{eq_asymp_bad_iid}
\end{equation}%
\textit{where }$T>0$\textit{ and }$\gamma,$ $0\leq \gamma <1,$\textit{ a%
	\textit{re} defined by the following condition:\footnote{%
		Formulas simplify for continuous distribution because $\mathbb{P}(X_{i}=T)=0$
		for all $T$ and thus we can always pick $\gamma =0$.}} 
\[
\mathbb{E}_{\nu }\left(\frac{1_{\{X_{1}<T\}}}{X_{1}}\right)+\gamma \mathbb{P}(\{X_{1}=T\})%
\frac{1}{T}=1. 
\]%
{For the Proportional rule,} 
\begin{equation}
\pi(\varphi ^{pro},\mathcal{P}_{n}(\nu ))=\frac{1}{\mathbb{E}_{\mu }(\min_{i\in N}X_{i})
	\cdot {\mathbb{E}_{\nu }\left(\frac{1}{X_{1}}\right)}}(1+o(1)).  \label{eq_asymp_bad_prop}
\end{equation}	
\end{proposition}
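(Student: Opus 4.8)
The plan is to treat the two rules separately; the Proportional one follows at once from the law of large numbers, whereas the Bottom-Heavy one hinges on locating a data-dependent cutoff. Throughout, recall that $\mathcal{P}_n(\nu)$ is normalized, so $\mathbb{E}_\nu(X_1)=1$; write $\overline{X}_n=X_N/n$. For the Proportional rule, an atom of $\nu$ at $0$ would contradict $\mathbb{E}_\nu(1/X_1)<\infty$, so almost surely $X\gg 0$ and $S(\varphi^{pro},\mathcal{P}_n(\nu))=\mathbb{E}_\mu\big(\sum_{i\in N}\varphi_i^{pro}(X)X_i\big)=n\,\mathbb{E}_\mu\big(1/\sum_{j\in N}(1/X_j)\big)$. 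Now $n/\sum_{j\in N}(1/X_j)$ is the harmonic mean of $X_1,\dots,X_n$, hence at most $\overline{X}_n$; since $X_1\in L^1$ the family $\{\overline{X}_n\}_n$ is uniformly integrable, so $\{n/\sum_{j}(1/X_j)\}_n$ is too. By the strong law $n/\sum_{j}(1/X_j)\to 1/\mathbb{E}_\nu(1/X_1)$ a.s., so by uniform integrability $S(\varphi^{pro},\mathcal{P}_n(\nu))\to 1/\mathbb{E}_\nu(1/X_1)$; dividing by $\mathbb{E}_\mu(\min_iX_i)$ gives~(\ref{eq_asymp_bad_prop}).

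For the Bottom-Heavy rule set $G(t)=\mathbb{E}_\nu\big(1_{\{X_1<t\}}/X_1\big)$, a non-decreasing left-continuous function with $G(0{+})=0$ and $G(+\infty)=\mathbb{E}_\nu(1/X_1)\ge 1$ (Jensen). The numbers $T>0$ and $\gamma\in[0,1)$ of the statement are the level-$1$ crossing of $G$, namely $G(T)+\gamma\,\mathbb{P}_\nu(X_1=T)/T=1$; put $p=\mathbb{P}_\nu(X_1<T)+\gamma\,\mathbb{P}_\nu(X_1=T)>0$. The goal is to prove $\sum_{i\in N}\varphi_i^1(X)X_i\to p$ almost surely and in $L^1$, after which dividing by $\mathbb{E}_\mu(\min_iX_i)$ yields~(\ref{eq_asymp_bad_iid}). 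Using $\varphi_i^1(x)x_i=\frac{x_N-x_i}{n(n-1)}$ on the prefix $\{i:x_i\le x^{(\widetilde{t})}\}$, decompose
\[
\sum_{i\in N}\varphi_i^1(x)x_i=A(x)+x^{(\widetilde{t}+1)}R(x),\qquad A(x)=\sum_{i:\,x_i\le x^{(\widetilde{t})}}\frac{x_N-x_i}{n(n-1)},\qquad R(x)=1-\sum_{i:\,x_i\le x^{(\widetilde{t})}}\varphi_i^1(x)\ge0.
\]
The crux is that $\widetilde{t}$ concentrates. Since $\overline{X}_n\to1$ a.s. and $\widetilde{t}/(n(n-1))\le 1/(n-1)\to0$, the inequality defining $\widetilde{t}$ reads, up to a uniformly vanishing error, $\frac1n\sum_{i:\,x_i\le x^{(t)}}(1/x_i)\le 1$; the empirical sums $\frac1n\sum_{i:\,X_i<t}(1/X_i)$ converge to $G(t)$ a.s. for each $t$ (the summand is dominated by the integrable $1/X_1$), and uniformly in $t$ after truncating the singularity of $1/x$ at $0$ and applying Glivenko--Cantelli to the half-lines. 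Hence $x^{(\widetilde{t})}\to T$ and $\widetilde{t}/n\to p$ a.s., the $\gamma$ tracking the partially included group $\sigma(X;\widetilde{t})$ when $T$ is an atom.

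Substituting back: $A(X)=\overline{X}_n\cdot\frac{\widetilde{t}}{n-1}-\frac{1}{n(n-1)}\sum_{i:\,X_i\le X^{(\widetilde{t})}}X_i\to p$ (the subtracted term is at most $\overline{X}_n/(n-1)\to0$); the prefix mass $\overline{X}_n\cdot\frac{1}{n-1}\sum_{i:\,X_i\le X^{(\widetilde{t})}}(1/X_i)-\frac{\widetilde{t}}{n(n-1)}$ tends to $G(T{-})+\gamma\,\mathbb{P}_\nu(X_1=T)/T=1$, so $R(X)\to 0$, and with $x^{(\widetilde{t}+1)}\to T<\infty$ we get $x^{(\widetilde{t}+1)}R(X)\to0$; thus $\sum_i\varphi_i^1(X)X_i\to p$ a.s. Finally $A(x)\le x_N/(n-1)$ and, by maximality of $\widetilde{t}$, $x^{(\widetilde{t}+1)}R(x)\le\frac{|\sigma(x;\widetilde{t}+1)|}{n(n-1)}\big(x_N-x^{(\widetilde{t}+1)}\big)\le x_N/(n-1)$, so $\sum_i\varphi_i^1(X)X_i\le\frac{2n}{n-1}\overline{X}_n$ is dominated by a uniformly integrable family and the almost-sure limit passes to expectations: $S(\varphi^1,\mathcal{P}_n(\nu))\to p$, which gives~(\ref{eq_asymp_bad_iid}).

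The main obstacle is the data-dependent cutoff $\widetilde{t}$: establishing $x^{(\widetilde{t})}\to T$ demands a uniform law of large numbers for $t\mapsto\frac1n\sum_{i:\,X_i<t}(1/X_i)$ despite $1/x$ being unbounded near $0$, while the perturbations (the gap $\overline{X}_n-1$ and the $-1$ inside $\frac{x_N-x_i}{x_i}$) must be controlled uniformly over all order statistics, and the atom case requires the $\gamma$-bookkeeping above. The remaining steps are a routine blend of the strong law and uniform integrability.
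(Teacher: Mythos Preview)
Your argument is correct and complete in the continuous case, and it takes a genuinely different route from the paper. The paper exploits symmetry to write $S(\varphi^{1},\mathcal{P}_n(\nu))=n\,\mathbb{E}_\mu\big(X_1\varphi_1^{1}(X)\big)$, then conditions on $X_1=z$ and decides whether agent~$1$ lands in the prefix by checking the event $Q=\big\{\sum_{j:\,X_j\le z}\frac{X_{N\setminus j}}{n(n-1)X_j}\le 1\big\}$; the law of large numbers pushes this sum to $H(z)=\mathbb{E}_\nu\big(1_{\{X_1\le z\}}/X_1\big)$, so $1_Q\to 1_{\{z<T\}}$ and integrating over $z$ yields $\mathbb{P}_\nu(X_1<T)$. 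This sidesteps any need to locate the random cutoff $\widetilde t$. Your global decomposition $A(x)+x^{(\widetilde t+1)}R(x)$ instead tracks $\widetilde t$ directly and therefore requires the uniform law of large numbers for $t\mapsto \frac1n\sum_{i:\,X_i\le t}1/X_i$; that is more work, but your monotonicity/truncation/Glivenko--Cantelli outline is adequate, and your uniform-integrability bound $\sum_i\varphi_i^{1}(X)X_i\le \frac{2n}{n-1}\overline X_n$ is a clean way to pass almost-sure limits to expectations (the paper does not make this step explicit).

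One point to tighten in the atom case: when $\mathbb{P}_\nu(X_1=T)>0$ and $\gamma>0$, the prefix is (eventually) exactly $\{i:X_i<T\}$, so $\widetilde t/n\to\mathbb{P}_\nu(X_1<T)$, \emph{not} $p$, and the prefix mass converges to $G(T)<1$, hence $R(X)\to \gamma\,\mathbb{P}_\nu(X_1=T)/T>0$, not $0$. Your final answer is still right because the remainder term picks up the difference: $A(X)\to\mathbb{P}_\nu(X_1<T)$ while $x^{(\widetilde t+1)}R(X)\to T\cdot\gamma\,\mathbb{P}_\nu(X_1=T)/T=\gamma\,\mathbb{P}_\nu(X_1=T)$, and the two add to $p$. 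So the slip is in the intermediate bookkeeping, not the conclusion. (Incidentally, the bound $x^{(\widetilde t+1)}R(x)\le\frac{x_N-x^{(\widetilde t+1)}}{n(n-1)}\cdot|\sigma(x;\widetilde t+1)|$ you derive for UI already gives $x^{(\widetilde t+1)}R(X)\to 0$ a.s.\ in the non-atom case without separately showing $R(X)\to0$.) The paper is equally terse here, explicitly omitting the computation for the second summand.
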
	
\begin{proof} As in the proof of Proposition~\ref{prop6}, the symmetry of the problem implies $%
S(\varphi^1,\mathcal{P}_n(\nu))=n\mathbb{E}_{\mu} \left(X_1\varphi^1_1(X)\right)$ and
hence it is enough to estimate the contribution of one agent. We will
calculate this expectation in two steps: assuming first that $X_1=z$ is
fixed and averaging over $X_{j}$, $j\geq 2$, and then averaging over $z$.

Consider $\mathbb{E}_{\mu}\left(n X_1\varphi^1_1(X)\mid X_1=z\right)$. By
the definition of the BH rule we get 
\begin{equation}  \label{eq_big}
n\cdot X_1\varphi_1(X)\big\vert_{X_1=z}=\frac{X_{N\setminus 1}}{(n-1)}\cdot
1_{Q}+ z\cdot \frac{1-\sum_{j: X_j< z}\frac{1}{n}\frac{X_{N\setminus j}}{
		(n-1)X_j} }{|\{j\in N: \ X_j=z \}|/n}\cdot 1_{Q^{\prime }},
\end{equation}
where $Q$ is the event that $\sum_{j: X_j\leq z}\frac{X_{N\setminus j}}{
	n(n-1)X_j}\leq 1$ (in other words, $i$ belongs to the group of agents whose
share is given by the first line of equation~(\ref{18})) and the event $Q^{\prime }$ tells us
that the share of agent $1$ comes from the second line of~(\ref{18}), i.e., $\sum_{j:
	X_j< z}\frac{X_{N\setminus j}}{n(n-1)X_j}< 1< \sum_{j: X_j\leq z}\frac{
	X_{N\setminus j}}{n(n-1)X_j}$.

Let us apply the strong law of large numbers to~(\ref{eq_big}). Then, $\frac{
	X_{N\setminus 1}}{n-1}$ converges to $1$ almost surely, and the sum $\sum_{j:
	X_j\leq z}\frac{X_{N\setminus j}}{n(n-1)X_j}$ from the definition of $Q$
converges to $\mathbb{E}_\nu\left(\frac{1}{X_j}\cdot 1_{\{X_j\leq
	z\}}\right) $. Therefore, the first summand of~(\ref{eq_big}) tends to $%
1_{\{z<T\}}$, where $T $ is defined as $\inf \left\{T^{\prime }\,\big\vert\, \mathbb{E}%
_\nu\left(\frac{1_{\{X_j \leq T^{\prime }\}}}{X_j}\right)\geq 1 \right\}$. Thus, the asymptotic
contribution of the first term to $S(\varphi^1,\mathcal{P})$ is $\mathbb{P}%
_\nu(\{X_1<T\})$.

A similar application of the law of large numbers allows us to compute the
contribution of the second summand. We omit these computations.
\end{proof}

\subsubsection{Singularity at zero}
\begin{lemma}\label{lm5}
{ If a distribution $\nu$ has an atom at
	zero, then the BH and Proportional rules achieve { the optimal social cost}  in the limit:} 
\[
\pi(\varphi^{1},\mathcal{P}_{\infty}(\nu ))=\pi(\varphi^{pro},\mathcal{P}_{\infty}(\nu ))=1. 
\]

{If there is no atom and $\nu$ has a continuous density $f$ on $(0,a]$,
	but this density is unbounded, namely, $f(x)=\frac{\lambda}{x^\alpha}(1+o(1))$ as $%
	x\to +0$ for some $\lambda>0$ and $\alpha\in (0,1)$, then} 
\[
\pi(\varphi^{1},\mathcal{P}_{\infty}(\nu ))=1; \ \  \mbox{ however, } \ \ \pi(\varphi^{pro},\mathcal{P}_{n}(\nu )) =\Omega\left({n}\right). 
\]
\end{lemma}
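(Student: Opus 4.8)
Throughout I use that for a bad $\pi(\varphi,\mathcal P_n(\nu))=S(\varphi,\mathcal P_n(\nu))/\mathbb{E}_\mu(\min_i X_i)\geq 1$ (the Utilitarian rule realizes the denominator), so only matching upper bounds — and, for the Proportional rule in the density case, an $\Omega(n)$ lower bound — are at issue. The unifying idea is to control the smallest order statistic $X^{(1)}=\min_i X_i$ and to show that $\varphi^1$ behaves, with overwhelming probability, like $\varphi^{ut}$. The key remark about the Bottom-Heavy rule (Definition~\ref{def9}) is that $\varphi^1(x)=\varphi^{ut}(x)$ exactly when its threshold $\widetilde t$ equals $0$, i.e.\ when $\frac1{n(n-1)}\,\frac{X_{N\setminus\{(1)\}}}{X^{(1)}}>1$, equivalently $X^{(1)}(n(n-1)+1)<X_N$; since $X_N$ concentrates around $n$, this is essentially the event that $X^{(1)}$ is of order smaller than $1/n$.

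For $\varphi^1$: if $\nu$ has an atom of mass $p>0$ at $0$, then $X^{(1)}=0$ whenever one of the $n$ draws vanishes, so $\widetilde t=0$ on an event of probability $\geq 1-(1-p)^n\to1$, on which $\sum_i X_i\varphi^1_i(X)=\min_i X_i$. If instead $\nu$ has no atom and $f(x)=\lambda x^{-\alpha}(1+o(1))$ near $0$ with $\alpha\in(0,1)$, then $F(x)\sim\frac\lambda{1-\alpha}x^{1-\alpha}$, so $\mathbb{P}_\mu(X^{(1)}>x)=(1-F(x))^n\leq e^{-cnx^{1-\alpha}}$ for $x$ small; because $\tfrac1{1-\alpha}>1$ this forces $X^{(1)}$ to be of order $n^{-1/(1-\alpha)}=o(1/n)$ with probability $1-e^{-cn^\varepsilon}$ for some $\varepsilon>0$ (the complement also requires $X_N$ to be atypically small, of exponentially small probability since $X_1>0$ a.s.). In both cases I then write
$S(\varphi^1,\mathcal P_n(\nu))=\mathbb{E}_\mu(\min_i X_i)+\mathbb{E}_\mu\bigl(\mathbf{1}_{\{\widetilde t\geq1\}}\bigl(\textstyle\sum_i X_i\varphi^1_i(X)-\min_i X_i\bigr)\bigr)$,
where the bracket is nonnegative (as $\sum_i X_i\varphi^1_i(X)\geq\min_i X_i$), bound it by $\sum_i X_i\varphi^1_i(X)\leq X^{(\widetilde t+1)}\leq\max_i X_i$, and estimate the remainder by Cauchy--Schwarz using $\mathbb{E}_\mu((\max_i X_i)^2)=O(n^{2/\beta})$ (established inside the proof of Proposition~\ref{prop6} under a finite $\beta$-th moment, $\beta>2$). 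Since $\mathbb{E}_\mu(\min_i X_i)$ is only polynomially small, while the remainder is exponentially small times a polynomial, the ratio tends to $1$.

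For $\varphi^{pro}$: in the atomic case the same reduction works because $\varphi^{pro}$ also gives the whole bad to the zero-disutility agents as soon as a draw vanishes. In the density case $\varphi^{pro}$ does not degenerate, and I prove $\pi(\varphi^{pro},\mathcal P_n(\nu))=\Omega(n)$ from the identity $S(\varphi^{pro},\mathcal P_n(\nu))=\mathbb{E}_\mu\bigl(n/\sum_j\tfrac1{X_j}\bigr)$ (proof of Proposition~\ref{prop1}). On one side, $\mathbb{E}_\mu(\min_i X_i)=\int_0^\infty(1-F(t))^n\,dt=O(n^{-1/(1-\alpha)})$, by the tail bound and the substitution $s=\frac{n\lambda}{1-\alpha}t^{1-\alpha}$. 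On the other, $1/X_1$ has a regularly varying tail of index $1-\alpha\in(0,1)$ (as $\mathbb{P}(1/X_1>t)=F(1/t)\sim\frac\lambda{1-\alpha}t^{-(1-\alpha)}$), so $\sum_j\tfrac1{X_j}$ is of order $n^{1/(1-\alpha)}$; concretely there is $C$ with $\liminf_n\mathbb{P}_\mu\bigl(\sum_j\tfrac1{X_j}\leq Cn^{1/(1-\alpha)}\bigr)>0$, whence $\mathbb{E}_\mu\bigl(n/\sum_j\tfrac1{X_j}\bigr)\geq\frac1C n^{-\alpha/(1-\alpha)}\liminf_n\mathbb{P}_\mu(\cdots)$. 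Dividing the two estimates gives $\pi(\varphi^{pro},\mathcal P_n(\nu))=\Omega\bigl(n^{-\alpha/(1-\alpha)}/n^{-1/(1-\alpha)}\bigr)=\Omega(n)$.

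I expect the main obstacle to be converting the distributional heavy-tail estimates into statements about expectations. For $\varphi^1$ this is the control of the exceptional event $\{\widetilde t\geq1\}$, on which $X_N$ may be atypically small and $\max_i X_i$ atypically large, which requires moment/large-deviation assumptions on $\nu$ of the kind already used for Proposition~\ref{prop6}. In the atomic case there is an additional delicate point: the residual contribution on the (exponentially unlikely) event that no draw vanishes, where the bad need not all go to one agent, so the statement really relies on $\nu$ genuinely concentrating mass at $0$ rather than merely having an isolated atom there. For $\varphi^{pro}$ the obstacle is the quantitative heavy-tail law of large numbers for $\sum_j 1/X_j$, but only its easy half is needed — a uniform-in-$n$ lower bound on $\mathbb{P}_\mu\bigl(\sum_j 1/X_j\leq Cn^{1/(1-\alpha)}\bigr)$ — which can be extracted from a truncation/first-moment estimate on the variables $\tfrac1{X_j}\wedge n^{1/(1-\alpha)}$, without invoking the full stable limit theorem.
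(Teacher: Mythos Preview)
Your approach mirrors the paper's two-line sketch: both argue that with high probability $\varphi^1$ (and, in the atomic case, $\varphi^{pro}$) coincide with the Utilitarian rule because either some $X_i=0$ or $X^{(1)}=o(1/n)$ forces $\widetilde t=0$. You supply considerably more detail than the paper, especially for the $\Omega(n)$ bound on $\pi(\varphi^{pro},\mathcal P_n(\nu))$ in the density case, which the paper omits entirely; your heavy-tail estimate on $\sum_j 1/X_j$ and the comparison with $\mathbb E_\mu(\min_iX_i)=\Theta(n^{-1/(1-\alpha)})$ are correct and constitute a genuine addition. (Incidentally, in the density case the support is $[0,a]$, so $\max_iX_i\leq a$ and the moment input you borrow from Proposition~\ref{prop6} is trivially available.)

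Your caveat about the atomic case is not a technicality but a real defect in the \emph{statement}. Since both $S(\varphi,\mathcal P_n(\nu))$ and $\mathbb E_\mu(\min_iX_i)$ receive zero contribution from the event $\{X^{(1)}=0\}$, and since $\varphi^1,\varphi^{pro}$ are homogeneous of degree zero, one has the exact identity $\pi(\varphi,\mathcal P_n(\nu))=\pi(\varphi,\mathcal P_n(\nu''))$, where $\nu''$ is $\nu|_{\{X>0\}}$ rescaled to unit mean. Taking $\nu=\tfrac12\delta_0+\tfrac12\,\mathrm{uni}[1,3]$ gives $\nu''=\mathrm{uni}[\tfrac12,\tfrac32]$, and by the paper's own computation earlier in Section~\ref{sec6} the limiting ratio is $e-1\neq 1$. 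So the atomic claim as written requires an additional hypothesis on the non-atomic part of $\nu$; neither your argument nor the paper's sketch supplies one, and your instinct that the result ``relies on $\nu$ genuinely concentrating mass at $0$'' is exactly right.
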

\begin{proof}[Sketch of the proof.] In the case of an atom, there is an agent $i$ having $X_i=0$ with high
probability for large $n$. In such a situation, both rules $\varphi^1$ and $%
\varphi^{pro}$ coincide with the Utilitarian rule and therefore their
ratios are equal to $1$.

The second statement is proved similarly to Lemma~\ref{lm4}. For such $\nu$, the
expected value of the order statistic $X^{(k)}$ for small $k$ equals $\left(%
\frac{ 1-\alpha}{\lambda}\frac{k}{n}\right)^\frac{1}{1-\alpha}\cdot(1+o(1))$. Therefore, only the agent $i$ with $X_i=\min_{j}X_j$ receives a bad under
the BH rule with high probability, which gives $\pi(\varphi^{1},\mathcal{P}_{\infty}(\nu ))=1 $. The  argument for the Proportional rule is similar and therefore omitted.
\end{proof}

\end{document}